\let\mathbb=\varmathbb
\colorlet{MyBlue}{DodgerBlue!75!Black}
\colorlet{MyGreen}{DarkGreen!85!Black}
\newcommand{\R}{\mathbb{R}}
\DeclareMathOperator*{\argmin}{arg\,min}
\DeclareMathOperator*{\union}{\bigcup}
\DeclareMathOperator{\bigoh}{\mathcal{O}}
\DeclareMathOperator{\ex}{\mathbb{E}}
\DeclareMathOperator{\intr}{int}
\DeclareMathOperator{\prob}{\mathbb{P}}
\DeclareMathOperator{\supp}{supp}
\newcommand{\dd}{\:d}
\newcommand{\eps}{\varepsilon}
\newcommand{\from}{\colon}
\newcommand{\gen}{\mathcal{L}}
\newcommand{\pd}{\partial}
\newcommand{\wilde}{\widetilde}
\DeclarePairedDelimiter{\braces}{\{}{\}}
\DeclarePairedDelimiter{\bracks}{[}{]}
\DeclarePairedDelimiter{\parens}{(}{)}
\DeclarePairedDelimiter{\abs}{\lvert}{\rvert}
\DeclarePairedDelimiter{\norm}{\lVert}{\rVert}
\DeclarePairedDelimiterX{\braket}[2]{\langle}{\rangle}{#1{}\delimsize\vert{}#2}
\DeclarePairedDelimiterX{\inner}[2]{\langle}{\rangle}{#1,#2}
\DeclarePairedDelimiterX{\setdef}[2]{\{}{\}}{#1:#2}
\DeclarePairedDelimiterXPP{\probof}[1]{\prob}{(}{)}{}{%

#1}
\DeclarePairedDelimiterXPP{\exof}[1]{\ex}{[}{]}{}{%

#1}
\DeclarePairedDelimiterXPP{\expof}[1]{\exp}{(}{)}{}{#1}
\newcommand{\txs}{\textstyle}
\newcommand{\insum}{\sum\nolimits}
\newcommand{\as}{\textup(a.s.\textup)\xspace}
\newtheorem{theorem}{Theorem}
\newtheorem{corollary}{Corollary}
\newtheorem*{corollary*}{Corollary}
\newtheorem{lemma}{Lemma}
\newtheorem{proposition}{Proposition}
\newtheorem{assumption}{Assumption}
\theoremstyle{definition}
\newtheorem{definition}{Definition}
\newtheorem*{definition*}{Definition}
\theoremstyle{remark}
\newtheorem{remark}{Remark}
\newtheorem*{remark*}{Remark}
\newcommand{\feas}{\mathcal{X}}
\newcommand{\dnorm}[1]{\norm{#1}_{\ast}}
\newcommand{\dual}{\mathcal{Y}}
\newcommand{\tcone}{\mathrm{TC}}
\newcommand{\fench}{F}
\newcommand{\gibbs}{G}
\newcommand{\player}{k}
\newcommand{\play}{\player} 		
\newcommand{\nActions}{A}
\newcommand{\actions}{\mathcal{\nActions}}
\newcommand{\act}{\actions}
\newcommand{\strat}{\mathcal{X}}
\newcommand{\cost}{c}
\newcommand{\obj}{C}
\newcommand{\payv}{v}
\newcommand{\pot}{U}
\newcommand{\score}{y}
\newcommand{\eq}{x^{\ast}}
\newcommand{\weq}{\load^{\ast}}
\newcommand{\eqset}{\strat^{\ast}}
\newcommand{\modcost}{\pi}
\newcommand{\km}{\mathrm{km}}
\newcommand{\ghz}{\mathrm{GHz}}
\newcommand{\gbs}{\mathrm{Gb}/\mathrm{s}}
\newcommand{\tbs}{\mathrm{Tb}/\mathrm{s}}
\newcommand{\W}{\mathrm{W}}
\newcommand{\kW}{\mathrm{kW}}
\newcommand{\nodes}{\mathcal{V}}
\newcommand{\edges}{\mathcal{E}}
\newcommand{\source}{s}
\newcommand{\nSources}{S}
\newcommand{\nSinks}{D}
\newcommand{\sources}{\mathcal{\nSources}}
\newcommand{\sink}{d}
\newcommand{\sinks}{\mathcal{D}}
\newcommand{\link}{\mathrm{link}}
\newcommand{\DC}{\mathrm{DC}}
\newcommand{\maxload}{L}
\newcommand{\rate}{\varrho}
\newcommand{\load}{w}
\newcommand{\edge}{e}
\newcommand{\edgealt}{\edge'}
\newcommand{\path}{\alpha}
\newcommand{\pathalt}{\beta}
\newcommand{\nPaths}{A}
\newcommand{\paths}{\mathcal{\nPaths}}
\newcommand{\pmat}{\Pi}
\newcommand{\noisedev}{\sigma}
\newcommand{\noisevar}{\noisedev^{2}}
\newcommand{\noisebound}{\noisedev_{\!\ast}}
\newcommand{\covmat}{\Sigma}
\newcommand{\dkl}{D_{\mathrm{KL}}}
\newcommand{\temp}{\eta}
\newcommand{\acdef}[1]{\emph{\acl{#1}} \textup(\acs{#1}\textup)\acused{#1}}
\newcommand{\acdefp}[1]{\emph{\aclp{#1}} \textup(\acsp{#1}\textup)\acused{#1}}
\begin{document}


\title{Boltzmann Meets Nash:
Energy-Efficient Routing in Optical Networks under Uncertainty}

\author{
Panayotis~Mertikopoulos%
,~\IEEEmembership{Member,~IEEE},
Aris~L.~Moustakas%
,~\IEEEmembership{Senior~Member,~IEEE},
and
Anna~Tzanakaki%
,~\IEEEmembership{Senior~Member,~IEEE}
\thanks{%
P.~Mertikopoulos is with the French National Center for Scientific Research (CNRS) and the Laboratoire d'Informatique de Grenoble (LIG), F-38000, Grenoble, France.
A.~L.~Moustakas and A.~Tzanakaki are with the Physics Department of the National \& Kapodistrian University of Athens, Athens, Greece.
Part of this work appeared in the conference paper \cite{MM11} and the report \cite{MM12}.
}
} 

\maketitle

\newacro{ICT}{information and communications technology}
\newacro{IoT}{Internet of Things}
\newacro{VoD}{video on demand}
\newacro{AWGN}{additive white Gaussian noise}
\newacro{WDM}{wavelength-division multiplexing}
\newacro{DC}{data center}
\newacro{NE}{Nash equilibrium}
\newacroplural{NE}[NE]{Nash equilibria}
\newacro{WE}{Wardrop equilibrium}
\newacroplural{WE}[WE]{Wardrop equilibria}
\newacro{KL}{Kullback\textendash Leibler}
\newacro{KKT}{Karush\textendash Kuhn\textendash Tucker}
\newacro{PoA}{price of anarchy}
\newacro{MFDL}{minimum first derivative length}
\newacro{SDE}{stochastic differential equation}
\newacro{ISP}{Internet service provider}
\acused{ISP}
\newacro{od}[O/D]{origin-destination}
\newacro{OSPF}{Open Shortest Path First}
\newacro{RTT}{round-trip time}
\newacro{RHS}{right hand side}
\acused{RHS}
\newacro{LHS}{left hand side}
\acused{LHS}
\newacro{CDF}{cumulative distribution function}
\newacro{ESS}{evolutionarily stable strategy}
\newacro{NSS}{neutrally stable strategy}
\newacro{wlog}[w.l.o.g.]{without loss of generality}
\acused{wlog}

\begin{abstract}
%
%
\thispagestyle{empty}
\vspace{-2ex}
Motivated by the massive deployment of power-hungry \aclp{DC} for IT service provisioning, we examine the problem of routing in optical networks with the aim of minimizing traffic-driven power consumption.
To tackle this issue, routing must take into account energy efficiency as well as capacity considerations;
moreover, in rapidly-varying network environments, this must be accomplished in a real-time, distributed manner that remains robust in the presence of random disturbances and noise.
In view of this, we derive a pricing scheme whose \aclp{NE} coincide with the network's socially optimum states, and we propose a distributed learning method based on the Boltzmann distribution of statistical mechanics.
Using tools from stochastic calculus, we show that the resulting Boltzmann routing scheme exhibits remarkable convergence properties under uncertainty:
specifically, the long-term average of the network's power consumption converges within $\eps$ of its minimum value in time which is at most $\tilde\bigoh(1/\eps^{2})$, irrespective of the fluctuations' magnitude;
additionally, if the network admits a strict, non-mixing optimum state, the algorithm converges to it \textendash\ again, no matter the noise level.
Our analysis is supplemented by extensive numerical simulations which show that Boltzmann routing can lead to a significant decrease in power consumption over basic, shortest-path routing schemes in realistic network conditions.
\end{abstract}

\begin{IEEEkeywords}
\vspace{-2ex}
Boltzmann routing;
energy efficiency;
fluctuations;
optical networks;
uncertainty;
Nash equilibrium.
\end{IEEEkeywords}


\acresetall

\section{Introduction}
\label{sec:intro}

\IEEEPARstart{I}{n} recent years, the Internet has become the cornerstone of global telecommunications, to the point that it now plays an integral part in sustaining the growth of the world economy.
Specifically, as more and more users are using the Internet for backing up data, \ac{VoD}, device synchronization and other IT services, data-hungry applications are experiencing a wildfire growth that affects almost all aspects of modern-day, always-connected societies.
As a result, optical networks must grow fast enough so as to avoid experiencing a ``capacity crunch'' that will cripple their ability to handle the requested traffic in a reliable and efficient manner \cite{Tkach2010_ScalingCrunch}.

At the same time however, energy consumption in data networks must also be kept in check:
in total, power consumption from \ac{ICT} applications has been estimated to $4\%$ of the global energy consumption \cite{PVD+08}, with a steeply rising trend as we progress toward the \ac{IoT} era.
In particular, just in the United States, Internet equipment consumes a staggering $8\%$ of the total energy with a predicted growth of $50\%$ within a decade \cite{Plepsys2002_TheGreySideOfICT}.
Therefore, in addition to the always-present requirements of minimizing latencies and maximizing throughput, there is a key developing need to operate optical networks in an energy-aware way that minimizes power expenditure to the bare minimum.
This last goal becomes especially prominent when taking into account the power needs of large switching and \acp{DC}
which are rapidly becoming a dominant term in the power consumption equation due the huge amounts of traffic that they service \cite{Vukovic2005_Network,Fettweis2008_ICT}.

In this paper, motivated by the massive deployment of power-hungry \aclp{DC} described above, we examine the problem of routing traffic with the aim of minimizing the network's traffic-driven power consumption.
To achieve this, centralized optimization is not an option, because it is neither scalable nor flexible.
Instead, one needs distributed load management as close as possible to the link level while ensuring that routing choices strike an optimal balance between ``greener'', more efficient \textendash\ but possibly congested \textendash\ network facilities and ``hungrier'', legacy resources that may be under-utilized.
To tackle this issue, we derive a game-theoretic pricing scheme \cite{San07} which reconciles the selfish solution concept of (nonatomic) \acl{NE} with social optimality (i.e. states that minimize global power consumption in the network).
Building on this equivalence, we further propose a distributed learning scheme based on the Botlzmann distribution of statistical mechanics,
which bears strong ties to replicator-based methods \cite{BK03,BEL06,KDB14,FV04} and the class of exponential learning schemes used in finite (atomic) games \cite{MM10,MBML12,DMMP15}.

Exploiting these links, we are able to show that the proposed routing scheme converges rapidly to a socially efficient state \textendash\ in practice, within a few iterations.
However, a key limiting factor in these considerations is that data traffic may exhibit significant spatiotemporal fluctuations because different applications have highly variable characteristic time scales.
In addition, obtaining perfect information on the energy consumption of each switch/\acl{DC} is a very challenging task in large networks
due to real-time propagation delays,
hierarchical topology aggregation and update sparsity (so as to keep the signaling overhead in check)
\cite{Zhou2002_Study, Masip2003_Routing, Yannuzzi2009_Performance}.
This uncertainty is exacerbated further in highly dynamic networks where state information becomes obsolete very fast;
accordingly, it is not clear whether a routing scheme designed for static, deterministic networks can maintain its convergence properties under uncertainty and noise.

To account for this crucial challenge, we focus on the performance of the proposed Boltzmann routing method in the presence of unpredictable stochastic disturbances.
Using tools from stochastic analysis and Itô calculus \cite{Oks07,Kuo06}, we show that Boltzmann routing is exceptionally robust in this regard:
first, by finetuning the choice of the method's so-called \emph{inverse temperature}, we show that the network's long-term average consumption converges within $\eps$ of its minimum value in at most $\tilde\bigoh(1/\eps^{2})$ time, irrespective of the fluctuations' magnitude.
Moreover, if the network admits a strict, non-mixing optimum state, the method converges to it almost surely \as, no matter the degree of volatility.
Finally, if the network instead admits an interior, fully-mixing optimum state, the proposed scheme remains arbitrarily close to said state with probability arbitrarily close to $1$, provided that the scheme's inverse temperature parameter is taken sufficiently small.

The performance of Boltzmann routing is validated by means of extensive numerical simulations modeling an optical data network deployed over the largest metropolitan centers of the continental US.
Our results show that Boltzmann routing represents a scalable and flexible method that reduces traffic-driven energy consumption by a factor of $40\%$ over simple shortest-path/closest-destination routing modes, even under high degrees of volatility and uncertainty.



\subsection*{Paper outline}

In Section \ref{sec:model}, we present our network energy consumption model and a Pigouvian pricing scheme based on the theory of nonatomic congestion games.
Subsequently, in Section \ref{sec:learning}, we introduce the proposed Boltzmann routing method, first in a deterministic and then in a fully stochastic environment.
Our main theoretical results (outlined above) are presented in Section \ref{sec:analysis}, in both a deterministic and a stochastic context.
Finally, Section \ref{sec:numerics} contains our numerical simulations in practical optical networks with diverse link, switching, and \acl{DC} specifications.
To streamline our presentation, technical proofs have been relegated to a series of appendices at the end.



\section{System Model and Problem Formulation}
\label{sec:model}

In this section, we describe our model for power consumption in optical networks where traffic demands are routed towards a set of service provisioning infrastructures \textendash\ the network's \acdefp{DC}.
After a few preliminaries in Section \ref{sec:prelims}, we discuss the specifics of power consumption in Section \ref{sec:consumption} and we present our game-theoretic model and pricing scheme in Section \ref{sec:equilibrium}.

\subsection{Preliminaries}
\label{sec:prelims}

Following the standard traffic model of \cite{BG92,Ros73,Rou05,San10,NRTV07}, we begin with a finite undirected graph with vertex set $\nodes$ and edge set $\edges$, describing respectively the network's \emph{nodes} and \emph{links}.
We assume that traffic in the network is generated in continuous streams by a set of \emph{source nodes} $\sources\subseteq\nodes$, each node $\source\in\sources$ generating traffic demands at rate $\rate_{\source} > 0$ towards a set of \emph{destination nodes} $\sinks_{\source}\subseteq\nodes$.
In our setting, the traffic generated at each node represents the aggregation of the demands of a very large number of infinitesimal, nonatomic users \textendash\ for instance, a city's smartphone users accessing a cloud-based service.
By the same token, the destination set $\sinks_{\source}$ represents the network's \aclp{DC} (or a subset thereof), so each traffic element generated at source $\source$ is to be delivered in an \emph{anycast} fashion to any one of the destination nodes in $\sinks_{\source}$ (for a schematic representation, see Fig.~\ref{fig:network}).%
\footnote{This generic setup can easily account for a variety of different delivery semantics, such as unicast or multicast routing.
For instance, in the unicast case, each infinitesimal traffic element generated at $\source$ would simply be addressed to a \emph{specific} destination node in $\sinks_{\source}$ instead of any $\sink\in\sinks_{\source}$ \cite{Rou05,San10}.
Essentially, switching to a different delivery scheme is just a matter of notation;
given our motivation, we focus on anycast routing throughout, but our results also apply to these different routing schemes.}

Based on the routing choices of each individual traffic element, this traffic is split over a small, relevant set $\paths_{\source}$ of \emph{paths} (or \emph{routes}) that have been chosen offline so as to join source $\source\in\sources$ to the associated destination set $\sinks_{\source}$ (for instance, all paths with the lowest hop-count).
As such, if $x_{\path}$ denotes the amount of traffic routed from source $\source\in\sources$ via the path $\path\in\paths_{\source}$, the corresponding \emph{flow profile} $x_{\source} = (x_{\path})_{\path\in\paths_{\source}}$ will be an element of the scaled simplex
\begin{equation}
\txs
\feas_{\source}
	= \setdef[\big]{x_{\source}\in\R^{\paths_{\source}}}{\text{$\sum_{\path\in\paths_{\source}} x_{\path} = \rate_{\source}$ and $x_{\path} \geq 0$ for all $\path\in\paths_{\source}$}}.
\end{equation}
Coalescing all these individual flow distributions into a single state variable, the aggregate profile $x = (x_{\source})_{\source\in\sources} \in \prod_{\source} \R^{\paths_{\source}}$ will be referred to as the network's \emph{state} and the set $\feas \equiv \prod_{\source} \feas_{\source}$ of all such states will be called the \emph{state space} of the network.
For concision, it will often be convenient to write $x = (x_{\path})_{\path\in\paths}$ where $\paths = \union_{\source} \paths_{\source}$ denotes the set of all utilized paths in the network;
since a path $\path\in\paths$ uniquely determines its source and destination nodes, there is no risk of confusion.

\begin{figure}[t]
\centering
\includegraphics[width=.48\textwidth]{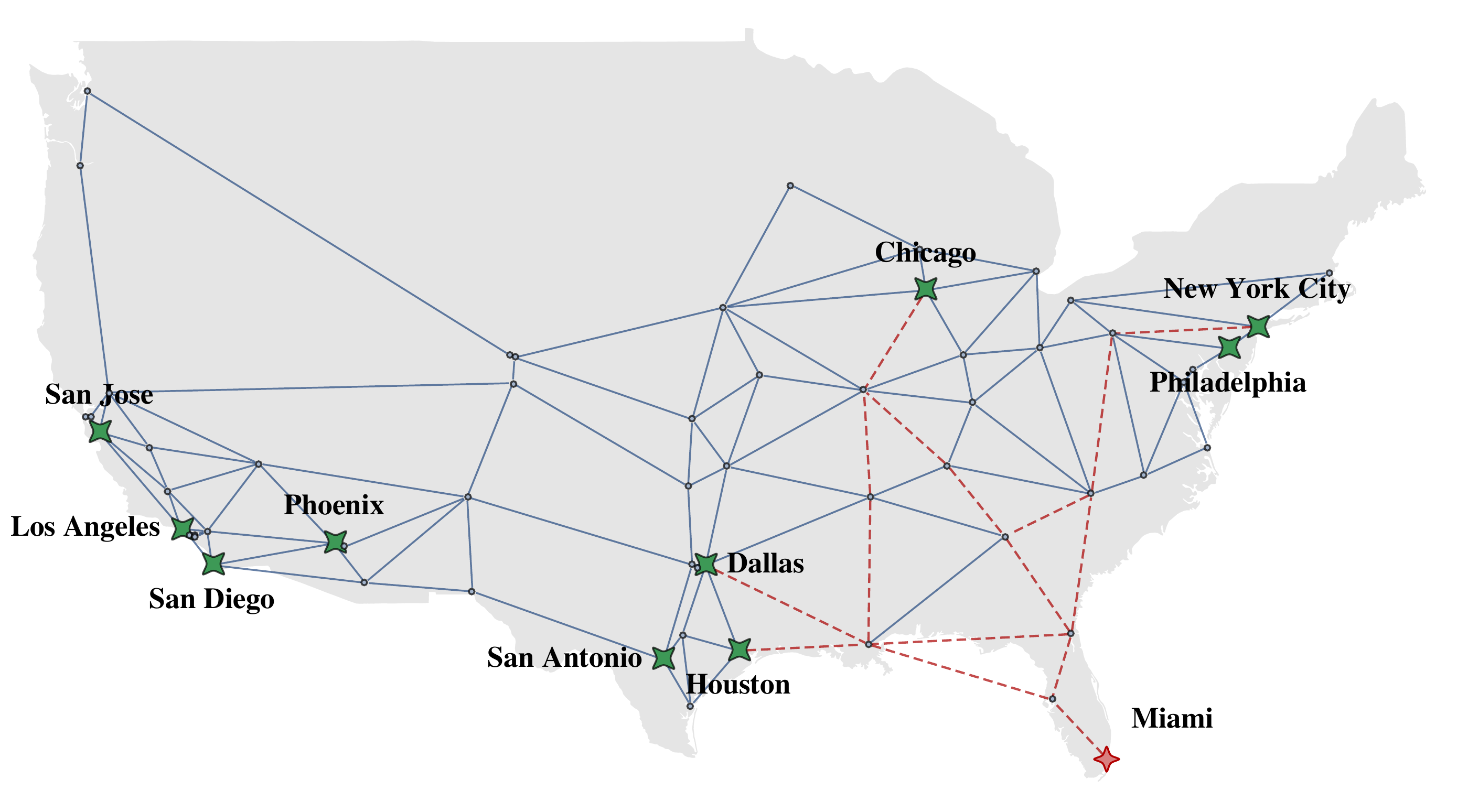}
\vspace{-3ex}
\caption{A typical network topology with source nodes and \aclp{DC} deployed over the continental US.
For illustration purposes, we have highlighted a source node $\source\in\sources$ (red diamond) and its routing choices $\path\in\paths_{\source}$ (dashed lines).}
\label{fig:network}
\vspace{-1ex}
\end{figure}

Now, if the network is at state $x\in\feas$, the \emph{load} $\load_{\edge}$ on edge $\edge\in\edges$ is defined as the sum of all flows through $\edge$, namely
\begin{equation}
\label{eq:load}
\load_{\edge}
	= \sum_{\substack{\path\in\paths\\ \path\ni\edge}} x_{\path}
	= \sum_{\path\in\paths} \pmat_{\edge\path} x_{\path},
\end{equation}
where
$\pmat = (\pmat_{\edge\path})_{\edge\in\edges,\path\in\paths}$ denotes the \emph{edge-path incidence matrix} of the network, viz.
\begin{equation}
\label{eq:incidence}
\pmat_{\edge\path}=
	\begin{cases}
		1
		&\quad
		\text{if $\edge\in\path$},
		\\
		0
		&\quad
		\text{otherwise.}
	\end{cases}
\end{equation}
With this in mind, we associate to each edge $\edge\in\edges$ a \emph{consumption} (or \emph{cost}) \emph{function} $\cost_{\edge}\from\R_{+}\to\R$ reflecting the negative externalities induced on edge $\edge$ due to its traffic load $\load_{\edge}$.
In realistic network scenarios, the externalities incurred on a given edge $\edge\in\edges$ increase (or, at best, do not decrease) with the underlying load $\load_{\edge}$;
furthermore, these externalities accrue at an increased rate as the load increases.
On that account, our only assumption will be:

\begin{assumption}
\label{asm:cost}
The network's consumption functions $\cost_{\edge}\from\R_{+}\to\R$ are convex and nondecreasing.
\end{assumption}

\begin{remark}
The implications of Assumption \ref{asm:cost} in the context of power consumption in optical networks are detailed in the following section;
we only state here that this hypothesis is sufficiently broad so as to accommodate a wide variety of congestion-limited frameworks, ranging from latency minimization \cite{BG92,Rou05,NRTV07} to urban traffic management \cite{Ros73,San10}, etc.%
\footnote{For instance, assuming M/M/1 service queues in a latency minimization framework, a standard choice for $\cost_{\edge}$ is $\cost_{\edge}(\load_{\edge}) = \load_{\edge}/(\mu_{\edge} - \load_{\edge})$, where $\mu_{\edge}$ denotes the capacity of edge $\edge\in\edges$ \cite{BG92,Rou05}.}
For this reason, all our analysis will be carried out under the general umbrella of Assumption \ref{asm:cost}.
\end{remark}

With all this at hand, we obtain the global \emph{consumption minimization problem}
\begin{equation}
\label{eq:CM}
\tag{CM}
\begin{aligned}
\text{minimize}
	&\quad
	\obj(x) = \sum_{\edge\in\edges} \cost_{\edge}(\load_{\edge})
	\\
\text{subject to}
	&\quad
	\load = \pmat x,\;
	x\in\feas.
\end{aligned}
\end{equation}
In words, the network states $\eq$ that solve \eqref{eq:CM} minimize the aggregate consumption $\obj(x)$ over the entire network;
as such, the solutions of \eqref{eq:CM} will be called the network's \emph{socially optimum states}.
As we show below, the set of such states has a particularly simple structure under Assumption \ref{asm:cost}:

\begin{proposition}
\label{prop:structure}
The solution set $\eqset$ of \eqref{eq:CM} is nonempty, convex and compact.
Moreover, if the network's consumption functions are strictly increasing and convex, every solution of \eqref{eq:CM} induces the same load profile $\weq = (\weq_{\edge})_{\edge\in\edges}$;
finally, if the utilized paths $\path\in\paths$ are also linearly independent \textup(in the sense that the network's path-edge incidence matrix $\pmat$ is invertible on $\feas$\textup), the global minimization problem \eqref{eq:CM} admits a unique solution $\eq \in \feas$.
\end{proposition}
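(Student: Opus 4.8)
The plan is to settle the three assertions in sequence, the first two by routine convex analysis and the third as an immediate corollary of the second. For the first: $\feas=\prod_{\source}\feas_{\source}$ is a product of nonempty scaled simplices, hence nonempty, convex and compact, while $\obj(x)=\sum_{\edge}\cost_{\edge}(\load_{\edge})$ with $\load=\pmat x$ is convex on $\feas$ (each $\cost_{\edge}$ is convex by Assumption~\ref{asm:cost} and $x\mapsto\pmat x$ is linear) and continuous there — each $\cost_{\edge}$ is continuous on $(0,\infty)$ as a real-valued convex function, and at $0$ as well because it is nondecreasing, so that $\lim_{\load\to0^{+}}\cost_{\edge}(\load)=\cost_{\edge}(0)$. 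The extreme-value theorem then gives $\eqset\neq\emptyset$; $\eqset$ is closed, being a level set of the continuous map $\obj$, and bounded, being a subset of $\feas$, hence compact; and $\eqset=\{x\in\feas:\obj(x)\le\min_{\feas}\obj\}$ is convex, being a sublevel set of a convex function over a convex set.

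The substance lies in the second assertion, and the key step is to change variables from the flow $x$ to the load $\load=\pmat x$. Setting $\tilde\obj(\load)=\sum_{\edge}\cost_{\edge}(\load_{\edge})$ we have $\obj=\tilde\obj\circ\pmat$; the reachable load set $\pmat\feas=\{\pmat x:x\in\feas\}$ is a nonempty, convex, compact polytope; and every solution $x$ of \eqref{eq:CM} has $\pmat x$ minimizing $\tilde\obj$ over $\pmat\feas$. When the consumption functions are strictly convex (equivalently, when the marginal costs are strictly increasing), $\tilde\obj$ is strictly convex on $\R^{\edges}$ — two distinct load vectors differ in some coordinate $\edge$, and strict convexity of that single summand makes the defining convexity inequality strict at the midpoint — so $\tilde\obj$ has a unique minimizer over the convex set $\pmat\feas$, whence all solutions of \eqref{eq:CM} carry the same load $\weq$. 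I would stress that this argument must be run in the load variable: $\obj$ is in general \emph{not} strictly convex on $\feas$, since $x\mapsto\pmat x$ need not be injective, so the passage to $\tilde\obj$ on the load polytope is precisely the point — and it is also where strict convexity, rather than mere monotonicity, of the $\cost_{\edge}$ is genuinely used.

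Finally, for the third assertion, assume in addition that $x\mapsto\pmat x$ is injective on $\feas$ — the content of ``$\pmat$ invertible on $\feas$'', i.e., distinct feasible flows induce distinct loads. Then the common optimal load $\weq$ has a unique preimage in $\feas$, so any two solutions of \eqref{eq:CM} coincide; combined with nonemptiness from the first part, this yields a unique solution $\eq\in\feas$. I anticipate no obstacle beyond the conceptual one already flagged — identifying uniqueness of the \emph{load} (not of the flow) as the right intermediate claim and proving it after the change of variables; continuity of $\obj$ up to the boundary, compactness and convexity of $\eqset$, and the final step from ``same load'' to ``same flow'' are all routine once that reduction is in place.
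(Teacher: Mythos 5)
Your proof is correct and follows essentially the same route as the paper's: existence, convexity and compactness of $\eqset$ from the compactness of $\feas$ and convexity of $\obj$; uniqueness of the load profile from strict convexity of $\obj$ viewed as a function of $\load$; and uniqueness of the flow from invertibility of $\pmat$ on $\feas$. You simply supply more detail (continuity of $\cost_{\edge}$ at $0$, the midpoint argument in the load variable), and you rightly flag that the argument genuinely uses \emph{strict convexity} of the $\cost_{\edge}$ rather than strict monotonicity alone — which is exactly the hypothesis the paper's own proof invokes.
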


\begin{IEEEproof}
The first part of our claim is a classic result that follows immediately from the compactness of $\strat$ and the convexity of each $\cost_{\edge}$, $\edge\in\edges$ \textendash\ see e.g. \cite{BMW56,Rou05}.
For our second assertion, simply note that if the network's cost functions are strictly convex and increasing, $\obj$ is strictly convex as a function of $\load$, so it admits a unique solution $\weq$.
Hence, if $\pmat$ is invertible on $\feas$, the equation $\weq = \pmat \eq$ admits a unique solution $\eq\in\feas$, as claimed.
\end{IEEEproof}

The above result shows that the network's set of socially optimum states has some fairly desirable attributes;
however, it does not provide a way for the network's users to compute (or otherwise converge to) such a state in a distributed way.
This question underlies much of this paper's motivation, so we address it in detail in Sections \ref{sec:learning} and \ref{sec:analysis}.

\subsection{Energy consumption in optical data networks}
\label{sec:consumption}

As we mentioned in the introduction, traffic-driven power consumption in optical data networks occurs at two basic levels:
\begin{inparaenum}
[\upshape(\itshape i\upshape)]
\item
at the \emph{link level} (including all wave\-length-related and transmission line elements);
and
\item
at the \emph{\acl{DC} level} (including all costs to service, process and/or store demands at the network's \aclp{DC}, cooling, etc.)
\end{inparaenum}
\cite{FWB07,BGT+13}.
We will thus consider two types of power consumption functions:

\begin{subequations}
\label{eq:cost-model}
\begin{enumerate}
\setlength{\parindent}{1em}

\item
\emph{Link-based consumption.}
At the transmission line, power consumption typically comprises two parts:
\begin{inparaenum}
[\itshape a\upshape)]
\item
a \emph{traffic-independent} component (due to amplification and other factors);
and
\item
a \emph{traffic-dependent} component, proportional to the number of active wavelengths per link (including the cost of switching, electrical-to-optical and optical-to-electrical conversion, etc.).
\end{inparaenum}
Assuming that wavelength granularity is sufficiently fine, the number of wavelengths needed to carry a traffic load $\load$ is itself proportional to $\load$, leading to the link-based model
\cite{Anastasopoulos2011_Evolutionary,BGT+13}
\begin{equation}
\label{eq:cost-link}
\cost_{\link}(\load)
	= A_{\link} + B_{\link} \, \load,
\end{equation}
where the values of $A_{\link}$ and $B_{\link}$ depend on the power specifications of each link.

\item
\emph{Destination-based consumption.}
At the \acl{DC} level, power consumption again comprises two parts:
\begin{inparaenum}
[\itshape a\upshape)]
\item
a traffic-\emph{independent} element (due to cooling, infrastructure maintenance, etc.);
and
\item
a traffic-\emph{dependent} part, proportional to the induced CPU load \textendash\ and hence, proportional to the traffic load at each \acl{DC}.
We thus obtain the consumption model \cite{FWB07}:
\begin{equation}
\label{eq:cost-DC}
\cost_{\DC}(\load)
	= A_{\DC} + B_{\DC} \, \load.
\end{equation}
\end{inparaenum}

Of course, the consumption model \eqref{eq:cost-DC} is node-based, so the edge-based formulation of the previous section does not immediately apply.
However, since \aclp{DC} are \emph{terminal} destination nodes, this can be remedied as follows:
First, adjoin to each \acl{DC} node $\sink$ a virtual node $\sink'$ and a virtual edge $\edge = (\sink,\sink')$;
assume further that all traffic reaching $\sink$ is rerouted to $\sink'$ via $\edge$.
Then, the load at $\sink$ can be represented by the load on $\edge = (\sink,\sink')$ and the cost function of this fictitious edge reflects the energy consumption at the \acl{DC}.
\end{enumerate}
\end{subequations}

In addition to the above, in practical optical networks, each link/\acl{DC} has a finite capacity, leading to the state constraint
\begin{equation}
\label{eq:capacity}
\load_{\edge}
	\leq \maxload_{\edge}
	\quad
	\text{for all $\edge\in\edges$},
\end{equation}
where $\maxload_{\edge}$ represents the \emph{maximum load} supported on edge $\edge$.
We thus obtain the \emph{capacity-constrained consumption minimization problem}
\begin{equation}
\label{eq:CM0}
\tag{\ref{eq:CM}$_{0}$}
\begin{aligned}
\text{minimize}
	&\quad
	\obj(x),
	\\
\text{subject to}
	&\quad
	x\in\feas_{0},
\end{aligned}
\end{equation}
where
\begin{equation}
\label{eq:feasible}
\feas_{0}
	= \setdef{x\in\feas}{\load_{\edge} = (\pmat x)_{\edge} \leq \maxload_{\edge}\; \text{for all $\edge\in\edges$}}
\end{equation}
denotes the network's \emph{capacity-constrained feasible region}.
Clearly, if the traffic generation rates $\rate_{\source}$ are very high, the set $\feas_{0}$ may be empty, in which case \eqref{eq:CM0} does not admit a feasible solution.
It will therefore be useful to introduce the following $\eps$-relaxation of \eqref{eq:CM0}:
\begin{equation}
\label{eq:CM-eps}
\tag{\ref{eq:CM}$_{\eps}$}
\begin{aligned}
\text{minimize}
	&\quad
	\obj_{\eps}(x) = \sum_{\edge\in\edges} \cost_{\edge}^{\eps}(\load_{\edge})
	\\
\text{subject to}
	&\quad
	\load = \pmat x,\;
	x\in\feas,
\end{aligned}
\end{equation}
where
the \emph{$\eps$-adjusted consumption functions} $\cost_{\edge}^{\eps}$ are given by
\begin{equation}
\label{eq:cost-eps}
\cost_{\edge}^{\eps}(\load)
	= \begin{cases}
	\cost_{\edge}(\load)
		&\quad
		\text{if $\load \leq \maxload_{\edge}$},
		\\
	\cost_{\edge}(\maxload_{\edge}) + (\load - \maxload_{\edge})/\eps
		&\quad
		\text{if $\load \geq \maxload_{\edge}$}.
	\end{cases}
\end{equation}

In words, the consumption model \eqref{eq:cost-eps} of \eqref{eq:CM-eps} coincides with that of \eqref{eq:CM0} up to the edge's maximum load $\maxload_{\edge}$ and then increases sharply with slope $1/\eps$.
Since $\cost_{\edge}^{\eps}$ is strictly increasing, any feasible solution of \eqref{eq:CM0} will also be a solution of \eqref{eq:CM-eps}.
On the other hand, given that $\feas$ is compact and nonempty, \eqref{eq:CM-eps} always admits a solution, even when \eqref{eq:CM0} does not;
in this case, the solutions of \eqref{eq:CM-eps} can be considered as ``approximate'' solutions to \eqref{eq:CM0} in the limit $\eps\to0$.
We formalize this in the following proposition:

\begin{proposition}
\label{prop:CM-eps}
Let $\obj_{\eps}^{\ast} = \min\setdef{\obj_{\eps}(x)}{x\in\feas}$ denote the minimum value of the global minimization problem \eqref{eq:CM-eps}, and let $\obj_{0}^{\ast}$ denote the corresponding quantity for the capacity-constrained problem \eqref{eq:CM0}.
Then,
$\lim_{\eps\to 0} \obj_{\eps}^{\ast} = \obj_{0}^{\ast}$ \textup(with the standard convention $\min\varnothing = \infty$\textup).
\end{proposition}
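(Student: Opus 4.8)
\ The plan is to split the argument into the two cases $\feas_{0}\neq\varnothing$ and $\feas_{0}=\varnothing$, the common mechanism being that the slope $1/\eps$ in the adjusted costs \eqref{eq:cost-eps} blows up any sequence of near-optimal states that fails to become capacity-feasible as $\eps\to0$. First I would record two elementary facts. \emph{(i)} For every edge $\edge$ and every $\load\geq0$ one has $\cost_{\edge}^{\eps}(\load)\geq\wilde\cost_{\edge}(\load):=\cost_{\edge}(\min\{\load,\maxload_{\edge}\})$, with equality whenever $\load\leq\maxload_{\edge}$; moreover each $\wilde\cost_{\edge}$ is continuous on $\R_{+}$ (a convex, nondecreasing function on $\R_{+}$ is continuous, by Assumption \ref{asm:cost}) and $\wilde\cost_{\edge}\leq\cost_{\edge}$. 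Hence, writing $\wilde\obj(x)=\sum_{\edge}\wilde\cost_{\edge}((\pmat x)_{\edge})$, we obtain $\obj_{\eps}\geq\wilde\obj$ on $\feas$, with $\obj_{\eps}(x)=\obj(x)=\wilde\obj(x)$ whenever $x\in\feas_{0}$, and $\wilde\obj$ continuous on the compact set $\feas$. \emph{(ii)} Since $\cost_{\edge}$ is nondecreasing, $\cost_{\edge}^{\eps}(\load)\geq\cost_{\edge}(0)$ for all $\load\geq0$, so $\obj_{\eps}\geq\sum_{\edge}\cost_{\edge}(0)$ uniformly in $\eps$.

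The crux is a feasibility lemma that I would establish next: if $\eps_{n}\downarrow0$, $x_{n}\in\argmin_{\feas}\obj_{\eps_{n}}$ and $\sup_{n}\obj_{\eps_{n}}(x_{n})<\infty$, then every cluster point $\bar x$ of $(x_{n})$ lies in $\feas_{0}$. Indeed, along a subsequence with $x_{n}\to\bar x$, if $(\pmat\bar x)_{\edge_{0}}>\maxload_{\edge_{0}}$ for some $\edge_{0}$, then by continuity of $x\mapsto\pmat x$ we have $(\pmat x_{n})_{\edge_{0}}\geq\maxload_{\edge_{0}}+\delta$ for all large $n$ and some $\delta>0$, whence $\obj_{\eps_{n}}(x_{n})\geq\sum_{\edge}\cost_{\edge}(0)+\delta/\eps_{n}\to\infty$ by \emph{(ii)} \textendash\ a contradiction. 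This is the only place the capacity penalty does any work.

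In the case $\feas_{0}\neq\varnothing$ I would fix a solution $\eq$ of \eqref{eq:CM0}; since $\eq\in\feas_{0}$, fact \emph{(i)} gives $\obj_{\eps}^{\ast}\leq\obj_{\eps}(\eq)=\obj(\eq)=\obj_{0}^{\ast}$ for every $\eps>0$, so $\limsup_{\eps\to0}\obj_{\eps}^{\ast}\leq\obj_{0}^{\ast}$. For the matching lower bound, set $\ell=\liminf_{\eps\to0}\obj_{\eps}^{\ast}$, pick $\eps_{n}\downarrow0$ with $\obj_{\eps_{n}}^{\ast}\to\ell$ and minimizers $x_{n}$ (bounded, as $\obj_{\eps_{n}}^{\ast}\leq\obj_{0}^{\ast}$), and pass to a subsequence with $x_{n}\to\bar x\in\feas_{0}$ via the lemma; then fact \emph{(i)} yields $\ell=\lim_{n}\obj_{\eps_{n}}(x_{n})\geq\lim_{n}\wilde\obj(x_{n})=\wilde\obj(\bar x)=\obj(\bar x)\geq\obj_{0}^{\ast}$, so $\lim_{\eps\to0}\obj_{\eps}^{\ast}=\obj_{0}^{\ast}$. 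In the case $\feas_{0}=\varnothing$ (where $\obj_{0}^{\ast}=\infty$ by convention) I would argue by contradiction: if $\obj_{\eps}^{\ast}$ did not diverge, some $\eps_{n}\downarrow0$ would have $\sup_{n}\obj_{\eps_{n}}^{\ast}<\infty$, and the feasibility lemma would then produce a cluster point of the corresponding minimizers inside $\feas_{0}$, contradicting $\feas_{0}=\varnothing$; hence $\obj_{\eps}^{\ast}\to\infty=\obj_{0}^{\ast}$.

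The main obstacle is the feasibility lemma: the subtlety is to make sure the blow-up of the single over-capacity term is not cancelled by the behaviour of the remaining terms, which is precisely what the uniform lower bound \emph{(ii)} guarantees. Everything else is compactness of $\feas$ together with continuity of the truncated objective $\wilde\obj$, so I do not expect any further difficulty.
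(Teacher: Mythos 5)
Your proof is correct. It takes a genuinely more elementary route than the paper's: the paper notes that the feasibility correspondence $\eps\mapsto\feas_{\eps}$ is upper hemicontinuous and that $\obj_{\eps}$ agrees with $\obj$ on $\feas_{0}$, and then invokes a precursor to Berge's maximum theorem \cite[Lemma 16.30]{AB99} to conclude that $\eps\mapsto\obj_{\eps}^{\ast}$ is lower semicontinuous at $\eps=0$, with the divergence of $\cost_{\edge}^{\eps}(\load)$ for $\load>\maxload_{\edge}$ finishing the argument. Your feasibility lemma \textendash\ cluster points of minimizers with bounded values must lie in $\feas_{0}$ \textendash\ together with the truncated objective $\wilde\obj$ as a continuous uniform minorant is precisely the content that the abstract theorem packages, proved by hand via compactness of $\feas$. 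What your version buys is transparency: it isolates exactly where the $1/\eps$ penalty and the uniform bound $\cost_{\edge}^{\eps}\geq\cost_{\edge}(0)$ do their work, and it treats the infeasible case $\feas_{0}=\varnothing$ by the same lemma rather than by a separate limiting remark; what the paper's version buys is brevity. Your parenthetical observation that a convex nondecreasing function on $\R_{+}$ is continuous (including at $0$) is worth keeping explicit, since it is what guarantees both the continuity of $\wilde\obj$ and the existence of the minimizers $x_{n}$ on the compact set $\feas$.
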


\begin{IEEEproof}
Write $\feas_{\eps} = \feas$ for the feasible set of \eqref{eq:CM-eps}, $\eps>0$.
By construction, the set-valued correspondence $\eps\mapsto\feas_{\eps}$, $\eps\geq0$, is upper hemicontinuous \cite{AB99} and the objective function of \eqref{eq:CM-eps} coincides with that of \eqref{eq:CM} on $\feas_{0}$.
Thus, by a precursor to Berge's maximum theorem \cite[Lemma 16.30]{AB99}, it follows that
the function $\eps\mapsto \obj_{\eps}^{\ast}$, $\eps\geq0$, is lower semicontinuous.
Since $\lim_{\eps\to0} \cost_{\edge}^{\eps}(\load) = \infty$ for all $\load>\maxload_{\edge}$, our claim is immediate.
\end{IEEEproof}

\smallskip

In what follows, we will assume for simplicity that the capacity-constrained feasible region $\feas_{0}$ of \eqref{eq:CM0} is nonempty (after all, there is nothing to optimize if $\feas_{0}$ is empty).
In this case, using the $\eps$-relaxation \eqref{eq:CM-eps} allows us to recover the solutions of \eqref{eq:CM0} in the limit $\eps\to0$.
Given that the $\eps$-adjusted consumption functions $\cost_{\edge}^{\eps}$ satisfy Assumption \ref{asm:cost},
this will allow us to apply the general analysis for \eqref{eq:CM}, even in the presence of capacity constraints of the form \eqref{eq:capacity}.

\subsection{Pigouvian pricing and \acl{NE}}
\label{sec:equilibrium}

In a fully distributed environment, the system designer would like to ensure that the network's nonatomic (infinitesimal) users make socially optimal routing choices while unilaterally minimizing the price they pay for accessing the network and utilizing its resources.
Following Pigou's theory of pricing \cite{Pig20}, our approach to achieve this will be to charge agents for the externalities that they induce at a socially optimum state;
in so doing, one can then ensure that such states constitute an equilibrium of the agents' selfish interactions \cite{San07,NRTV07}.

To make this precise, suppose that a user that sends an infinitesimal amount of traffic $dx$ through edge $\edge\in\edges$ is charged $\modcost_{\edge}(\load_{\edge}) \dd x$ where $\modcost_{\edge}(\load_{\edge})$ is the \emph{price per unit of traffic} on edge $\edge$ at load $\load_{\edge}$.
The total amount charged over path $\path\in\act$ will then be
\begin{equation}
\label{eq:modcost-path}
\modcost_{\path}(x) \dd x
	= \sum_{\edge\in\path} \modcost_{\edge}(\load_{\edge}) \dd x,
\end{equation}
so a (nonatomic) user will be satisfied with his routing choice if the charged price $\modcost_{\alpha}(x)$ is the lowest among all other available paths/destinations \cite{War52}.
Formally:

\begin{definition}
\label{def:Nash}
A state $\eq\in\feas$ is at \emph{\acl{NE}} if,
for all paths $\path\in\paths$ with $\eq_{\path}>0$, we have
\begin{equation}
\label{eq:Nash}
\tag{NE}
\modcost_{\path}(\eq)
	= \min_{\pathalt\sim\path} \modcost_{\pathalt}(\eq),
\end{equation}
where the minimum is taken over all paths $\pathalt\sim\path$ originating at the same source as $\path$.
In  words, \emph{$\eq$ is a \acl{NE} when every nonatomic user chooses the least expensive path.}

We will also say that $\eq$ is an \emph{interior} (or \emph{fully-mixing}) \acl{NE} if $\eq\in\intr(\feas)$, i.e. if it employs all paths $\path\in\paths$;
by contrast, $\eq$ will be called \emph{strict} (or \emph{non-mixing}) if $\argmin_{\path\in\paths_{\source}} \modcost_{\path}(\eq)$ is a singleton for every source node $\source\in\sources$.
\end{definition}


A standard result in the theory of nonatomic congestion games is that \aclp{NE} can be characterized as the solutions of a certain convex program.
Specifically, following \cite{BMW56,Ros73,San10,Rou05,DS69}, consider the \emph{potential function}
\begin{equation}
\label{eq:pot}
\pot(x)
	= \sum_{\edge\in\edges} \pot_{\edge}(\load_{\edge})
\end{equation}
where $\load = \pmat x$ and
\begin{equation}
\label{eq:pot-edge}
\pot_{\edge}(x)
	= \int_{0}^{\load_{\edge}} \modcost_{\edge}(w) \dd w.
\end{equation}
Then, as was shown in \cite{BMW56,DS69}, $\eq$ is a \acl{NE} if and only if it solves the \emph{potential minimization} problem
\begin{equation}
\label{eq:PM}
\tag{PM}
\eq
	\in \argmin_{x\in\feas} \pot(x).
\end{equation}
Hence, by comparing \eqref{eq:PM} and \eqref{eq:CM}, we obtain the following Pigouvian pricing scheme:

\begin{proposition}
\label{prop:Pigou}
If the charged price per unit of traffic on edge $\edge$ is $\modcost_{\edge}(\load_{\edge}) = \pd_{-} \cost_{\edge}(\load_{\edge})$, \aclp{NE} coincide with the solutions of the global consumption minimization problem \eqref{eq:CM}.
\end{proposition}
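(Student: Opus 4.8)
The plan is to piggyback on the characterization recalled immediately above the statement, namely that $\eq$ is at \acl{NE} if and only if it solves the potential minimization problem \eqref{eq:PM}. It therefore suffices to show that, under the proposed Pigouvian toll $\modcost_{\edge}=\pd_{-}\cost_{\edge}$, the potential $\pot$ and the consumption objective $\obj$ differ only by an additive constant on $\feas$; since shifting an objective by a constant does not affect its set of minimizers over $\feas$, this identifies the solution set of \eqref{eq:PM} with that of \eqref{eq:CM}, and the proposition follows by transitivity of the two equivalences.

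Concretely, I would first substitute $\modcost_{\edge}=\pd_{-}\cost_{\edge}$ into the edge potential \eqref{eq:pot-edge}. By Assumption~\ref{asm:cost} each $\cost_{\edge}$ is convex and nondecreasing, so its left derivative $\pd_{-}\cost_{\edge}$ exists and is finite on $(0,\infty)$, is nonnegative, and is nondecreasing; hence the pricing rule is well posed and defines a legitimate monotone, nonnegative per-unit toll. The crux is then the identity
\begin{equation*}
\pot_{\edge}(\load_{\edge})
	= \int_{0}^{\load_{\edge}} \pd_{-}\cost_{\edge}(w) \dd w
	= \cost_{\edge}(\load_{\edge}) - \cost_{\edge}(0).
\end{equation*}
The cleanest justification is to note that both sides, viewed as functions of $\load_{\edge}\in\R_{+}$, are convex, vanish at $\load_{\edge}=0$, and have the same left derivative $\pd_{-}\cost_{\edge}$ on $(0,\infty)$; two convex functions on an interval with identical one-sided derivatives differ by a constant, and since they agree at the origin they coincide. (Equivalently, one may invoke the absolute continuity of convex functions together with the fact that $\pd_{-}\cost_{\edge}$ equals the ordinary derivative $\cost_{\edge}'$ off a countable \textendash\ hence Lebesgue-null \textendash\ set, so the fundamental theorem of calculus applies directly.)

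Summing this over all edges gives $\pot(x)=\sum_{\edge\in\edges}\pot_{\edge}(\load_{\edge}) = \obj(x) - \sum_{\edge\in\edges}\cost_{\edge}(0)$, where the subtracted term is a fixed constant independent of the state $x$. Consequently $\argmin_{x\in\feas}\pot(x)=\argmin_{x\in\feas}\obj(x)$, i.e. \eqref{eq:PM} and \eqref{eq:CM} have exactly the same solutions, and combining this with the equivalence ``$\eq$ is a \acl{NE} $\iff$ $\eq$ solves \eqref{eq:PM}'' yields the claim. Note that no strict convexity is needed: the argument runs entirely under Assumption~\ref{asm:cost}.

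I expect the only genuine subtlety to be the non-smoothness of the consumption functions. Assumption~\ref{asm:cost} explicitly permits kinks \textendash\ the $\eps$-adjusted costs $\cost_{\edge}^{\eps}$ of \eqref{eq:cost-eps}, for instance, are merely piecewise linear \textendash\ so $\cost_{\edge}$ need not be differentiable and one cannot naively write $\pot_{\edge}'=\cost_{\edge}'$. This is precisely why the pricing rule is stated with the left derivative $\pd_{-}\cost_{\edge}$ rather than an arbitrary subgradient selection, and why the displayed identity above should be argued through one-sided derivatives of convex functions rather than by an unqualified appeal to the fundamental theorem of calculus. Everything else \textendash\ the reduction to \eqref{eq:PM} and the constant-shift observation \textendash\ is routine.
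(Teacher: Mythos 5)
Your proof is correct and takes essentially the same route as the paper, which obtains Proposition~\ref{prop:Pigou} precisely ``by comparing \eqref{eq:PM} and \eqref{eq:CM}'': with $\modcost_{\edge}=\pd_{-}\cost_{\edge}$ the potential \eqref{eq:pot} and the objective $\obj$ differ by the constant $\sum_{\edge}\cost_{\edge}(0)$, so the Beckmann\textendash McGuire\textendash Winsten characterization of equilibria as solutions of \eqref{eq:PM} identifies them with the solutions of \eqref{eq:CM}. Your justification of the identity $\int_{0}^{\load_{\edge}}\pd_{-}\cost_{\edge}(w)\dd w=\cost_{\edge}(\load_{\edge})-\cost_{\edge}(0)$ via one-sided derivatives of convex functions is a correct elaboration of a step the paper leaves implicit.
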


\begin{remark}
In the above, $\pd_{-} \cost_{\edge}(\load_{\edge})$ denotes the left derivative of $\cost_{\edge}$ at $\load_{\edge}$.
This derivative always exists because $\cost_{\edge}$ is assumed convex \cite{Roc70};
if $\cost_{\edge}$ is smooth, we can simply take $\modcost_{\edge} = \cost_{\edge}'$.
\end{remark}

Proposition \ref{prop:Pigou} describes the pricing scheme that the system designer should use so that selfish, cost-minimizing users end up minimizing global power consumption at equilibrium.
As such, in the rest of this paper, we will assume that prices are set in accordance to Proposition \ref{prop:Pigou}, so that socially optimum states for \eqref{eq:CM} coincide with the network's \aclp{NE} under $\modcost$.


\section{Boltzmann Routing}
\label{sec:learning}

In this section, we present a distributed routing scheme to attain a socially efficient state based on a logit-type learning rule inspired from statistical mechanics.
This scheme consists of two basic steps:
First, each source node $\source\in\sources$ keeps a running ``score'' for each path starting at $\source$ by aggregating the associated end-to-end price over time.
Then, every nonatomic user based at $\source$ chooses a path (and hence, a destination node $\sink\in\sinks_{\source}$) with probability inversely proportional to the exponential of this performance score \textendash\ in analogy to the Boltzmann distribution of statistical mechanics.

For generality, throughout this section, we focus on the global minimization problem \eqref{eq:CM} with consumption functions satisfying Assumption \ref{asm:cost};
the specifics of energy efficiency in optical data networks are discussed in Section \ref{sec:numerics}.
To streamline our presentation, we first develop our routing scheme in a deterministic setting with perfect information.
Subsequently, we extend our model to a fully stochastic setting where the network is subject to random perturbations (arising e.g. from fluctuations in background traffic, link quality, estimation errors, etc.).
We then present our convergence analysis and theoretical results in Section \ref{sec:analysis}.

\subsection{Boltzmann routing}
\label{sec:deterministic}

The first step in setting up our routing scheme is to introduce a path's ``score'', interpreted here as a cumulative measure of the path's price over time.
More precisely, if $x(t)$ denotes the state of the system at time $t$, we define the \emph{score} of path $\path\in\paths$ as
\begin{equation}
\label{eq:score}
\score_{\path}(t)
	= \score_{\path}(0) + \int_{0}^{t} \modcost_{\path}(x(s)) \dd s,
\end{equation}
i.e. as the cumulative price of path $\path$ over the time interval $[0,t]$ (the value $\score_{\path}(0)$ is an arbitrary constant that represents an initial assessment of the path's price).

Clearly, high scores indicate commensurately high prices, so such paths should be selected with low probability.
We thus posit that a nonatomic user starting at $\source$ selects path $\path\in\paths_{\source}$
with probability given by the \emph{Boltzmann distribution}
\begin{equation}
\label{eq:Boltzmann}
\tag{B}
P_{\!\path}(t)
	\propto \expof{-\temp(t)\,\score_{\path}(t)},
\end{equation}
where $\temp(t) > 0$ is an \emph{inverse temperature} parameter whose role is explained below.
Thus, writing \eqref{eq:score} in differential form and invoking the law of large numbers to obtain the flow induced by \eqref{eq:Boltzmann}, we obtain the \emph{Boltzmann routing scheme}
\begin{equation}
\label{eq:BR}
\tag{BR}
\begin{aligned}
\dot \score_{\path}
	&= \modcost_{\path}(x),
	\\
x_{\path}
	&= \frac{\rate_{\source} \expof{-\temp\score_{\path}}}{\sum_{\pathalt\in\paths_{\source}} \expof{-\temp\score_{\pathalt}}},
\end{aligned}
\end{equation}
where $\source \equiv \source(\path)$ denotes the starting point of $\path$ and $\rate_{\source}$ is the corresponding traffic generation rate.
These routing dynamics will be at the core of our paper, so a few remarks are in order:

\paragraph{Basic properties}
The Boltzmann scheme \eqref{eq:BR} enjoys the following desirable properties:
\begin{enumerate}[(P1)]
\item
It is \emph{consistent}:
$x_{\path}(t)\geq0$ and $\sum_{\path\in\paths_{\source}} x_{\path}(t)=\rate_{\source}$ for all $t\geq0$, so $x(t)$ is a valid state variable.

\item
It is \emph{reinforcing}:
traffic elements tend to be routed along paths with lower prices.

\item
It is \emph{stateless}:
routing choices do not require knowledge of the network's state.

\item
It is \emph{distributed}:
each node only needs to monitor the price of the paths that start at said node.
\end{enumerate}

We should also note that \eqref{eq:BR} has fairly low requirements in terms of computational complexity.
Specifically, each node $\source\in\sources$ only needs to run a cheap update on $\abs{\paths_{\source}}$ variables;
since the set of paths being utilized by each source node is typically small (for instance, those with the minimum hop count), every node $\source\in\sources$ only needs to process at most $\bigoh(\sinks_{\source}) \leq \bigoh(\nodes)$ variables.
By comparison, the application of distributed flow-deviation methods \cite{BG92} would require each node to keep track of $\bigoh(\edges)$ link variables and then run at each step an $\bigoh(\nodes^{2})$ shortest path algorithm, leading to updates of significantly higher complexity.

\paragraph{Relation to the replicator dynamics}
If we differentiate the network's state variable $x$ in \eqref{eq:BR}, a straightforward calculation leads to the evolutionary dynamics
\begin{equation}
\label{eq:BR-primal}
\dot x_{\path}
	= -\temp x_{\path} \bracks*{\modcost_{\path}(x) - \insum_{\pathalt\sim\path} x_{\pathalt} \modcost_{\pathalt}(x)}
	+ \frac{\dot\temp}{\temp} x_{\path} \bracks*{\log x_{\path} - \insum_{\pathalt\sim\path} x_{\pathalt} \log x_{\pathalt}},
\end{equation}
where the summation is taken over all paths $\pathalt\sim\path$ with the same source as $\path$, and we have taken $\rate_{\source} = 1$ for simplicity.
Thus, in the baseline case $\temp = 1$, we obtain the dynamical system
\begin{equation}
\label{eq:RD}
\tag{RD}
\dot x_{\path}
	= -x_{\path} \bracks*{\modcost_{\path}(x) - \insum_{\pathalt\sim\path} x_{\pathalt} \modcost_{\pathalt}(x)},
\end{equation}
which is the classical (multi-population) \emph{replicator equation} of evolutionary game theory \cite{TJ78,San10}.
In this way, \eqref{eq:BR} can be seen as an extension of the replicator-based routing schemes of \cite{FV04,BEL06,KDB14,BK03}, the key difference being the extra, price-independent term of the dynamics \eqref{eq:BR-primal}.
As we shall see, this additional term will be crucial for the convergence properties of \eqref{eq:BR} under uncertainty \textendash\ an open problem posed by \cite{KDB14}.

\paragraph{On the inverse temperature $\temp$}
From an algorithmic viewpoint, the role of the inverse temperature parameter $\temp(t)$ in \eqref{eq:BR} is to act as an extrinsic weight that ``normalizes'' the paths' scores $\score(t)$.
In particular, for low $\temp$,
the Boltzmann distribution \eqref{eq:Boltzmann} tends to select paths uniformly;
by contrast,
for high $\temp$,
the induced choice probabilities ``freeze'' down to a hard best-response scheme which routes all traffic along the path with the lowest score.
Thus, to counterbalance the $\bigoh(t)$ growth of the variables $\score_{\path}(t)$,
we will assume throughout that $\temp(t)$ is \emph{nonincreasing} with decay rate slower than $1/t$:

\begin{assumption}
\label{asm:temp}
$\temp(t)$ is $C^{1}$-smooth, nonincreasing, and $\lim_{t\to\infty} \temp(t)t = \infty$.
\end{assumption}

In thermodynamic terms, Assumption \ref{asm:temp} means that the system is being \emph{heated} over time (instead of being cooled).
This comes in stark contrast with simulated annealing and log-linear learning \cite{Blu93} where the system begins at a high sampling temperature ($\temp\approx 0$) and subsequently freezes to very low temperatures ($\temp\to\infty$) to approach a state of least energy.
The reason for taking a heating schedule in \eqref{eq:BR} is that the energy levels $\score_{\path}(t)$ of the system at hand are not fixed (as in simulated annealing) but, instead, they grow over time.
As a result, if the paths' prices (whose aggregation determines the system's energy levels) are subject to randomness, freezing the system may lead it to quench prematurely to a suboptimal state;
we explore this issue in detail in Section \ref{sec:analysis}.

\subsection{Routing under uncertainty}
\label{sec:stochastic}

A key assumption underlying the (deterministic) Boltzmann routing scheme \eqref{eq:BR} is that prices are assumed immune to exogenous stochastic fluctuations.
In practice however, this assumption often fails:
background traffic fluctuations can be quite substantial due to burst-like user demands,
open market energy costs are highly volatile (for instance, at certain locations power may be supplied by renewable energy sources whose output depends on the weather),
whereas
load measurements are typically subject to errors due to delay \cite{Zhou2002_Study}, inaccurate routing information and noise \cite{Masip2003_Routing},
etc.

Starting from \eqref{eq:modcost-path}, we will model such disturbances via the random perturbation model
\begin{equation}
\label{eq:cost-stoch}
\tilde\modcost_{\path}(x;t)
	= \sum_{\edge\in\path} \parens[\big]{\modcost_{\edge}(w_{\edge}) + \xi_{\edge}(t)},
\end{equation}
where $\xi_{\edge}$ is a zero-mean stochastic process.%
\footnote{At the most basic level, $\xi_{\edge}$ can be assumed to be a simple \ac{AWGN} process.}
The score $\score_{\path}$ of path $\path$ will then follow the dynamics
\begin{equation}
\label{eq:BR-Langevin}
\begin{aligned}
\dot \score_{\path}
	&= \tilde\modcost_{\path}(x)
	= \modcost_{\path}(x) + \xi_{\path},
\end{aligned}
\end{equation}
where $\xi_{\path} = \sum_{\edge\in\alpha} \xi_{\edge} = \sum_{\edge\in\edges} \pmat_{\edge\path} \xi_{\edge}$ denotes the aggregate perturbation over path $\path$.
Hence, writing \eqref{eq:BR-Langevin} as an Itô (non-anticipative) \acl{SDE} \cite{Oks07,Kuo06,Dur96}, we obtain the \emph{stochastic Boltzmann routing} dynamics
\begin{equation}
\label{eq:SBR}
\tag{SBR}
\begin{aligned}
dY_{\path}
	&= \modcost_{\alpha}(X) \dd t + \dd Z_{\path},
	\\
X_{\path}
	&= \frac{\rate_{\source} \expof{-\temp Y_{\path}}}{\sum_{\pathalt\in\paths_{\source}} \expof{-\temp Y_{\pathalt}}}.
\end{aligned}
\end{equation}
where, to be consistent with \eqref{eq:BR-Langevin}, the \emph{path-noise process} $Z_{\path}$ is of the form
\begin{equation}
\label{eq:noise}
dZ_{\path}
	= \sum_{\edge\in\path} \noisedev_{\edge} \dd W_{\edge},
\end{equation}
where $W = (W_{\edge})_{\edge\in\edges}$ is a standard Wiener process (Brownian motion) in $\R^{\edges}$ and $\noisedev_{\edge} \equiv \noisedev_{\edge}(x,t)$ is the (possibly state- and time-dependent) \emph{volatility coefficient} of the fluctuations on edge $\edge\in\edges$.

The dynamical system \eqref{eq:SBR} will be our core stochastic model so we proceed with some explanatory remarks:

\paragraph{Assumptions on the noise}
The noise model \eqref{eq:noise} implies that fluctuations are independent across \emph{links}, but not across \emph{paths}.
Indeed, a simple calculation reveals the correlation structure
\begin{equation}
\label{eq:corr}
dZ_{\path} \cdot dZ_{\pathalt}
	= \sum_{\edge,\edgealt\in\edges} \noisedev_{\edge} \noisedev_{\edge'} \dd W_{\edge} \cdot dW_{\edgealt}
	= \sum_{\edge,\edgealt\in\edges} \noisedev_{\edge} \noisedev_{\edge'} \delta_{\edge\edgealt} \dd t
	= \sum_{\edge\in\path\cap\pathalt} \noisevar_{\edge} \dd t,
\end{equation}
i.e. \emph{fluctuations along two paths are correlated along their common edges}.%
\footnote{More generally, the edge processes $W_{\edge}$ could be themselves correlated along different edges;
we do not consider this for simplicity.}
With this in mind, it will be convenient to introduce the \emph{volatility matrix} $\covmat = (\noisevar_{\path\pathalt})_{\path,\pathalt\in\paths}^{\phantom{2}}$ defined as
\begin{equation}
\label{eq:covmat}
\noisevar_{\path\pathalt}
	\equiv \sum_{\edge\in\path\cap\pathalt} \noisevar_{\edge}
	= \sum_{\edge\in\edges} \pmat_{\edge\path} \pmat_{\edge\pathalt} \noisevar_{\edge}.
\end{equation}
By construction, the matrix $\covmat \equiv \covmat(x,t)$ describes the \emph{quadratic covariation} of the noise process $Z$ in the sense that $dZ_{\path} \cdot dZ_{\pathalt} = \covmat_{\path\pathalt} \dd t$ \cite{Kuo06,Oks07}.
On that account, our only assumption will be:

\begin{assumption}
\label{asm:noise}
The fluctuations' volatility matrix $\covmat$ is bounded:
\(
\sup_{x,t} \norm{\covmat(x,t)}
	\equiv \noisebound^{2}
	< \infty.
\)
\end{assumption}

\begin{remark}
We should clarify here that Assumption \ref{asm:noise} means that fluctuations are only bounded in \emph{mean square;}
at any given time $t\geq0$, the disturbances $dZ_{\edge}(t)$ could be arbitrarily large.
\end{remark}


\paragraph{The stochastic replicator dynamics}
Even though $X(t)$ is fully determined under \eqref{eq:SBR} for all $t\geq0$, this is an indirect description of the governing dynamics of $X(t)$.
To obtain an explicit description of these (stochastic) dynamics,
we can follow the same approach as in \eqref{eq:BR-primal};
however, because of the noise, we now have to employ the rules of (Itô) stochastic calculus \cite{Dur96,Kuo06,Oks07}.
This derivation is quite convoluted, so we only present here the end result (proven in Appendix \ref{app:basics}):

\begin{proposition}
\label{prop:evolution}
The solutions of \eqref{eq:SBR} satisfy the \acl{SDE}:
\begin{subequations}
\label{eq:SRD}
\begin{flalign}
dX_{\path}
	=
	&\label{eq:SRD-det}
	-\temp X_{\path}
	\left[
	\modcost_{\path}(X) - \insum_{\pathalt} X_{\pathalt}\,\modcost_{\pathalt}(X)
	\right] dt
	\\
	&\label{eq:SRD-noise}
	- \temp X_{\path}
	\left[
	dZ_{\path} - \insum_{\pathalt} X_{\pathalt} \dd Z_{\pathalt}
	\right]
	\\
	&\label{eq:SRD-temp}
	+\frac{\dot \temp}{\temp} X_{\path}
	\left[
	\log X_{\path} - \insum_{\pathalt} X_{\pathalt} \log X_{\pathalt}
	\right] dt
	\\
	&\label{eq:SRD-Ito}
	+\frac{\temp^{2}}{2} X_{\path}
	\left[
	\insum_{\pathalt,\gamma} (\delta_{\path\gamma} - X_{\gamma}) \, (\delta_{\pathalt\gamma} - 2 X_{\pathalt}) \, \noisevar_{\pathalt\gamma}
	\right] dt,
\end{flalign}
\end{subequations}
where
all sums are taken over the paths $\pathalt,\gamma\sim\path$ with the same starting point as $\path$,
and
all traffic generation rates have been taken equal to $\rate_{\source} = 1$ for simplicity.
\end{proposition}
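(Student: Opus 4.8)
The plan is to derive \eqref{eq:SRD} by applying It\^o's formula to the smooth logit map $Y\mapsto X$ given by the second line of \eqref{eq:SBR}. Given a solution $(X,Y)$ of \eqref{eq:SBR}, the scores $Y_{\path}$ are continuous semimartingales with $dY_{\path}=\modcost_{\path}(X)\,\dd t+dZ_{\path}$, and since $\temp(t)$ is $C^{1}$ (hence of bounded variation) each state coordinate $X_{\path}$ is again a semimartingale to which the It\^o chain rule applies. The most economical route passes through the logarithm: writing $\Lambda_{\source}=\log\sum_{\pathalt\in\paths_{\source}}\expof{-\temp Y_{\pathalt}}$ for the log-partition function attached to source $\source$, one has $\log X_{\path}=-\temp Y_{\path}-\Lambda_{\source}$, that is, $X_{\path}=\expof{U_{\path}}$ with $U_{\path}=-\temp Y_{\path}-\Lambda_{\source}$, so that It\^o's formula for $\exp$ gives $dX_{\path}=X_{\path}\,dU_{\path}+\tfrac12 X_{\path}\,dU_{\path}\cdot dU_{\path}$.

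First I would assemble the two ingredients $dU_{\path}$ and $dU_{\path}\cdot dU_{\path}$. The product rule gives $d(\temp Y_{\path})=\dot\temp\,Y_{\path}\,\dd t+\temp\,dY_{\path}$ (there is no cross term, as $\temp$ is of bounded variation), and for $d\Lambda_{\source}$ I would apply It\^o's formula to the logarithm of the partition sum $\Sigma_{\source}=\sum_{\pathalt}\expof{-\temp Y_{\pathalt}}$, using the identities $\expof{-\temp Y_{\pathalt}}/\Sigma_{\source}=X_{\pathalt}$ together with the quadratic covariation $dY_{\pathalt}\cdot dY_{\gamma}=dZ_{\pathalt}\cdot dZ_{\gamma}=\noisevar_{\pathalt\gamma}\,\dd t$ of \eqref{eq:corr}--\eqref{eq:covmat} (the step at which the link-level noise model \eqref{eq:noise} feeds in). Collecting terms, $dU_{\path}$ decomposes into a $\modcost$-drift, a $dZ$-martingale part, a $\dot\temp$-drift, and a second-order drift descending from $d\Lambda_{\source}$; moreover the martingale part of $U_{\path}$ is $-\temp\bigl(dZ_{\path}-\sum_{\pathalt}X_{\pathalt}\,dZ_{\pathalt}\bigr)$, whence $dU_{\path}\cdot dU_{\path}=\temp^{2}\bigl[\noisevar_{\path\path}-2\sum_{\pathalt}X_{\pathalt}\noisevar_{\path\pathalt}+\sum_{\pathalt,\gamma}X_{\pathalt}X_{\gamma}\noisevar_{\pathalt\gamma}\bigr]\dd t$.

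Then I would read off the four lines of \eqref{eq:SRD} from $dX_{\path}=X_{\path}\,dU_{\path}+\tfrac12 X_{\path}\,dU_{\path}\cdot dU_{\path}$. The $\modcost$-drift of $X_{\path}\,dU_{\path}$ is $-\temp X_{\path}\bigl[\modcost_{\path}(X)-\sum_{\pathalt}X_{\pathalt}\modcost_{\pathalt}(X)\bigr]\,\dd t$, which is \eqref{eq:SRD-det}, and its $dZ$-martingale part is \eqref{eq:SRD-noise}. For \eqref{eq:SRD-temp}, the $\dot\temp$-drift of $X_{\path}\,dU_{\path}$ equals $-\dot\temp\,X_{\path}\bigl[Y_{\path}-\sum_{\pathalt}X_{\pathalt}Y_{\pathalt}\bigr]\,\dd t$; substituting $Y_{\path}=-(\log X_{\path}+\Lambda_{\source})/\temp$ and using $\sum_{\pathalt}X_{\pathalt}=1$ to cancel the $\Lambda_{\source}$ terms turns this into $(\dot\temp/\temp)X_{\path}\bigl[\log X_{\path}-\sum_{\pathalt}X_{\pathalt}\log X_{\pathalt}\bigr]\,\dd t$. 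Finally, \eqref{eq:SRD-Ito} is obtained by adding the two second-order drifts --- the one descending from $d\Lambda_{\source}$ (itself a sum of a $\sum_{\pathalt}X_{\pathalt}\noisevar_{\pathalt\pathalt}$ contribution from $d\Sigma_{\source}$ and a $\sum_{\pathalt,\gamma}X_{\pathalt}X_{\gamma}\noisevar_{\pathalt\gamma}$ contribution from $d\Sigma_{\source}\cdot d\Sigma_{\source}$) and the one $\tfrac12 X_{\path}\,dU_{\path}\cdot dU_{\path}$ produced by the It\^o formula for $\exp$ --- and then checking, by expanding $(\delta_{\path\gamma}-X_{\gamma})(\delta_{\pathalt\gamma}-2X_{\pathalt})$ and contracting it against $\noisevar_{\pathalt\gamma}$, that the right-hand side of \eqref{eq:SRD-Ito} reproduces this combination exactly.

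I expect the only delicate point to be the bookkeeping of these second-order terms: one must carry along simultaneously the correction generated by $\log$ applied to $\Sigma_{\source}$ and the correction generated by $\exp$ applied to $U_{\path}$, and verify that after invoking $\sum_{\pathalt}X_{\pathalt}=1$ and the symmetry of $\noisevar_{\pathalt\gamma}$ they collapse into the single symmetric bilinear form on the right of \eqref{eq:SRD-Ito}; everything else is a direct, if lengthy, application of the It\^o product and chain rules. As a byproduct, the derivation makes transparent why \eqref{eq:SRD} differs from the deterministic replicator dynamics \eqref{eq:RD} precisely by the two terms \eqref{eq:SRD-temp} and \eqref{eq:SRD-Ito}: the first is the price of letting the inverse temperature vary in time, the second is the It\^o correction forced by the noise.
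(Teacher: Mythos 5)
Your proposal is correct and follows essentially the same route as the paper: a direct application of It\^o's formula to the logit map $Y\mapsto X$ in \eqref{eq:SBR}, followed by term-by-term bookkeeping (the paper organizes the computation via the first and second derivatives of the Gibbs map $\gibbs$ in \eqref{eq:G-diff} rather than passing through the log-partition function, but this is only a difference in how the chain rule is unrolled). Your final consistency check on \eqref{eq:SRD-Ito} \textendash\ expanding $(\delta_{\path\gamma}-X_{\gamma})(\delta_{\pathalt\gamma}-2X_{\pathalt})$ and matching it against the sum of the $\log$- and $\exp$-corrections \textendash\ does work out exactly.
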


Despite the complex appearance of \eqref{eq:SRD}, each of the constituent terms admits a relatively straightforward interpretation:
\begin{enumerate}
[\itshape a\upshape)]
\item
The term \eqref{eq:SRD-det} drives the process in the baseline, deterministic case $\noisedev = 0$, $\temp = \textrm{constant}$;
as such, it coincides with the deterministic replicator dynamics \eqref{eq:RD}.
\item
The martingale term \eqref{eq:SRD-noise} reflects the direct impact of the noise on the evolution of $X(t)$.
\item
The term \eqref{eq:SRD-temp} is due to the temporal variation of the inverse temperature parameter $\temp(t)$ and plays the same role as in \eqref{eq:BR-primal}.
\item
Finally, the term \eqref{eq:SRD-Ito} is the Itô correction induced by the non-anticipative nature of the Itô integral \cite{Kuo06,Oks07}.
This term is \emph{price-independent} and does not depend on $\dot\temp$, so it does not vanish for constant $\temp$;
also, the correlation structure of the noise process $Z$ appears explicitly in \eqref{eq:SRD-Ito} via the volatility matrix $\covmat$ (cf. the remarks preceding Assumption \ref{asm:noise}).
\end{enumerate}

Apart from a vaguely similar structure, there is no overlap between \eqref{eq:SRD} and the stochastic replicator dynamics with ``aggregate shocks'' that have been studied extensively in evolutionary biology \cite{FH92,Cab00,Imh05,HI09,MV16}.
The exponential learning approach of \cite{MM10,BM14} is much closer in spirit but it does not account for the nonlinear nature of the cost functions $\modcost_{\edge}$ and the correlation structure of the noise processes $Z$ along paths;
we explore these connections in more detail below.

\section{Convergence Analysis}
\label{sec:analysis}

Our aim in this section will be to analyze the long-term convergence properties of the proposed Boltzmann routing scheme in the presence of noise and uncertainty.
To establish a baseline, we begin with the scheme's convergence properties in a noiseless, deterministic setting:

\begin{theorem}
\label{thm:conv-det}
Under \eqref{eq:BR}, $x(t)$ converges to a solution of the global minimization problem \eqref{eq:CM}.
\end{theorem}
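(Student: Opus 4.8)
The plan is to read \eqref{eq:BR} as a continuous-time ``lazy mirror descent'' (dual averaging) scheme for the convex program \eqref{eq:CM}: under the Pigouvian choice $\modcost_{\edge}=\pd_{-}\cost_{\edge}$ of Proposition~\ref{prop:Pigou} the score $\score$ integrates the (sub)gradient of $\obj$, since a direct computation gives $\pd\obj/\pd x_{\path}=\sum_{\edge\in\path}\modcost_{\edge}(\load_{\edge})=\modcost_{\path}(x)$, while $x$ is recovered from $-\temp(t)\score$ through the entropic (logit) mirror map. I would first record well-posedness: the map $\score\mapsto x$ is smooth, so \eqref{eq:BR} admits a solution, and by property (P1) it stays in the compact set $\feas$ for all $t\geq0$; hence $\dot x$ is bounded and $t\mapsto x(t)$ is globally Lipschitz. (If the $\cost_{\edge}$ are merely convex, so each $\modcost_{\edge}$ is only monotone, one works with Carath\'eodory solutions; this is harmless since only convexity of $\obj$ and the subgradient inequality are used below.)

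The core is a Lyapunov estimate. Fix any solution $\eq\in\eqset$ of \eqref{eq:CM} (equivalently a \acl{NE}; nonempty, convex, compact by Proposition~\ref{prop:structure}), write $\obj^{\ast}=\min_{\feas}\obj$, and consider the \acl{KL} Fenchel coupling
\[
 F(t)=\dkl\bigl(\eq\,\big\|\,x(t)\bigr)=\sum_{\source}\sum_{\path\in\paths_{\source}}\eq_{\path}\log\frac{\eq_{\path}}{x_{\path}(t)}\;\geq\;0 .
\]
Differentiating and substituting $\dot\score=\modcost(x)$ together with the logit identity $\temp\score_{\path}=-\log x_{\path}-\log\bigl(\sum_{\pathalt\sim\path}e^{-\temp\score_{\pathalt}}\bigr)$ gives, after simplification,
\[
 \dot F=\frac{\dot\temp}{\temp}\bigl[F+h(x)-h(\eq)\bigr]-\temp\,\inner{\modcost(x)}{x-\eq},
\]
where $h(x)=\sum_{\source}\sum_{\path\in\paths_{\source}}x_{\path}\log x_{\path}$ is the (bounded) negative entropy. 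Since $\obj$ is convex with minimiser $\eq$ over $\feas$ we have $\inner{\modcost(x)}{x-\eq}=\inner{\grad\obj(x)}{x-\eq}\geq\obj(x)-\obj^{\ast}\geq0$ (indeed it equals $\obj(x)-\obj^{\ast}$ plus a nonnegative Bregman term $D_{\obj}(\eq,x)$); combining with $\dot\temp\leq0$ (Assumption~\ref{asm:temp}) and $|h(x)-h(\eq)|\leq H$ for a finite constant $H$, the identity rearranges into
\[
 \frac{d}{dt}\!\left[\frac{F(t)-H}{\temp(t)}\right]\;\leq\;-\bigl(\obj(x(t))-\obj^{\ast}\bigr)-D_{\obj}(\eq,x(t))\;\leq\;0 ,
\]
so the normalised energy $G(t):=\bigl(F(t)-H\bigr)/\temp(t)$ is nonincreasing.

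From here I would conclude in two steps. Integrating the last display and using $F\geq0$ gives $\int_{0}^{T}\bigl(\obj(x(t))-\obj^{\ast}\bigr)\,dt\leq G(0)+H/\temp(T)$; dividing by $T$ and invoking $\temp(t)\,t\to\infty$ (Assumption~\ref{asm:temp}) shows $\tfrac1T\int_0^T\bigl(\obj(x(t))-\obj^{\ast}\bigr)\,dt\to0$, and when $\temp$ stays bounded away from $0$ the integral $\int_0^\infty\bigl(\obj(x(t))-\obj^{\ast}\bigr)\,dt$ is in fact finite, so $G$ converges; since $t\mapsto\obj(x(t))$ is Lipschitz and $\obj(x(t))\geq\obj^{\ast}$, this forces $\obj(x(t))\to\obj^{\ast}$, whence $\dist(x(t),\eqset)\to0$ by compactness of $\feas$ and continuity of $\obj$. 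To promote set-convergence to convergence to a single point, note that the estimate above makes $F(t)=\dkl(\eq\,\|\,x(t))$ convergent for \emph{every} $\eq\in\eqset$; since $\dkl$ separates points and $x(t)\in\intr(\feas)$ for all $t$, an Opial/Fej\'er-type argument then forces the $\omega$-limit set of $x(\cdot)$ to be a singleton $\{\eq\}\subseteq\eqset$, and $F(t)\to0$ gives $x(t)\to\eq$. (An alternative for this last step is to rescale time by $d\tau=\temp\,dt$\textemdash note $\tau(\infty)=\infty$ by Assumption~\ref{asm:temp}\textemdash and recognise the rescaled trajectory as an asymptotic pseudotrajectory of the replicator dynamics \eqref{eq:RD}, for which $\obj$ is a Lyapunov function; internal chain transitivity of its $\omega$-limit set then pins it inside $\eqset$.)

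The step I expect to be the real obstacle is the non-autonomous inverse temperature $\temp(t)$. Normalising by $\temp$ (using $G=(F-H)/\temp$ rather than $F$ itself) is exactly what preserves a monotone energy despite the $\tfrac{\dot\temp}{\temp}$ drift; but when $\temp(t)\downarrow0$ the monotonicity of $G$ only delivers the Ces\`aro statement on $\obj(x(t))-\obj^{\ast}$ ``for free'', and ruling out ever-sparser excursions away from $\eqset$ (which a vanishing Ces\`aro average alone permits), together with controlling $F(t)$ as $x(t)$ approaches $\bd\feas$, takes extra work\textemdash e.g. exploiting that $G(t)\leq G(0)$ already keeps $F$ bounded and feeding this into the pseudotrajectory picture, or strengthening Assumption~\ref{asm:temp} by a slow-variation condition on $\temp$. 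Everything else\textemdash well-posedness, the energy identity, and the final Opial step\textemdash is routine once this point is settled.
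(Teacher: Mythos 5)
Your proposal follows the same route as the paper: the Lyapunov function is the Fenchel coupling $\fench(\eq,\cdot)$, which by \eqref{eq:Fench-KL} is exactly your $F(t)=\dkl(\eq,x(t))$; its decrease along \eqref{eq:BR} comes from convexity of $\obj$ via $\inner{\modcost(x)}{x-\eq}\geq\obj(x)-\obj^{\ast}\geq0$; and convergence to a \emph{single} point is obtained by the same Opial-type step (two $\omega$-limit points in $\eqset$ would each force the corresponding coupling to vanish in the limit, which is impossible unless they coincide). Your exact evolution equation $\dot F=\tfrac{\dot\temp}{\temp}\bigl[F+h(x)-h(\eq)\bigr]-\temp\inner{\modcost(x)}{x-\eq}$ is correct, and it is in fact \emph{more} careful than the paper's one-line assertion that the coupling is decreasing: the extra term equals $\tfrac{\dot\temp}{\temp}\sum_{\path}(x_{\path}-\eq_{\path})\log x_{\path}$ and can be strictly positive for a decreasing schedule (e.g.\ $x$ closer to uniform than $\eq$), so monotonicity of the raw coupling is immediate only for constant $\temp$. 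For constant (or bounded-below) $\temp$ your argument is complete: the suboptimality gap is integrable, $t\mapsto\obj(x(t))$ is Lipschitz, hence $\obj(x(t))\to\obj^{\ast}$ and $\dist(x(t),\eqset)\to0$, and the Fej\'er/Opial step finishes. This matches what the paper's own proof actually establishes.

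The gap you flag for $\temp(t)\downarrow0$ is therefore genuine but concerns the full generality of Assumption~\ref{asm:temp}, not the core of the theorem. Concretely, your monotone quantity $G=(F-H)/\temp$ is no longer bounded below when $\temp\to0$, so you lose both the convergence of $F(t)$ (needed for Opial, which requires $\dkl(\eq,x(t))$ to have a limit for each $\eq\in\eqset$) and the summability of $\obj(x(t))-\obj^{\ast}$; a vanishing Ces\`aro average indeed does not exclude sparse excursions. Two remarks toward closing it: first, $G(t)\leq G(0)$ rearranges to $F(t)\leq H+\temp(t)\max\{G(0),0\}$, so $F$ stays bounded and the orbit cannot approach faces of $\feas$ excluding $\supp(\eq)$, which legitimizes the time change $d\tau=\temp\,dt$ (note $\tau(\infty)=\infty$ since $\int_{0}^{t}\temp\geq t\,\temp(t)\to\infty$); second, in rescaled time the perturbation of the replicator field is of order $\dot\temp/\temp^{2}$, and Assumption~\ref{asm:temp} alone does not force this to vanish, so your suggestion of an additional slow-variation hypothesis (or simply restricting Theorem~\ref{thm:conv-det} to constant $\temp$, deferring decreasing schedules to the stochastic Theorem~\ref{thm:averages} where they are actually needed) is the honest resolution. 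In short: same method as the paper, executed more scrupulously, with a residual issue that the paper itself does not address.
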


Given the connection of \eqref{eq:BR} to the replicator dynamics, Theorem \ref{thm:conv-det} (proven in Appendix~\ref{app:deterministic}) represents a strengthening of known convergence results for potential \cite{San01} and stable games \cite{HS09}.
Specifically, \cite{San01} showed that the replicator dynamics \eqref{eq:RD} converge to the set of rest points in potential games;
however, since the rest points of \eqref{eq:RD} are not necessarily solutions of \eqref{eq:CM}, we cannot use the analysis of \cite{San01,San10} to deduce the convergence of \eqref{eq:BR} to a globally efficient state.
More recently, \cite{FV04} established the convergence of \eqref{eq:RD} to the game's equilibrium set, but under the implicit assumption that the network's routes are linearly independent (an assumption that fails automatically if $\abs{\paths} > \abs{\edges}$ because $\pmat$ fails to be invertible on $\strat$ in this case).
Instead, Theorem \ref{thm:conv-det} dispenses with all such assumptions and provides an unconditional convergence result for \eqref{eq:BR}.

This behavior of \eqref{eq:BR} is fairly encouraging, but it is contingent on the absence of fluctuations and uncertainty.
If the network is constantly subject to stochastic disturbances, there is no reason to expect that this convergence still holds beyond the ``small noise'' regime.
Nevertheless, as we show below, the proposed Boltzmann routing scheme remains exceptionally robust in the presence of noise:
under \eqref{eq:SBR}, the long-term average of the total network consumption is minimized, \emph{irrespective of the level of uncertainty}.
Formally:

\begin{theorem}
\label{thm:averages}
Assume that \eqref{eq:SBR} is run with a variable parameter $\temp(t)$ satisfying Assumption \ref{asm:temp}.
Then, the long-term average $\bar\obj(t) = t^{-1} \int_{0}^{t} \obj(X(s)) \dd s$ of the network's total consumption enjoys the bound:
\begin{equation}
\label{eq:cost-bound}
\bar\obj(t)
	\leq \obj^{\ast}
	+ \frac{\sum_{\source}\log\nPaths_{\source}}{\temp(t) t}
	+ \frac{\noisebound^{2}}{2 t} \int_{0}^{t} \temp(s) \dd s
	+ 2 \noisebound^{2} \sqrt{\frac{\log\log t}{t}}
	+ \bigoh(1/t)
	\quad
	\as
\end{equation}
where $\obj^{\ast}$ is the minimum value of \eqref{eq:CM} and $\nPaths_{\source} = \abs{\paths_{\source}}$ is the number of paths utilized by source $\source\in\sources$.
In particular, if $\temp(t)\to0$, we have $\lim_{t\to\infty} \bar\obj(t) = \obj^{\ast}$ \as.
\end{theorem}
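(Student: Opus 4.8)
The plan is to turn the claim into a ``regret bound'' for the learning process $X(t)$ and to control it through an energy (Lyapunov) argument. Fix a solution $\eq$ of \eqref{eq:CM} (which exists by Proposition~\ref{prop:structure}). Since $\obj$ is convex (Assumption~\ref{asm:cost}) and, under the Pigouvian pricing of Proposition~\ref{prop:Pigou}, the price vector $(\modcost_{\path}(X))_{\path\in\paths}$ is a subgradient of $\obj$ at $X$, we get the pointwise estimate $\obj(X(s)) - \obj^{\ast} \le \sum_{\path}\modcost_{\path}(X(s))\,(X_{\path}(s) - \eq_{\path})$. Integrating over $[0,t]$ and dividing by $t$, it is therefore enough to bound $t^{-1}\int_{0}^{t}\sum_{\path}\modcost_{\path}(X)\,(X_{\path}-\eq_{\path})\,ds$, i.e.\ the time-averaged ``regret'' of $X$ against the benchmark $\eq$.

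For the energy function I would take the $\temp$-scaled relative entropy $E(t) = \sum_{\source}\temp(t)^{-1}\,\dkl(\eq_{\source}\,\|\,X_{\source}(t)) \ge 0$ and, using the softmax form of $X_{\source}$ in \eqref{eq:SBR}, rewrite each summand as $\temp^{-1}\big[\Phi_{\source}(-\temp Y_{\source}) + \temp\langle\eq_{\source},Y_{\source}\rangle + \Phi_{\source}^{\ast}(\eq_{\source})\big]$, where $\Phi_{\source}(u)=\log\sum_{\path\in\paths_{\source}}e^{u_{\path}}$ is the log-partition function and $\Phi_{\source}^{\ast}$ its convex conjugate (the negative entropy on $\feas_{\source}$, so $-\log\nPaths_{\source}\le\Phi_{\source}^{\ast}\le 0$). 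Applying Itô's formula along \eqref{eq:SBR} — legitimate since $\temp$ is $C^{1}$ (Assumption~\ref{asm:temp}) and $\Phi_{\source}$ is smooth with Hessian $\diag(X_{\source})-X_{\source}X_{\source}^{\top}$ at $-\temp Y_{\source}$ — and using the Fenchel--Young equality $\Phi_{\source}(-\temp Y_{\source}) + \Phi_{\source}^{\ast}(X_{\source}) = -\temp\langle X_{\source},Y_{\source}\rangle$ to collapse the two terms generated by $\dot\temp$, I expect the differential
\[
dE = -\frac{\dot\temp}{\temp^{2}}\sum_{\source}\big[\Phi_{\source}^{\ast}(\eq_{\source})-\Phi_{\source}^{\ast}(X_{\source})\big]dt \;-\; \sum_{\path}\modcost_{\path}(X)\,(X_{\path}-\eq_{\path})\,dt \;+\; \frac{\temp}{2}\,V(X)\,dt \;-\; \sum_{\path}(X_{\path}-\eq_{\path})\,dZ_{\path},
\]
where $V(X) = \sum_{\source}\big[\sum_{\path\in\paths_{\source}}X_{\path}\covmat_{\path\path} - \sum_{\path,\pathalt\in\paths_{\source}}X_{\path}X_{\pathalt}\covmat_{\path\pathalt}\big]\ge 0$ is the Itô correction coming from the diffusion term of \eqref{eq:SBR}.

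Solving this for the regret integrand, integrating on $[0,t]$ and dividing by $t$, I would then bound the five resulting contributions: (i)~$-t^{-1}E(t)\le 0$ is discarded; (ii)~$t^{-1}E(0) = \bigoh(1/t)$, since every $X_{\source}(0)$ has full support so $E(0)<\infty$; (iii)~the $\dot\temp$-term is handled by $|\Phi_{\source}^{\ast}(\eq_{\source})-\Phi_{\source}^{\ast}(X_{\source})|\le\log\nPaths_{\source}$ together with $\int_{0}^{t}|\dot\temp|/\temp^{2}\,ds = 1/\temp(t)-1/\temp(0)\le 1/\temp(t)$ (using $\dot\temp\le 0$), which produces the term $\sum_{\source}\log\nPaths_{\source}/(\temp(t)t)$; (iv)~the Itô term gives $\tfrac{\noisebound^{2}}{2t}\int_{0}^{t}\temp(s)\,ds$, using $\covmat_{\path\path}\le\norm{\covmat}\le\noisebound^{2}$ (Assumption~\ref{asm:noise}) to bound $V(X)\le\noisebound^{2}$; and (v)~the stochastic integral $M(t) = -\int_{0}^{t}\sum_{\path}(X_{\path}-\eq_{\path})\,dZ_{\path}$ is a continuous (local) martingale with quadratic variation $[M](t)=\int_{0}^{t}(X-\eq)^{\top}\covmat(X-\eq)\,ds = \bigoh(t)$ by Assumption~\ref{asm:noise} and compactness of $\feas$, so the law of the iterated logarithm yields $M(t)/t = \bigoh(\sqrt{\log\log t/t})$ \as (and if $[M](t)$ stays bounded, $M(t)$ converges \as and this term is even smaller). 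Collecting these gives \eqref{eq:cost-bound}. For the final assertion: $X(s)\in\feas$ for all $s$ by consistency of \eqref{eq:SBR}, so $\bar\obj(t)\ge\obj^{\ast}$; and if $\temp(t)\to 0$ then $1/(\temp(t)t)\to 0$ by Assumption~\ref{asm:temp}, $t^{-1}\int_{0}^{t}\temp(s)\,ds\to 0$ by Cesàro convergence (as $\temp$ is nonincreasing to $0$), and the other terms vanish \as, so $\bar\obj(t)\to\obj^{\ast}$ \as.

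The step I expect to be the main obstacle is the Itô computation for $dE$ — and, inside it, the cancellation used in (iii). A priori the two terms produced by the time-variation of $\temp$ involve the relative entropy $\dkl(\eq_{\source}\|X_{\source})$ and the score vector $Y_{\source}$, either of which can diverge as $X_{\source}$ approaches $\partial\feas_{\source}$; only after combining them via Fenchel--Young do they reduce to the \emph{bounded} entropy difference $\Phi_{\source}^{\ast}(\eq_{\source})-\Phi_{\source}^{\ast}(X_{\source})\in[-\log\nPaths_{\source},\log\nPaths_{\source}]$, which is precisely what makes the $\bigoh(1/(\temp(t)t))$ term — rather than a divergent one — appear in the bound. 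A secondary point needing care is checking that the Itô correction $V(X)$, hence the aggregate effect of the (possibly arbitrarily large) instantaneous fluctuations, is controlled purely through the mean-square bound $\noisebound^{2}$ of Assumption~\ref{asm:noise}.
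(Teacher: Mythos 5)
Your proof is correct and is essentially the paper's own argument: your energy function $E(t)=\temp(t)^{-1}\sum_{\source}\dkl(\eq_{\source}\,\|\,X_{\source}(t))$ coincides with the paper's $\temp$-deflated Fenchel coupling $H(t)=\temp^{-1}\fench(\eq,\temp Y(t))$ via the identity $\fench(x,y)=\dkl(x,\gibbs(y))$, your Itô expansion of $dE$ (including the Fenchel--Young cancellation that turns the $\dot\temp$-terms into the bounded entropy difference) reproduces the paper's Eq.~\eqref{eq:dH2}, and the four term-by-term bounds (initial condition, $\dot\temp$-integral, Itô correction, law of the iterated logarithm for the martingale) match the paper's exactly. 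The only cosmetic difference is that you invoke the convexity/subgradient inequality at the outset rather than at the end.
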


The proof of Theorem \ref{thm:averages} (presented in detail in Appendix \ref{app:averages}) hinges on the so-called
\acdef{KL} divergence \cite{KL51,CT91,MS16}, defined here as
\begin{equation}
\label{eq:KL}
\dkl(\eq,x)
	= \insum_{\source\in\sources} \insum_{\path\in\paths_{\source}} \eq_{\alpha} \log (\eq_{\alpha} / x_{\alpha}),
	\quad
	\eq,x\in\feas.
\end{equation}
The \ac{KL} divergence provides an asymmetric measure of the information-theoretic distance between $\eq$ and $x$.
Using this, we express the average global consumption $\bar\obj(t)$ in terms of this distance, and we then use Itô's lemma \cite{Oks07,Kuo06,Dur96} to bound it by a vanishing function of $t$.
In so doing, we obtain the three main components of the convergence rate estimate \eqref{eq:cost-bound}:
the first represents the convergence rate of the noiseless process \eqref{eq:BR},
the second is due to the Itô correction \eqref{eq:SRD-Ito},
while the third one stems from the law of the iterated logarithm \cite{Dur96}.

Importantly, the second term of \eqref{eq:cost-bound} does not vanish for constant $\temp$, explaining the requirement $\lim_{t\to\infty} \temp(t) = 0$.
On the other hand, this requirement can be dropped if there is no noise:
for $\noisebound = 0$ and $\temp = \text{constant}$, the RHS of \eqref{eq:cost-bound} reduces to $\bigoh(1/t)$.
In addition, the form of \eqref{eq:cost-bound} also highlights a trade-off between more aggressive (slowly decaying) schedules for $\temp$ and the underlying uncertainty.
Specifically, the first (deterministic) term of \eqref{eq:cost-bound} is decreasing in $\temp$ while the second (stochastic) one increases with $\temp$.
Hence, in the absence of noise, it is better to use a large, constant $\temp$ instead of letting $\temp(t)\to0$;
however, a constant $\temp$ may be detrimental under uncertainty.

These considerations can be illustrated by choosing a specific heating schedule of the form $\temp(t) \sim 1/t^{a}$ for some $a\in(0,1)$;
in this case, we obtain the explicit convergence rate:
\begin{corollary}
\label{cor:averages}
Assume that \eqref{eq:SBR} is run with $\temp(t) \sim 1/t^{a}$ for some $a\in(0,1)$.
Then:
\begin{equation}
\label{eq:rates}
\bar\obj(t)
	= \obj^{\ast}
	+ \begin{cases}
		\bigoh(1/t^{a})
			&\quad
			\text{if $0<a<\frac{1}{2}$},
			\\
		\bigoh\parens*{\sqrt{\log \log t / t}}
			&\quad
			\text{if $a = 1/2$},
			\\
		\bigoh(1/t^{1-a})
			&\quad
			\text{if $\tfrac{1}{2} < a <1$}.
		\end{cases}
\end{equation}
\end{corollary}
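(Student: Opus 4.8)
The plan is to obtain \eqref{eq:rates} directly from the general bound \eqref{eq:cost-bound} of Theorem~\ref{thm:averages} by substituting the heating schedule $\temp(t) \sim 1/t^{a}$ and tracking which of the error terms dominates. First I would note that $\temp(t) \sim 1/t^{a}$ with $a\in(0,1)$ does satisfy Assumption~\ref{asm:temp} (it is $C^{1}$, eventually nonincreasing, and $\temp(t)\,t \sim t^{1-a}\to\infty$), so Theorem~\ref{thm:averages} applies. Moreover, since $\obj(X(s)) \geq \obj^{\ast}$ for every $s$ (because $X(s)\in\feas$ and $\obj^{\ast}$ is the minimum of $\obj$ over $\feas$), we have $\bar\obj(t) \geq \obj^{\ast}$ for all $t$; together with \eqref{eq:cost-bound} this traps $\bar\obj(t) - \obj^{\ast}$ between $0$ and the sum of the four error terms of \eqref{eq:cost-bound}, so it suffices to estimate each of those terms.

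Next I would handle the two $\temp$-dependent terms. Writing $\temp(t)\sim 1/t^{a}$ as $c_{1} t^{-a} \leq \temp(t) \leq c_{2} t^{-a}$ for all large $t$ (with $c_{1},c_{2}>0$), the first term of \eqref{eq:cost-bound} obeys $\tfrac{\sum_{\source}\log\nPaths_{\source}}{\temp(t)\,t} \leq c_{1}^{-1}\,(\sum_{\source}\log\nPaths_{\source})\,t^{a-1} = \bigoh(t^{-(1-a)})$. For the integral term, I would split $\int_{0}^{t}\temp(s)\,ds = \int_{0}^{T}\temp(s)\,ds + \int_{T}^{t}\temp(s)\,ds$: the first piece is a finite constant since $\temp$ is $C^{1}$, hence bounded on $[0,T]$, and the second is at most $c_{2}\int_{T}^{t} s^{-a}\,ds = \bigoh(t^{1-a})$ because $a<1$. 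Dividing by $t$ gives $\tfrac{\noisebound^{2}}{2t}\int_{0}^{t}\temp(s)\,ds = \bigoh(t^{-a}) + \bigoh(1/t) = \bigoh(t^{-a})$. The remaining two terms of \eqref{eq:cost-bound}, namely $2\noisebound^{2}\sqrt{\log\log t / t}$ and $\bigoh(1/t)$, are already in explicit form.

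It then remains to compare the orders $t^{-(1-a)}$, $t^{-a}$, $t^{-1/2}\sqrt{\log\log t}$ and $t^{-1}$ and retain the slowest-decaying one in each regime. Since $a\in(0,1)$, the $\bigoh(1/t)$ contribution is always dominated by $t^{-a}$ and by $t^{-1/2}\sqrt{\log\log t}$, so it never matters. For $0<a<\tfrac{1}{2}$ one has $t^{-a} \gg t^{-1/2}\sqrt{\log\log t} \gg t^{-(1-a)}$ (the ratios being $t^{1/2-a}/\sqrt{\log\log t}$ and $t^{1/2-a}\sqrt{\log\log t}$, both $\to\infty$), hence the bound is $\bigoh(1/t^{a})$. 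For $a=\tfrac{1}{2}$ the first and second terms are both of order $t^{-1/2}$ and are dominated by $t^{-1/2}\sqrt{\log\log t}$, giving $\bigoh(\sqrt{\log\log t/t})$. For $\tfrac{1}{2}<a<1$ one has $t^{-(1-a)} \gg t^{-1/2}\sqrt{\log\log t} \gg t^{-a}$ (the relevant ratios are $t^{a-1/2}/\sqrt{\log\log t}$ and $t^{a-1/2}\sqrt{\log\log t}$, both $\to\infty$ since $a>\tfrac{1}{2}$), hence the bound is $\bigoh(1/t^{1-a})$. Assembling the three cases yields \eqref{eq:rates}. There is no genuine obstacle here: the corollary is routine asymptotic bookkeeping on top of Theorem~\ref{thm:averages}; the only point demanding a little care is the term $\tfrac{1}{t}\int_{0}^{t}\temp(s)\,ds$, where one must separate the bounded (hence $\bigoh(1/t)$) contribution near $s=0$ from the tail that behaves like $\tfrac{1}{1-a}t^{1-a}$, which is precisely where the hypothesis $a<1$ enters.
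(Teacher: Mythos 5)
Your proposal is correct and follows essentially the same route as the paper: substitute $\temp(t)\sim t^{-a}$ into the bound \eqref{eq:cost-bound} of Theorem~\ref{thm:averages}, observe that the first two error terms are $\bigoh(t^{a-1})$ and $\bigoh(t^{-a})$ respectively, and keep the dominant term $t^{\max\{a-1,-a\}}$ (or the $\sqrt{\log\log t/t}$ term when $a=1/2$). The paper's proof is just a terser version of the same bookkeeping; your extra care with the integral near $s=0$ and the verification of Assumption~\ref{asm:temp} are harmless additions.
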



\begin{IEEEproof}
Simply note that $t^{\max\{a-1,-a\}}$ is the dominant term in \eqref{eq:cost-bound} for all $a\neq 1/2$.
Otherwise, for $a=1/2$, the first two terms of \eqref{eq:cost-bound} are both $\bigoh(t^{-1/2})$ and are dominated by the third.
\end{IEEEproof}

Theorem \ref{thm:averages} and Corollary \ref{cor:averages} make no assumptions for the underlying network or the magnitude of the stochastic perturbations affecting the system.
However, the convergence they provide is in terms of the long-term average consumption $\bar\obj(t)$, not the instantaneous network state $X(t)$.
As a matter of fact, in the presence of disturbances, $X(t)$ cannot converge with positive probaibility to a fully-mixing state $\eq$ where a given source node utilizes several paths concurrently:
even if $\eq$ is a rest point of \eqref{eq:BR}, the noise term \eqref{eq:SRD-noise} does not vanish at $\eq$, so $\eq$ cannot be stationary under \eqref{eq:SBR}.
On the other hand, if \eqref{eq:CM} admits a \emph{strict, non-mixing} solution $\eq$ (so each source node utilizes a \emph{single} path at \acl{NE}; cf. Definition \ref{def:Nash}), we show below that Boltzmann routing converges to $\eq$ \emph{independently of the magnitude of the noise:}

\begin{theorem}
\label{thm:strict}
Assume that the global minimization problem \eqref{eq:CM} admits a strict, non-mixing solution $\eq$.
If \eqref{eq:SBR} is run with sufficiently small \textup(constant\textup) $\temp$, $X(t)$ converges to $\eq$ \as.
\end{theorem}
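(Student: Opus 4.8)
The plan is to use the relative entropy $\dkl(\eq,\cdot)$ to the target state $\eq$ as a stochastic Lyapunov function, combining a local confinement (supermartingale) estimate with a recurrence argument driven by Theorem~\ref{thm:averages}. A useful preliminary is that, being \emph{strict}, $\eq$ is automatically the \emph{unique} solution of \eqref{eq:CM}: since $\eq$ is a vertex of $\feas$ (each source $\source$ routes all its traffic along a single path $\path^{\ast}_{\source}$) and the prices $\modcost_{\path}$ are continuous, no nearby state using a different path $\pathalt\sim\path^{\ast}_{\source}$ can satisfy \eqref{eq:Nash}; hence $\eq$ is an isolated point of the solution set of \eqref{eq:CM}. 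That set is convex (Proposition~\ref{prop:structure}) and, under Pigou pricing, coincides with the set of Nash equilibria (Proposition~\ref{prop:Pigou}); since a convex set with an isolated point is a singleton, $\eqset=\{\eq\}$, so $\eq$ is the unique minimizer of the continuous function $\obj$ on the compact set $\feas$.

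Next I would set $\mathcal{H}(t)=\dkl(\eq,X(t))=\sum_{\source}\rate_{\source}\log\!\bigl(\rate_{\source}/X_{\path^{\ast}_{\source}}(t)\bigr)\ge0$, which is finite along \eqref{eq:SBR} (all $X_{\path}>0$) and satisfies $\mathcal{H}(t)\to0\iff X(t)\to\eq$. Writing $\mathcal{H}$, up to a constant, as $\sum_{\source}\rate_{\source}\bigl[\temp Y_{\path^{\ast}_{\source}}+\log\sum_{\pathalt}e^{-\temp Y_{\pathalt}}\bigr]$ and applying Itô's lemma to the log-sum-exp (a computation akin to that behind Proposition~\ref{prop:evolution}) one obtains
\[
d\mathcal{H}=\temp\,\langle\eq-X,\modcost(X)\rangle\,dt+\tfrac{\temp^{2}}{2}\,\fench(X)\,dt+d\bM,
\]
with Itô correction $\fench(X)=\sum_{\source}\tr\!\bigl[(\diag(X^{(\source)})-X^{(\source)}(X^{(\source)})^{\top})\,\covmat^{(\source)}\bigr]\ge0$ (here $X^{(\source)}=(X_{\path})_{\path\in\paths_{\source}}$) and a local martingale $\bM$. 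By monotonicity of the congestion game (each $\cost_{\edge}$ nondecreasing) together with the variational characterization of $\eq$, the first drift term is $\le0$ on all of $\feas$ --- the deterministic Lyapunov property behind Theorem~\ref{thm:conv-det}. The key refinement is local: strictness yields a neighborhood $U$ of $\eq$ and a gap $\delta>0$ on which $\modcost_{\pathalt}(X)-\modcost_{\path^{\ast}_{\source}}(X)\ge\delta$ for every off-equilibrium path, so there $\langle\eq-X,\modcost(X)\rangle\le-\delta\,\Phi(X)$ with $\Phi(X):=\sum_{\source}(\rate_{\source}-X_{\path^{\ast}_{\source}})$; and since $\diag(X^{(\source)})-X^{(\source)}(X^{(\source)})^{\top}$ has trace $\le2(\rate_{\source}-X_{\path^{\ast}_{\source}})$, the Itô correction and the martingale bracket \emph{also} vanish at rate $\Phi$: $\fench(X)\le2\noisebound^{2}\Phi(X)$ and $d\langle\bM\rangle\le2\temp^{2}\noisebound^{2}\Phi(X)^{2}\,dt$ on $U$. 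Shrinking $U$ so that in addition $\Phi\le\mathcal{H}\le2\Phi$ there, it follows that for $\temp<\delta/(2\noisebound^{2})$ one has, on $U$, $d\mathcal{H}\le-\lambda\,\mathcal{H}\,dt+d\bM$ with $d\langle\bM\rangle\le\kappa\,\mathcal{H}^{2}\,dt$ for some $\lambda,\kappa>0$.

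From here I would run a standard confinement-plus-recurrence argument. Fix $\eta>0$ small with $\overline{\{\mathcal{H}<\eta\}}\subseteq U$ and let $\tau_{\eta}$ be the exit time of $\{\mathcal{H}<\eta\}$; then $\mathcal{H}(t\wedge\tau_{\eta})$ is a nonnegative supermartingale, hence a.s. convergent, and taking expectations in the drift bound gives $\ex\!\int_{0}^{\tau_{\eta}}\mathcal{H}(s)\,ds<\infty$, so on $\{\tau_{\eta}=\infty\}$ we get $\mathcal{H}(t)\to0$, i.e. $X(t)\to\eq$. Doob's maximal inequality further gives $\prob(\tau_{\eta}<\infty\mid\filter_{\sigma})\le\mathcal{H}(\sigma)/\eta\le\tfrac12$ whenever $\mathcal{H}(\sigma)\le\eta/2$ at a stopping time $\sigma$. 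To see that $X$ returns to $\{\mathcal{H}<\eta/2\}$ at arbitrarily large times a.s., I would invoke uniqueness: by compactness $\rho:=\inf\{\obj(x)-\obj^{\ast}:\mathcal{H}(x)\ge\eta/2\}>0$, so $\obj(X(s))-\obj^{\ast}\ge\rho\,\one\{\mathcal{H}(X(s))\ge\eta/2\}$, while Theorem~\ref{thm:averages} (applicable with constant $\temp$) gives $\limsup_{t}\bar\obj(t)\le\obj^{\ast}+\noisebound^{2}\temp/2$; hence for $\temp<2\rho/\noisebound^{2}$ the asymptotic fraction of time spent in $\{\mathcal{H}\ge\eta/2\}$ is strictly below $1$, forcing $\int_{0}^{\infty}\one\{\mathcal{H}(X(s))<\eta/2\}\,ds=\infty$ a.s. Defining stopping times at which $X$ successively re-enters $\{\mathcal{H}<\eta/2\}$ after each exit of $\{\mathcal{H}<\eta\}$, the strong Markov property and the half-probability bound show each such ``attempt'' fails with conditional probability $\le\tfrac12$, so the first $k$ attempts all fail with probability $\le2^{-k}$; thus a.s. some attempt succeeds and $X(t)\to\eq$. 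Choosing $\temp$ below $\min\{\delta/(2\noisebound^{2}),2\rho/\noisebound^{2}\}$ (and noting the claim is immediate from Theorem~\ref{thm:conv-det} if $\noisebound=0$) completes the proof.

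I expect the main obstacle to be twofold. First, the local refinement in the second paragraph: one must show that near $\eq$ the Itô correction vanishes \emph{at the same rate as the restoring drift} --- this is exactly what separates genuine almost-sure convergence here from the merely ``$\eps$-close'' behavior of a fully-mixing optimum, and it is why a small but \emph{fixed} inverse temperature already suffices. Second, the recurrence step: since the noise does not decay, one needs a global mechanism forcing the trajectory to keep coming back near $\eq$ over an infinite horizon, and it is the long-term-average bound of Theorem~\ref{thm:averages} --- together with the uniqueness of $\eq$ --- that provides it; the remaining supermartingale and zero--one bookkeeping is routine.
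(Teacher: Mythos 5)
Your proof is correct, but it takes a genuinely different route from the paper's in the key local step. The paper proves the theorem via two propositions: first, a recurrence result showing that any neighborhood of $\eq$ is visited infinitely often (by a drift contradiction on the Fenchel coupling, essentially the integrated estimate behind Theorem~\ref{thm:averages}); second, a local trapping result obtained by passing to the \emph{dual} score differences $Z_{\path} = \temp(Y_{\path} - Y_{\path_{\source}^{\ast}})$ \textendash\ linear functions of $Y$ that carry \emph{no} Itô correction \textendash\ and then bounding the escape probability by $e^{-\lambda M}$ via the hitting-time estimate for a Brownian motion with drift. You instead work directly with the primal divergence $\dkl(\eq,X)$, and your crux is the observation that at a vertex both the Itô correction and the martingale bracket degenerate at the same rate $\Phi(X) \asymp \dkl(\eq,X)$ as the restoring drift, so that $\dkl(\eq,X(t))$ is a genuine nonnegative local supermartingale near $\eq$ once $\temp$ is small; Doob's maximal inequality then replaces the Girsanov/hitting-time computation, and recurrence is outsourced to Theorem~\ref{thm:averages} plus uniqueness of the minimizer rather than proven ab initio. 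The two mechanisms are equivalent in substance \textendash\ the paper's change of coordinates makes the second-order terms vanish identically, while you make them vanish at the right rate \textendash\ and both require $\temp$ small for the same two reasons (recurrence and local confinement). Your version buys a more elementary probabilistic step and makes explicit \emph{why} strict vertices are special (the diffusion coefficient degenerates there), at the price of the extra local Hessian computation; the paper's buys a cleaner SDE at the price of the exponential hitting-time estimate. One shared caveat: the persistence of the price gap $\delta$ in a neighborhood of $\eq$ requires continuity of $\modcost_{\edge} = \pd_{-}\cost_{\edge}$, which Assumption~\ref{asm:cost} alone does not guarantee; the paper's Propositions C.1\textendash C.2 make the same implicit assumption, so this is not a defect relative to the original argument.
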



\begin{remark}
We should note here that if $\eq$ is a strict, non-mixing solution of \eqref{eq:CM}, it is the \emph{only} solution of \eqref{eq:CM};
as such, Theorem \ref{thm:strict} shows that Boltzmann routing converges \as to the network's \emph{unique} globally efficient state.
\end{remark}

The proof of Theorem \ref{thm:strict} is fairly complicated, so we relegate it to Appendix \ref{app:strict}.
In a nutshell, it consists of showing that
\begin{inparaenum}
[\itshape a\upshape)]
\item
if $X(t)$ remains close to $\eq$ for all time, it is eventually attracted to it;
and
\item
$X(t)$ gets trapped in arbitrarily small neighborhoods of $\eq$ with controllably high probability (a much harder result which relies on an application of Girsanov’s theorem \cite{Kuo06,Oks07} to estimate the probability that a Wiener process with positive drift attains a given negative level in finite time).
\end{inparaenum}

From a practical viewpoint, the importance of Theorem \ref{thm:strict} is that it provides a global convergence result for constant (but small) $\temp$, irrespective of the noise level.
This relaxes even further the ``vanishing $\temp$'' requirement of Theorem \ref{thm:averages} and is owed to the existence of a \emph{strict}, non-mixing solution.
On the flip side, if the network does not admit such a solution, $X(t)$ cannot converge with positive probability \textendash\ even though the long-term average consumption $\bar\obj(t)$ does.
Our next result shows that if the network admits an (isolated) interior solution $\eq$, learning with sufficiently small $\temp$ allows $X(t)$ to remain arbitrarily close to $\eq$ with arbitrarily high probability:

\begin{theorem}
\label{thm:interior}
Let $\eq\in\intr(\feas)$ be an isolated, interior solution of \eqref{eq:CM}.
Then, for all $\eps,\delta>0$,
the scheme \eqref{eq:SBR} can be run with a sufficiently small \textup(constant\textup) parameter $\temp \equiv \temp(\eps,\delta)$ such that
\begin{equation}
\label{eq:invariant}
\probof[\big]{\textup{$\norm{X(t) - \eq} \leq \delta$ for all sufficiently large $t$}}
	\geq 1 - \eps.
\end{equation}
\end{theorem}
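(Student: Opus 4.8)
The natural approach is to use the \ac{KL} divergence $V(x) \coloneqq \dkl(\eq,x)$ as a stochastic Lyapunov function for \eqref{eq:SBR}, mirroring the proof of Theorem~\ref{thm:averages}. Since $\eq\in\intr(\feas)$, $V$ is finite, nonnegative and continuous on $\feas$, vanishes only at $\eq$, and $V(x)\to\infty$ whenever a coordinate $x_{\path}$ with $\eq_{\path}>0$ tends to zero; hence the sublevel sets $\{x:V(x)\leq\rho\}$ are compact neighborhoods of $\eq$ that shrink to $\{\eq\}$ as $\rho\to0^{+}$. Consequently, given $\delta>0$ it suffices to pick $\rho>0$ with $\{V\leq\rho\}\subseteq\{x:\norm{x-\eq}\leq\delta\}$ and to show that, for $\temp$ small enough, $V(X(t))\leq\rho$ for all sufficiently large $t$ with probability at least $1-\eps$. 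The first technical step is to apply It\^o's lemma to $V(X(t))$ along the explicit dynamics \eqref{eq:SRD}: carrying out the same computation as for Theorem~\ref{thm:averages}, but now with constant $\temp$, one obtains a decomposition
\begin{equation*}
dV(X(t)) = \bigl[-\temp\,\bigl(\obj(X(t))-\obj^{\ast}\bigr) + \temp^{2}R(X(t))\bigr]\dd t + dM(t),
\end{equation*}
where the restoring term is nonpositive because $\modcost=\pd\obj$ (Proposition~\ref{prop:Pigou}) and $\obj$ is convex, so that $\langle\modcost(X),\eq-X\rangle\leq\obj^{\ast}-\obj(X)$; the It\^o correction $R$ --- coming from the term \eqref{eq:SRD-Ito} together with the second-order term of It\^o's formula for the logarithm --- is bounded, $\abs{R}\leq c_{0}\noisebound^{2}$, by Assumption~\ref{asm:noise}; and $M$ is a martingale with quadratic variation controlled by $d\langle M\rangle(t)\leq c_{1}\temp^{2}\noisebound^{2}V(X(t))\dd t$, the factor $V(X(t))$ arising from Pinsker's inequality applied to bound $\norm{X-\eq}^{2}$ by $\dkl(\eq,X)$.

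Next I would use that $\eq$ is the \emph{unique} solution of \eqref{eq:CM}: it is isolated and the solution set is convex by Proposition~\ref{prop:structure}, hence $\eqset=\{\eq\}$. Compactness then yields, for every $\rho_{1}\in(0,\rho)$, a constant $m_{1}\coloneqq\min\{\obj(x)-\obj^{\ast}:\rho_{1}\leq V(x)\leq\rho\}>0$, whence on the annular shell $\{\rho_{1}\leq V\leq\rho\}$ the drift of $V(X(t))$ is at most $-\temp m_{1}+c_{0}\temp^{2}\noisebound^{2}\leq-\tfrac12\temp m_{1}<0$ once $\temp\leq m_{1}/(2c_{0}\noisebound^{2})$; thus $V(X(t))$ behaves like a supermartingale while it stays in $\{V\geq\rho_{1}\}$. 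Complementarily, Theorem~\ref{thm:averages} applied with constant $\temp$ gives $\limsup_{t\to\infty}t^{-1}\int_{0}^{t}(\obj(X(s))-\obj^{\ast})\dd s\leq\tfrac12\temp\noisebound^{2}$ \as, so the long-run fraction of time $X(s)$ spends outside $\{V\leq\rho_{1}\}$ is at most $\temp\noisebound^{2}/(2m_{1})$; in particular $X(t)$ returns to $\{V\leq\rho_{1}\}$ infinitely often, and this ``recurrent core'' shrinks as $\temp\to0$.

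The last step --- and the one I expect to be the crux --- is the trapping estimate: conditionally on $X$ entering $\{V\leq\rho_{1}\}$, the probability that $V(X(t))$ later climbs all the way up to $\rho$ must be made $\leq\eps$ by shrinking $\temp$. For this I would introduce the exponential supermartingale $\exp(\lambda V(X(t)))$ with $\lambda\sim 1/(\temp\noisebound^{2})$ chosen so that, on the shell $\{\rho_{1}\leq V\leq\rho\}$, the negative drift of $V$ dominates the $\tfrac12\lambda^{2}\,d\langle M\rangle$ term; optional stopping then bounds the probability of a single excursion across the shell by $\exp(-c(\rho-\rho_{1})/(\temp\noisebound^{2}))$ for a fixed $c>0$, which is $\leq\eps$ for $\temp$ small. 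The delicate point is to upgrade this one-excursion bound to the pathwise statement ``no excursion ever reaches $\rho$ after some finite time'', since a priori infinitely many excursions may be attempted; I would close this gap either by summing the (geometrically small, conditionally) excursion probabilities over the successive return times of $X$ to a slightly smaller sublevel set and invoking Borel--Cantelli, or --- as in the proof of Theorem~\ref{thm:strict} --- by a Girsanov change of measure comparing $Y_{\path}(t)$ to a drifted Brownian motion and directly estimating the probability that such a process attains a prescribed level. Combining the three steps gives $\probof{\norm{X(t)-\eq}\leq\delta\text{ for all sufficiently large }t}\geq1-\eps$ for $\temp\equiv\temp(\eps,\delta)$ sufficiently small. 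The main obstacle, as in Theorem~\ref{thm:strict} but here in milder form --- because $\eq$ is interior, the martingale term \eqref{eq:SRD-noise} does not vanish at $\eq$, so one cannot hope for almost-sure convergence, only entrapment with high probability --- is controlling the entire family of excursions uniformly over the half-line rather than over a finite horizon.
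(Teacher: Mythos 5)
Your Lyapunov setup is essentially the paper's: the paper works with $H(t)=\temp^{-1}\fench(\eq,\temp Y(t))$, which by \eqref{eq:Fench-KL} is exactly $\temp^{-1}\dkl(\eq,X(t))$, and it derives the same decomposition (a nonpositive restoring drift from convexity of $\obj$, an $\bigoh(\temp\noisebound^{2})$ Itô correction, and a martingale), leading to a time-average bound of the form $\ex\bigl[t^{-1}\int_{0}^{t}\norm{X(s)-\eq}^{2}\dd s\bigr]\leq\temp\noisebound^{2}/B+H(0)/t$. Up to that point you and the paper agree. The divergence is in the endgame: the paper does \emph{not} attempt a pathwise trapping estimate. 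Instead it shows that the log-ratio process $\Psi_{\pathalt}=\log(X_{\pathalt}/X_{\path})$ has a (uniformly) elliptic generator, hence is recurrent, deduces that the transition probabilities of $X(t)$ converge in total variation to an invariant measure $\nu$, and reads off $\nu(B_{\delta})\geq 1-\temp\noisebound^{2}/(B\delta^{2})$ from the time-average bound.

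The step you yourself flag as the crux \textendash\ upgrading the one-excursion bound $\exp(-c(\rho-\rho_{1})/(\temp\noisebound^{2}))$ to ``no excursion ever reaches $\rho$ after some finite time'' \textendash\ cannot be closed as proposed. The per-excursion probability is a fixed positive constant, not decaying in the excursion index, so the sum over the successive returns diverges and the first Borel\textendash Cantelli lemma is unavailable; worse, by the strong Markov property and the recurrence of the elliptic diffusion $\Psi$ (which is precisely what the paper proves), $X(t)$ exits $B_{\delta}$ infinitely often almost surely whenever the noise is nondegenerate, so the pathwise event you are bounding is in fact null. The Girsanov/hitting-time device from Theorem~\ref{thm:strict} does not transfer either: there the target is a vertex, the coordinates $Z_{\path}=\temp(Y_{\path}-Y_{\path_{\source}^{\ast}})$ drift to $-\infty$, and a Brownian motion with drift crosses a fixed level only with exponentially small probability \textendash\ none of which holds for an interior $\eq$, where $\Psi$ is an ergodic diffusion oscillating around a finite point. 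What your (correct) drift and occupation-time estimates actually support is the distributional conclusion that the paper's invariant-measure argument delivers \textendash\ concentration of the law of $X(t)$ and of the long-run occupation measure on $B_{\delta}$ \textendash\ so you should replace the supermartingale/excursion step by the recurrence-plus-invariant-measure argument, or restate the conclusion in terms of $\lim_{t\to\infty}\probof{\norm{X(t)-\eq}\leq\delta}$ and the fraction of time spent in $B_{\delta}$ rather than eventual almost-sure confinement.
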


Theorem \ref{thm:interior} implies that if the inverse temperature $\temp$ of \eqref{eq:SBR} is small enough, then, after a certain amount of time, the system will be arbitrarily close to equilibrium with probability arbitrarily close to one.
Thus, even though the network's state does not converge almost surely, it will still spend most of the time close to equilibrium.

\section{Numerical Results}
\label{sec:numerics}

In this section, we assess the performance of the proposed Boltzmann routing scheme in practical scenarios via numerical simulations.
For clarity, we only present here a representative subset of these results but our conclusions apply to a wide range of optical network parameters and specifications.

Our setup is as follows (for an overview, see Table \ref{tab:parameters}):
we consider an optical network deployed over the continental US with nodes and links as in Fig.~\ref{fig:network} (for clarity, we only plotted the 50 largest metropolitan areas) \cite{DRC10}.
At each link, data is transmitted over 80 \ac{WDM} channels spaced at $50\;\ghz$, each with a carrying capacity of $10\;\gbs$.
To mitigate signal losses, each link carries an amplifier per $80\;\km$ of fiber length, and an additional amplifier at each end;
in addition to the amplifiers, power is also consumed at the transponder and switch port levels (cf. Table~\ref{tab:parameters}) \cite{BGT+13}.
Source nodes are drawn randomly in the network, each with a traffic rate between $200\;\gbs$ and $400\;\gbs$;
this traffic is then directed to a set of \aclp{DC}, also drawn randomly from the network's nodes.
Each \acl{DC} has a capacity of $2\;\tbs$ and consumes between $6.6\;\kW$ and $13.2\;\kW$, at zero and full load respectively \cite{FWB07};
for diversity, we also populate the network with a number of ``legacy'' \aclp{DC} with lower capacity and higher power consumption specifications as indicated in \cite{Pul11}.

\begin{table}[t]
\caption{Network simulation parameters}
\label{tab:parameters}
\vspace{-2ex}
\centering
\footnotesize

\begin{tabular}{|c|c||c|c|}
\hline
\textbf{Parameter}
	&\textbf{Value}
	&\textbf{Parameter}
	&\textbf{Value}
	\\
	\hline
Network topology
	&Continental US \cite{DRC10}
	&Source traffic
	&$[200,400]\;\gbs$
\\
\hline
\ac{WDM} channels\,/\,fiber
	&$80$
	&Channel capacity
	&$10\;\gbs$
\\
\hline
Amplifier consumption
	&$15\;\W$
	&Amplifier density
	&$1 / (80\;\km)$
\\
\hline
Transponder consumption
	&$35\;\W/\mathrm{chanel}$
	&Switch port consumption
	&$0.8\;\W$
\\
\hline
\ac{DC} consumption (zero load)
	&$6.6\;\kW$
	&\ac{DC} consumption (full load)
	&$13.2\;\kW$
\\
\hline
\ac{DC} capacity
	&$2\;\tbs$
	&Cooling factor
	&$2$
\\
\hline
\end{tabular}

\end{table}

In Fig.~\ref{fig:statics}, we examine the performance of the proposed Boltzmann routing scheme in terms of power consumption, convergence speed and scalability.
Specifically, we consider several different structures for the source nodes' choice sets $\paths_{\source}$:
\begin{inparaenum}
[\itshape a\upshape)]
\item
taking the shortest path to the closest destination ($\abs{\act_{\source}} = 1$);
\item
splitting traffic over the $4$ paths with the lowest hop count to the closest destination ($\abs{\act_{\source}} = 4$);
\item
splitting traffic to the $5$ closest destinations over shortest paths ($\abs{\act_{\source}} = 5$);
and
\item
mixing shortest paths and closest destinations ($\abs{\act_{\source}} = 4\times5$).
\end{inparaenum}
We then run the Boltzmann routing scheme \eqref{eq:BR} with the Pigouvian pricing scheme of Proposition \ref{prop:Pigou} ($\modcost_{\edge} = \cost_{\edge}'$), a heating schedule of the form $\temp(t)\sim t^{-1/2}$ and all scores initialized uniformly at $Y(0)=0$ (of course, in the single-path, single-destination case, there is nothing to learn).

\begin{figure*}[t]
\footnotesize
\subfigure[Power gains under different routing modes]{\label{fig:modes}%
\includegraphics[width=.48\textwidth]{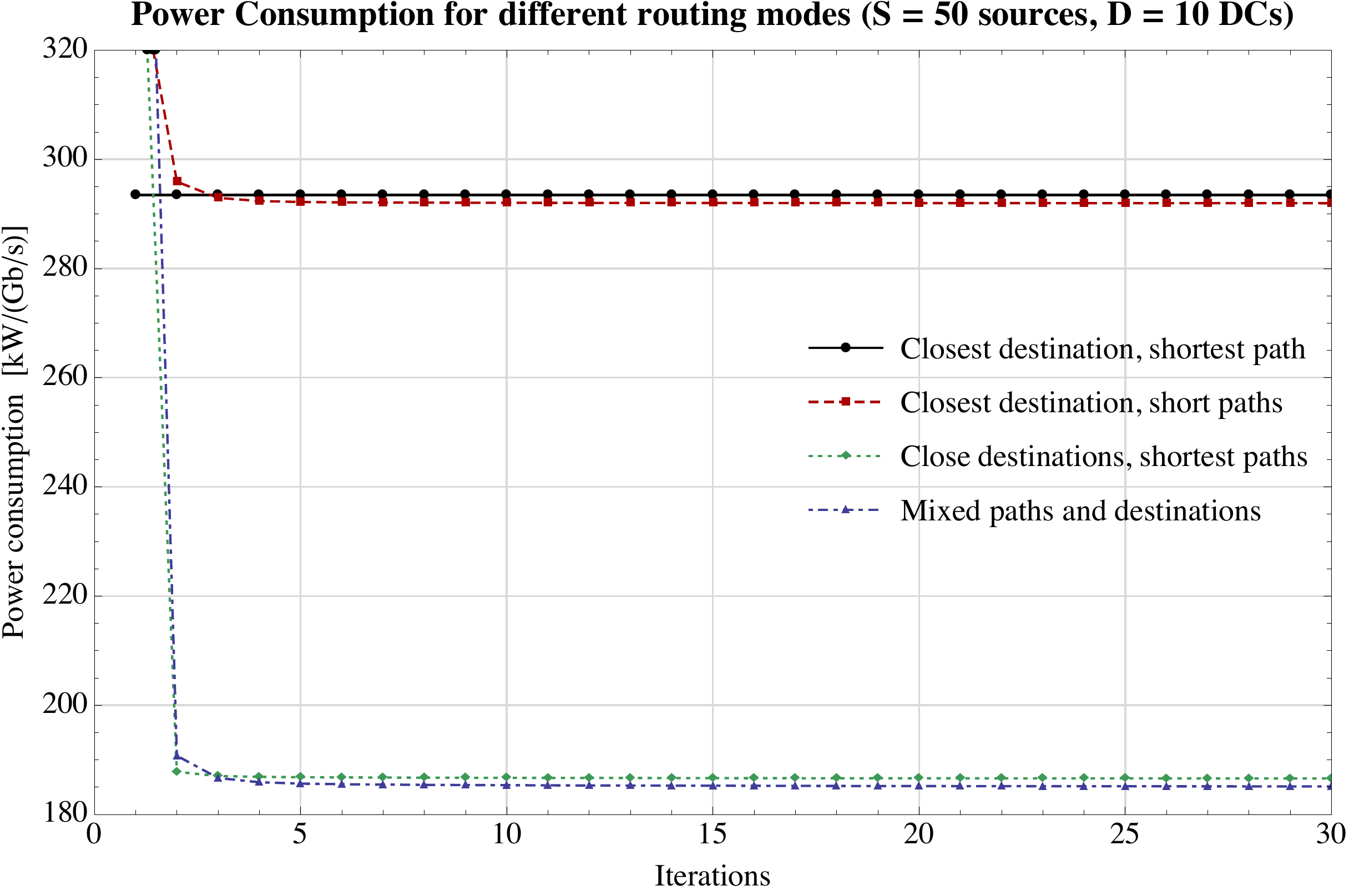}}
\hfill
\subfigure[Scalability of Boltzmann routing]{\label{fig:scalability}%
\includegraphics[width=.48\textwidth]{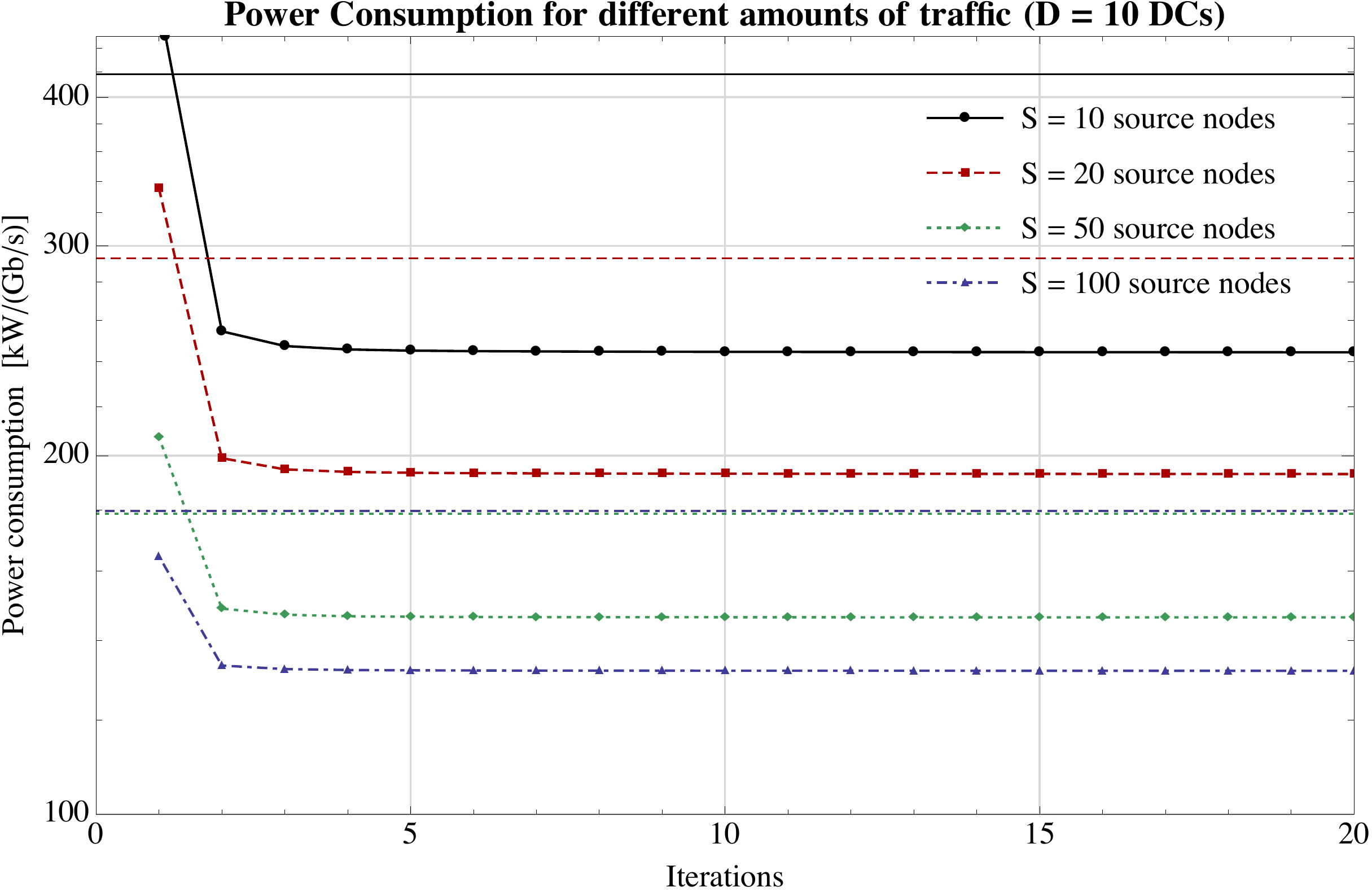}}
\\[-4ex]
\caption{Gains in power consumption for different routing modes and different numbers of sources.
Fig.~\ref{fig:modes} shows that splitting traffic to several destinations leads to a $100\;\kW/(\gbs)$ decrease in power consumption.
As we see in Fig.~\ref{fig:scalability}, these gains become more pronounced for larger number of source nodes (marked/unmarked lines correspond to mixed/closest-destination routing respectively); 
moreover, Boltzmann routing attains $99\%$ of its overall gain within 4\textendash5 iterations, even for $\nSources = 100$ source nodes.%
}
\vspace{-1ex}
\label{fig:statics}
\end{figure*}

As shown in Fig.~\ref{fig:modes}, Boltzmann routing converges very rapidly (within a few iterations) to a socially efficient network state that solves the corresponding consumption minimization problem \eqref{eq:CM}.
Moreover, we see that splitting traffic over several \aclp{DC} (as opposed to exclusively targeting the closest \acl{DC}) leads to significant performance gains, of the order of $100\;\kW/(\gbs)$ \textendash\ approximately a $40\%$ decrease in power consumption with respect to closest-destination routing.
Finally, in Fig.~\ref{fig:scalability}, we examine the scalability of these performance gains as the number of source nodes increases:
in so doing, we observe that
\begin{inparaenum}
[\itshape a\upshape)]
\item
the relative percentage gain of splitting traffic over multiple \aclp{DC} becomes more prominent for large numbers of sources;
and
\item
the Boltzmann routing scheme retains its convergence speed and attains a socially efficient state within 4\textendash 5 iterations, even for $\nSources=100$ source nodes.
This means that, in the presence of intermittent traffic fluctuations, the system will be capable of adapting fast enough to changes in the system.
\end{inparaenum}

\begin{figure*}[t]
\footnotesize
\subfigure[Convergence in the presence of disturbances]{\label{fig:noise}%
\includegraphics[width=.48\textwidth]{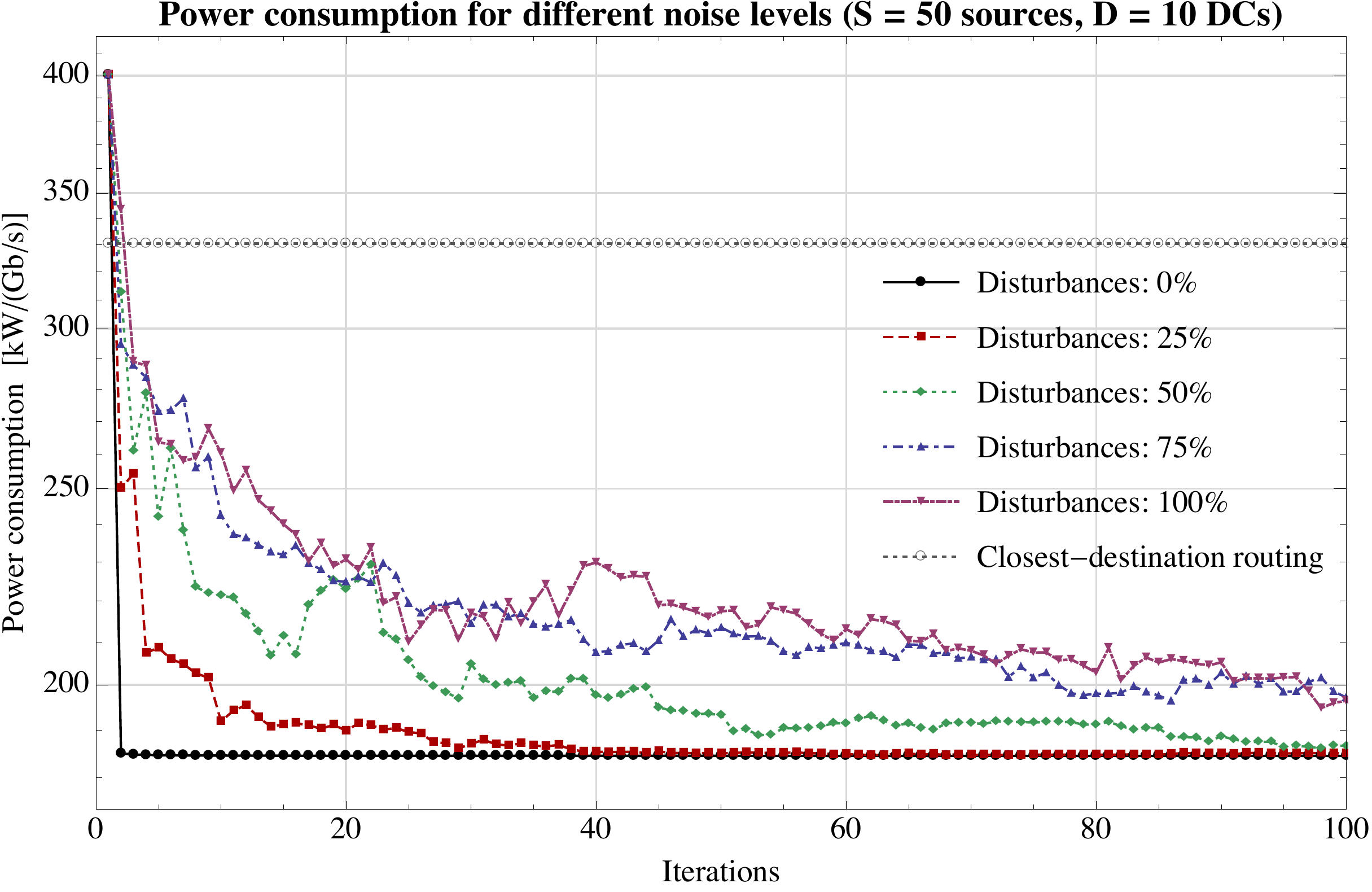}}
\hfill
\subfigure[Capacity violations]{\label{fig:capacities}%
\includegraphics[width=.48\textwidth]{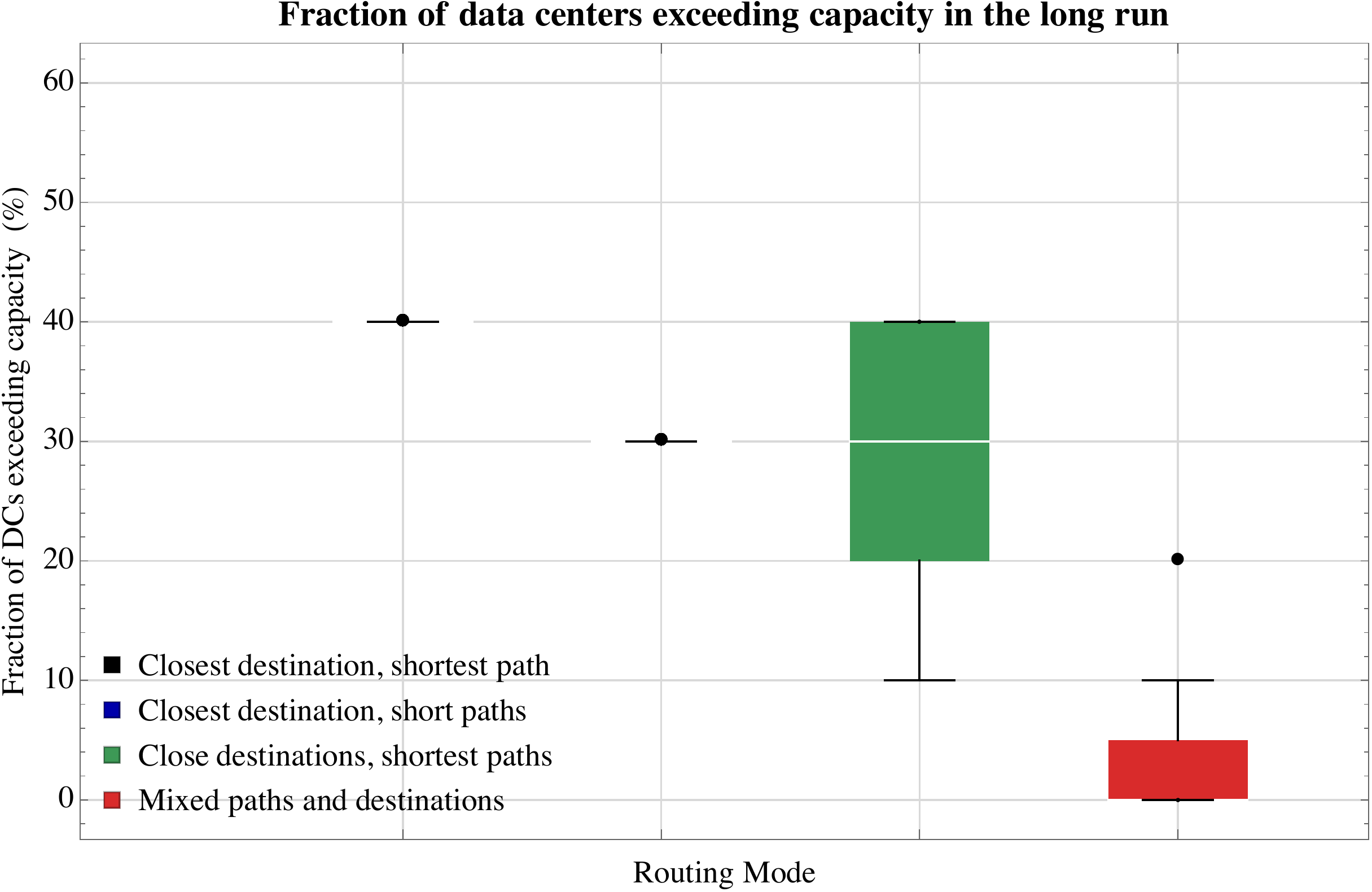}}
\\[-4ex]
\caption{Long-term behavior of Boltzmann routing in the presence of disturbances (Fig.~\ref{fig:noise}) and capacity constraints (Fig.~\ref{fig:capacities}).}
\label{fig:robustness}
\vspace{-2ex}
\end{figure*}

In Fig.~\ref{fig:robustness}, we examine the algorithm's robustness in the presence of stochastic fluctuations and capacity constraints.
First, in Fig.~\ref{fig:noise}, we consider a network with $\nSources = 50$ source nodes and $\nSinks = 10$ \aclp{DC} with power consumption and capacity attributes as in Table \ref{tab:parameters}.
We further assume that consumption costs are subject to stochastic disturbances and noise with volatility coefficient $\noisebound$ equal to a fraction $z$ of the costs' mean value (ranging between $0\%$ and $100\%$).
Remarkably, Boltzmann routing reaches a socially efficient state even under high degrees of uncertainty (up to $z=100\%$);
however, as could be expected, the algorithm's convergence speed decreases in the presence of large disturbances.

Finally, in Fig.~\ref{fig:capacities}, we examine the behavior of Boltzmann routing with respect to the capacity constraints imposed by the network's links and \aclp{DC}.
Specifically, we run \eqref{eq:SBR} for different routing modes (cf. Fig.~\ref{fig:statics}) with a $25\%$ disturbance level and we record the number of links and/or \aclp{DC} that exceed capacity at any given time.
The results are plotted in the box-and-whisker charts of Fig.~\ref{fig:capacities} which show that routing to a single destination (over a single or several paths) consistently leads to relatively high violation percentages, between $30\%$ and $40\%$.
On the other hand, except for a few outliers, splitting traffic over several paths and destinations yields a much lower capacity violation percentage, of the order of $5\%$ (a result which is statistically significant within 5 interquartile ranges).
This behavior stems from the fact that the Pigouvian pricing scheme of Proposition \ref{prop:Pigou} also penalizes users when a link (or \acl{DC}) approaches capacity because the slope of the $\eps$-adjusted cost function $\cost_{\edge}^{\eps}$ is equal to $1/\eps$ in that case (in our simulations, we took $\eps=10^{-4}$; recall also the relevant discussion in Section \ref{sec:consumption}).

\section{Conclusions and Perspectives}
\label{sec:conclusions}

The prolific increase of traffic in optical networks is forcing network operators to seek energy-efficient routing methods while facing capacity limits and random fluctuations.
This paper is an attempt to propose such a scheme:
specifically, we analyzed a distributed approach for energy-efficient routing in optical data networks based on the implementation of a Pigouvian pricing scheme and a learning method derived from the Boltzmann distribution of statistical mechanics.
The \aclp{NE} of the proposed pricing scheme coincide with the network's socially optimum states (in terms of total energy consumption), and the resulting Boltzmann routing method converges to such states exponentially fast in realistic network conditions, leading to gains of up to $40\%$ in energy consumption over simple, shortest-path routing schemes.

Motivated by the highly volatile nature of traffic and power consumption in optical networks, we also examine the behavior of Boltzmann routing in the presence of uncertainty and random disturbances.
In this regard, we showed that the network's long-term average consumption converges within $\eps$ of its minimum value in time which is at most $\tilde\bigoh(\eps^{-2})$, irrespective of the fluctuations' magnitude.
Moreover, if the network admits a strict, non-mixing optimum state, the network's state converges to it almost surely (again, no matter the noise level);
instead, if the network admits an interior, fully-mixing optimum state, the proposed scheme remains arbitrarily close to the said state  with probability arbitrarily close to $1$, provided that the scheme's inverse temperature parameter $\eta$ is taken sufficiently small.
These theoretical results were complemented by extensive numerical simulations showing that, in realistic network conditions, Boltzmann routing provides significant advantages over closest-destination/shortest-path routing choices.
In addition, the proposed method scales up to hundreds of source nodes with minimal impact on its convergence speed, and it respects the network's capacity constraints, even under very high degrees of volatility and uncertainty.

The results presented in this paper apply to a wide range of congestion-limited network scenarios, ranging from latency minimization and throughput maximization in data networks to urban traffic management in road networks;
the detailed analysis of these applications is relegated to future work.
Additionally, a key question that arises is what happens in the presence of non-zero-mean fluctuations (random or not), which could lead the system's optimum/equilibrium states to evolve over time in an arbitrary fashion (so there is no fixed underlying state to target).
In this case, efficient routing protocols would have to be capable of tracking the system's evolving optimum state in a flexible, dynamic manner;
we intend to explore this issue in future work.

\appendices
\numberwithin{equation}{section}
\numberwithin{theorem}{section}
\numberwithin{proposition}{section}
\numberwithin{lemma}{section}

\section{Preliminary Results}
\label{app:basics}

For simplicity, in what follows, we take all rates equal to $\rate_{\source} = 1$;
this assumption is only done for notational clarity and does not otherwise affect our results.
For posterity, define also the Gibbs map $y\mapsto\gibbs(y)$ with $\gibbs_{\path}(y) = \exp(y_{\path})/\insum_{\pathalt\sim\path} \exp(y_{\pathalt})$, so the update step of \eqref{eq:BR} can be written as $x = \gibbs(y)$.
Finally, with a fair degree of hindsight, introduce the (negative) entropy function
\(
h(x)
	= \sum_{\path\in\paths} x_{\path} \log x_{\path},
	\quad
	x\in\feas,
\)
and its convex conjugate \cite{Roc70}
\begin{equation}
\label{eq:conjugate}
h^{\ast}(y)
	= \max_{x\in\feas} \braces{\braket{y}{x} - h(x)}
	= \sum_{\source\in\sources} \log \sum_{\path\in\paths_{\source}} e^{y_{\path}},
	\quad
	y\in\R^{\paths}.
\end{equation}
Then, an easy differentiation shows that $G(y) = \nabla h^{\ast}(y)$ and, furthermore, for all $\pathalt,\gamma\sim\path$:
\begin{align}
\label{eq:G-diff}
\frac{\pd\gibbs_{\path}}{\pd x_{\pathalt}}
	= x_{\path} (\delta_{\path\pathalt} - x_{\pathalt}),
	&&
\frac{\pd^{2}\gibbs_{\path}}{\pd x_{\pathalt} \pd x_{\gamma}}
	= x_{\path} (\delta_{\path\pathalt} - x_{\pathalt}) (\delta_{\pathalt\gamma} - 2 x_{\pathalt}),
\end{align}
where $\delta_{\path\pathalt}$ denote Kronecker's delta symbols.
With these preliminaries at hand, we have:

\begin{IEEEproof}[Proof of Proposition \ref{prop:evolution}]
Applying Itô's formula \cite{Oks07,Kuo06} to the dynamics \eqref{eq:SBR}, we get:
\begin{equation}
dX_{\path}
	= \temp \sum_{\pathalt\sim\path}
	\left.\frac{\pd \gibbs_{\path}}{\pd x_{\pathalt}}\right\vert_{\temp Y} d Y_{\pathalt}
	+ \dot \temp \sum_{\pathalt\sim\path} Y_{\pathalt}
	\left.\frac{\pd \gibbs_{\path}}{\pd x_{\pathalt}}\right\vert_{\temp Y} dt
	+ \frac{\temp^{2}}{2} \sum_{\pathalt,\gamma\sim\path}
	\left.\frac{\pd^{2}\gibbs_{\path}}{\pd x_{\pathalt} \pd x_{\gamma}} \right\vert_{\temp Y}
	dY_{\pathalt} \cdot dY_{\gamma}.
\end{equation}
The system \eqref{eq:SRD} is then obtained by substituting $dY$ from \eqref{eq:SBR}, using the stochastic differential multiplication formula \eqref{eq:corr} and the derivative calculations \eqref{eq:G-diff} above;
this a straightforward algebraic manipulation, so we leave the details to the reader.
\end{IEEEproof}

To continue, we will require a further piece of technical machinery that will act as a Lyapunov function for the dynamics \eqref{eq:BR}\,/\,\eqref{eq:SBR}.
Specifically, following \cite{MS16,BM14}, consider the so-called \emph{Fenchel coupling}
\begin{equation}
\label{eq:Fenchel}
\fench(x,y)
	= h(x) + h^{\ast}(y) - \braket{y}{x}.
\end{equation}
By the convexity of $h$, $\fench(x,y)$ is convex in both $x$ and $y$;
furthermore, by Fenchel's inequality \cite{Roc70}, we have $\fench(x,y) \geq 0$ with equality if and only if $x = \nabla h^{\ast}(y)$, i.e. if and only if $x = \gibbs(y)$.
Moreover, using the explicit log-sum-exp formula \eqref{eq:conjugate}, it is easy to see that
\begin{equation}
\label{eq:Fench-KL}
\fench(x,y)
	= \dkl(x,\gibbs(y))
	\quad
	\text{for all $x\in\feas$, $y\in\R^{\paths}$}.
\end{equation}
As a result, for fixed $\eq\in\feas$, we will have $x(t) \to \eq$ if and only if $\fench(\eq,y(t)) \to 0$.

With this in mind, we turn to the behavior of $\fench$ under \eqref{eq:SBR}:

\begin{lemma}
\label{lem:Fenchel}
Let $Y(t)$ be given by \eqref{eq:SBR}.
Then, for all $\eq\in\feas$, we have:
\begin{equation}
\label{eq:dF}
d\fench(\eq,Y)
	= \insum_{\path} (X_{\path} - \eq_{\path}) \dd Y_{\path}
	+ \frac{1}{2}
	\insum_{\path} X_{\path} \left( \noisevar_{\path\path} - \insum_{\pathalt\sim\path} X_{\pathalt} \noisevar_{\path\pathalt} \right) \dd t
\end{equation}
\end{lemma}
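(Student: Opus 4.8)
The plan is to derive \eqref{eq:dF} by differentiating the Fenchel coupling \emph{as a function of the dual (score) variable alone}, which collapses the whole computation onto the first and second derivatives of the single scalar potential $h^{\ast}$. Since $\eq\in\feas$ is held fixed, $h(\eq)$ is a finite constant (using the convention $0\log 0 = 0$), so \eqref{eq:Fenchel} gives
\[
d\fench(\eq,Y) = d\bigl[h^{\ast}(Y)\bigr] - \insum_{\path} \eq_{\path}\,dY_{\path},
\]
the pairing $\braket{Y}{\eq}$ contributing no Itô correction because it is linear in $Y$. Thus the task reduces to applying the multidimensional Itô formula to the $C^{2}$ convex function $h^{\ast}$ (smoothness being immediate from the log-sum-exp expression \eqref{eq:conjugate}) along the semimartingale $Y$ of \eqref{eq:SBR}.

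For this I would invoke the two facts recorded around \eqref{eq:G-diff}: writing the update rule of \eqref{eq:SBR} as $X = \gibbs(Y)$ (cf. Appendix \ref{app:basics}) and using $\gibbs = \nabla h^{\ast}$, the gradient of $h^{\ast}$ along $Y$ is exactly $X$; and the Hessian of $h^{\ast}$ is the first expression in \eqref{eq:G-diff}, namely $\pd^{2} h^{\ast}/\pd y_{\path}\,\pd y_{\pathalt} = X_{\path}(\delta_{\path\pathalt} - X_{\pathalt})$ for $\pathalt\sim\path$ and $0$ otherwise, the off-block entries vanishing because $h^{\ast}$ is separable across source blocks. Itô's formula then yields
\[
d\bigl[h^{\ast}(Y)\bigr] = \insum_{\path} X_{\path}\,dY_{\path} + \tfrac12 \insum_{\path}\insum_{\pathalt\sim\path} X_{\path}(\delta_{\path\pathalt} - X_{\pathalt})\,d\langle Y_{\path},Y_{\pathalt}\rangle .
\]
It is worth stressing that the block-diagonal Hessian is precisely what forces only same-source sums $\pathalt\sim\path$ to appear in \eqref{eq:dF}, even though the fluctuations may be correlated across different source nodes through shared edges.

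The last ingredient is the quadratic covariation of $Y$. The drift $\modcost_{\path}(X)\,dt$ in \eqref{eq:SBR} has zero quadratic variation, so $d\langle Y_{\path},Y_{\pathalt}\rangle = dZ_{\path}\cdot dZ_{\pathalt} = \noisevar_{\path\pathalt}\,dt$ by the correlation computation \eqref{eq:corr} and the definition \eqref{eq:covmat}. Substituting this and using the elementary identity $\insum_{\pathalt\sim\path}(\delta_{\path\pathalt} - X_{\pathalt})\,\noisevar_{\path\pathalt} = \noisevar_{\path\path} - \insum_{\pathalt\sim\path} X_{\pathalt}\,\noisevar_{\path\pathalt}$ collapses the double sum to the Itô-correction term of \eqref{eq:dF}; combining it with $\insum_{\path} X_{\path}\,dY_{\path} - \insum_{\path}\eq_{\path}\,dY_{\path} = \insum_{\path}(X_{\path} - \eq_{\path})\,dY_{\path}$ gives exactly \eqref{eq:dF}.

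The argument is essentially mechanical once this viewpoint is fixed, so I do not expect a serious obstacle. The one point that matters — and the reason for routing through $\fench$ rather than differentiating $\dkl(\eq,X)$ directly via the explicit state dynamics \eqref{eq:SRD} — is that the dual-variable formulation needs only the already-known gradient and Hessian of $h^{\ast}$, whereas the direct route would require re-deriving and simplifying the cumbersome terms of \eqref{eq:SRD}. The remaining care is pure bookkeeping: checking that only the martingale part of $Y$ feeds the quadratic variation (so the price drift generates no spurious second-order term), and that Itô's formula applies globally on $\R^{\paths}$ without boundary issues, which holds since $h^{\ast}$ is smooth everywhere.
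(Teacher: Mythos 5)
Your proposal is correct and follows exactly the route the paper takes (the paper's own proof is a one-liner: apply Itô's lemma to $\fench(\eq,Y)$ and use \eqref{eq:G-diff} for the Hessian of $h^{\ast}$, leaving the details to the reader); you have simply filled in those details, correctly identifying $\nabla h^{\ast}(Y)=X$, the block-diagonal Hessian $X_{\path}(\delta_{\path\pathalt}-X_{\pathalt})$, and the quadratic covariation $dY_{\path}\cdot dY_{\pathalt}=\noisevar_{\path\pathalt}\,dt$. No issues.
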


\begin{IEEEproof}
Simply apply Itô's lemma \cite{Oks07,Kuo06} to $\fench(\eq,Y(t))$ and use the differentiation formula \eqref{eq:G-diff} to rewrite $\pd_{\path}\pd_{\pathalt} h^{\ast}(y) = \pd_{\pathalt} \gibbs_{\path}(y)$ in terms of $x$;
the details are straightforward and are left to the reader.
\end{IEEEproof}

\medskip

Finally, we will require the following growth estimate for martingales with bounded volatility:

\begin{lemma}
\label{lem:Wbound}
Let $W(t)$ be a Wiener process in $\R^{n}$ and let $Z(t)$ be a bounded, continuous process in $\R^{n}$.
Then, for every positive function $f$ such that $f(t)/\sqrt{t \log\log t} \to \infty$ as $t\to\infty$, we have:
\begin{equation}
\label{eq:Wbound}
f(t) + \int_{0}^{t} Z(s) \cdot dW(s)
	\sim f(t)
	\quad
	\text{as $t\to\infty$ \as}.
\end{equation}
\end{lemma}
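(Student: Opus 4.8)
The plan is to show that the stochastic integral $M(t) = \int_{0}^{t} Z(s)\cdot dW(s)$ grows at most like $\sqrt{t\log\log t}$ almost surely; since $f(t)/\sqrt{t\log\log t}\to\infty$ (which in particular forces $f(t)\to\infty$), this makes $M(t) = o(f(t))$, and then $f(t) + M(t) = f(t)\bigl(1 + M(t)/f(t)\bigr)\sim f(t)$, as desired.

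First I would note that $M$ is a continuous local martingale — in fact a square-integrable martingale on every finite interval, since $Z$ is bounded — with quadratic variation $\langle M\rangle_{t} = \int_{0}^{t}\norm{Z(s)}^{2}\,ds$. Writing $C = \sup_{s\geq0}\norm{Z(s)}<\infty$ for the (possibly random) bound on $Z$, one has the uniform-in-$t$ control $\langle M\rangle_{t}\leq C^{2}t$. I would then split into two cases according to whether $\langle M\rangle_{\infty} = \int_{0}^{\infty}\norm{Z(s)}^{2}\,ds$ is finite or not. On the event $\{\langle M\rangle_{\infty}<\infty\}$, $M$ is an $L^{2}$-bounded martingale, hence converges almost surely to a finite limit, so $M(t) = \bigoh(1)$. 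On the complementary event $\{\langle M\rangle_{\infty}=\infty\}$, I would use the Dambis–Dubins–Schwarz time-change representation $M(t) = B(\langle M\rangle_{t})$ for a standard one-dimensional Brownian motion $B$, together with the law of the iterated logarithm $\limsup_{s\to\infty}\abs{B(s)}/\sqrt{2s\log\log s} = 1$ \as (equivalently, Lépingle's LIL for continuous martingales applied directly to $M$). Since $s\mapsto\sqrt{2s\log\log s}$ is eventually nondecreasing and $\langle M\rangle_{t}\leq C^{2}t$, this gives, for every $\eps>0$ and all large $t$,
\[
\abs{M(t)} \leq (1+\eps)\sqrt{2\langle M\rangle_{t}\log\log\langle M\rangle_{t}} \leq (1+\eps)\sqrt{2 C^{2} t\log\log(C^{2}t)},
\]
so that $M(t) = \bigoh\bigl(\sqrt{t\log\log t}\bigr)$ \as.

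Combining the two cases, $M(t)$ is \as either $\bigoh(1)$ or $\bigoh(\sqrt{t\log\log t})$; in either case $M(t)/f(t)\to0$ by the growth hypothesis on $f$, which yields $f(t)+M(t)\sim f(t)$. The only mildly delicate point — and the one I would expect to require the most care — is transferring the Brownian LIL through the random time change $\langle M\rangle_{t}$ and justifying the case split robustly (e.g. making sure the argument is unaffected when $\langle M\rangle_{t}$ grows strictly slower than $t$, which is handled by monotonicity of the LIL normalizer and the bound $\langle M\rangle_{t}\leq C^{2}t$). Everything else is a routine application of standard martingale convergence and time-change results.
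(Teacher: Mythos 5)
Your proposal is correct and follows essentially the same route as the paper: both bound the quadratic variation by a constant multiple of $t$, split into the cases of finite versus infinite total quadratic variation, and combine the Dambis--Dubins--Schwarz time change with the law of the iterated logarithm to conclude $\int_{0}^{t} Z\cdot dW = \bigoh\bigl(\sqrt{t\log\log t}\bigr)$ \as. The only cosmetic difference is that you invoke $L^{2}$-bounded martingale convergence on the finite-variation event, where the paper simply observes that the time-changed Brownian path remains bounded there.
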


\begin{proof}
Let $\xi(t) = \sum_{i=1}^{n} \int_{0}^{t} Z_{i}(s) \dd W_{i}(s)$.
Then, the quadratic variation $\rho = [\xi,\xi]$ of $\xi$ satisfies
\begin{equation}
\label{eq:covest1}
d[\xi,\xi]
	= d\xi \cdot d\xi
	= \insum_{i=1}^{n} Z_{i} Z_{j} \delta_{ij} \dd t
	\leq M \dd t,
\end{equation}
where $M = \sup_{t\geq0} \norm{Z(t)}^{2} < \infty$.
By the time-change theorem for martingales \citep[Cor.~8.5.4]{Oks07}, there exists a Wiener process $\wilde W(t)$ such that $\xi(t) = \wilde W(\rho(t))$, so $[f(t) + \xi(t)] / f(t) = 1 + \wilde W(\rho(t))/f(t)$.
If $\lim_{t\to\infty} \rho(t) < \infty$,
we have $\limsup_{t\to\infty} \abs{\wilde W(\rho(t))} < \infty$ (a.s.);
otherwise, if $\lim_{t\to\infty} \rho(t) = \infty$,
our claim follows from the law of the iterated logarithm \cite{Oks07} and the fact that $\rho(t) \leq Mt$.
\end{proof}

\section{Deterministic Analysis}
\label{app:deterministic}

\begin{IEEEproof}[Proof of Theorem \ref{thm:conv-det}]
Let $x(t)$ be an interior solution orbit of the Boltzmann routing dynamics \eqref{eq:BR} and let $\eqset = \argmin \obj$ denote the solution set of \eqref{eq:CM}.
By Lemma \ref{lem:Fenchel} and the convexity of the objective function $\obj$, it follows that the function $V(t) = \fench(\eq,y(t))$ is decreasing for any $\eq\in\eqset$, so Lyapunov's theorem ensures that $x(t) = G(y(t))$ converges to $\eqset$ (note here that all stochastic terms in \eqref{eq:dF} vanish when $\noisedev=0$).
Assume now that $x(t)$ has two distinct $\omega$-limit points $\eq, x^{\ast\ast}\in\eqset$ such that $x(t_{n})\to \eq$ and $x(t_{n}')\to x^{\ast\ast}$ for two time sequences $t_{n}$, $t_{n}' \nearrow\infty$.
It then follows that $\fench(\eq,y(t_{n}))\to 0$ and $\fench(x^{\ast\ast}, y(t_{n}'))\to0$;
however, with both $\fench(\eq,y(t))$ and $\fench(x^{\ast\ast},y(t))$ decreasing, we obtain $\lim \fench(\eq,y(t)) = \lim \fench(x^{\ast\ast},y(t)) = 0$.
This only holds if $\eq = x^{\ast\ast}$, so we conclude that $x(t)$ has a unique $\omega$-limit.
\end{IEEEproof}

\section{Convergence of Long-Term Averages}
\label{app:averages}

\begin{IEEEproof}[Proof of Theorem \ref{thm:averages}]
Let $\eq$ be a solution of \eqref{eq:CM} and consider the $\temp$-deflated Fenchel coupling
\begin{equation}
\label{eq:Hp}
H
	\equiv \frac{1}{\temp} \fench(\eq,\temp Y)
	= \frac{1}{\temp} \cdot\left[ h(\eq) + h^{\ast}(\temp Y) - \braket{\temp Y}{\eq} \right].
\end{equation}
Then, the Itô formula of Lemma \ref{lem:Fenchel} readily gives:
\begin{flalign}
\label{eq:dH1}
dH
	&= -\frac{\dot\temp}{\temp} H \dd t
	+ \frac{\dot\temp}{\temp} \braket{Y}{X - \eq} \dd t
	+ \braket{dY}{X - \eq}
	+ \frac{\temp}{2} \insum_{\pathalt} \frac{\pd^{2} h^{\ast}}{\pd y_{\pathalt}^{2}} \sigma_{\pathalt}^{2} \dd t.
\end{flalign}
Hence, letting $\payv = -\modcost$ and combining the definition of $H$ with the dynamics \eqref{eq:SBR}, we get:
\begin{flalign}
\label{eq:dH2}
dH
	= - \frac{\dot\temp}{\temp^{2}} \left[ h(\eq) - h(X) \right] dt
	+ \braket{\payv}{X - \eq} dt
	+ \insum_{\pathalt} (X_{\pathalt} - \eq_{\pathalt})\,\sigma_{\pathalt} \dd W_{\pathalt}
	+ \frac{\temp}{2} \insum_{\pathalt} \frac{\pd^{2}h^{\ast}}{\pd y_{\pathalt}^{2}} \sigma_{\pathalt}^{2} \dd t,
\end{flalign}
where we used the fact that $h^{\ast}(\temp Y) = \braket{\temp Y}{X} - h(X)$.
Thus, after rearranging and integrating:
\begin{subequations}
\label{eq:reg}
\begin{flalign}
\int_{0}^{t} \braket{\payv(s)}{\eq - X(s)} \dd s
	&\label{eq:cost-Hdiff}
	= H(0) - H(t)
	\\
	&\label{eq:cost-temp}
	-\int_{0}^{t} \frac{\dot\temp(s)}{\temp^{2}(s)} \left[ h(\eq) - h(X(s)) \right] ds
	\\
	&\label{eq:cost-noise}
	+ \insum_{\pathalt} \int_{0}^{t} (X_{\pathalt}(s) - \eq) \,\sigma_{\pathalt}(s) \dd W_{\pathalt}(s)
	\\
	&\label{eq:cost-Ito}
	+ \frac{1}{2} \insum_{\pathalt} \int_{0}^{t} \temp(s) \frac{\pd^{2} h^{\ast}}{\pd y_{\pathalt}^{2}} \sigma_{\pathalt}^{2}(s) \dd s.
\end{flalign}
\end{subequations}

We now proceed to bound each term of \eqref{eq:reg}:
\begin{enumerate}
[\itshape a\upshape)]

\item
Since $H\geq0$, the term \eqref{eq:cost-Hdiff} is bounded from above as follows:
\begin{equation}
\label{eq:cost-Hdiff1}
H(0)
	\leq \frac{h(\eq) + h^{\ast}(Y(0))}{\temp(0)}
	= \frac{h(\eq) - \min_{x\in\strat} h(x)}{\temp(0)} + \bigoh(1)
	= \frac{\sum_{\source} \log\nPaths_{\source}}{\temp(0)} + \bigoh(1).
\end{equation}
\item
For \eqref{eq:cost-temp},
we have $h(\eq) - h(X(s)) \leq \sum_{\source} \log\nPaths_{\source}$ by definition;
hence, with $\dot\temp \leq 0$, we get:
\begin{equation}
\eqref{eq:cost-temp}
	\leq - \insum_{\source} \log\nPaths_{\source} \int_{0}^{t} \frac{\dot\temp(s)}{\temp^{2}(s)} \dd s
	= \insum_{\source} \log\nPaths_{\source} \left( \frac{1}{\temp(t)} - \frac{1}{\temp(0)} \right).
\end{equation}

\item
By Lemma \ref{lem:Wbound} and the law of the iterated logarithm, \eqref{eq:cost-noise} is bounded from above by $2\noisebound^{2} \sqrt{t \log \log t}$.

\item
Finally, using the derivative calculations \eqref{eq:G-diff} for $\pd_{\pathalt}^{2} h^{\ast}(y) = \pd_{\pathalt} \gibbs_{\pathalt}(y)$, we immediately deduce that \eqref{eq:cost-Ito} is bounded from above by $\frac{1}{2} \noisebound^{2} \int_{0}^{t} \temp(s) \dd s$.
\end{enumerate}
The bound \eqref{eq:cost-bound} then follows by recalling that $\payv = -\modcost = -\pd\cost$, integrating the convexity relation $\obj(X(t)) \leq \obj^{\ast} + \braket{\payv(X(t))}{\eq - X(t)}$, and combining the above terms.
\end{IEEEproof}

\section{Convergence to Non-Mixing Optimum States}
\label{app:strict}

\renewcommand{\dual}{\R^{\paths}}
\renewcommand{\dnorm}{\norm}

To prove Theorem \ref{thm:strict} we will need a series of auxiliary results.
We begin by showing that neighborhoods of a strict, non-mixing optimum state are recurrent under \eqref{eq:SBR}:

\begin{proposition}
\label{prop:recurrence}
Let $\eq$ be a strict, non-mixing optimum state of \eqref{eq:CM}.
If $\temp$ is chosen sufficiently small, there exists a \textup(random\textup) sequence of times $t_{n}\nearrow\infty$ such that $\norm{X(t_{n}) - \eq} < \delta$ \as.
\end{proposition}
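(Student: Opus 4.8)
The plan is to use the Kullback--Leibler divergence from $\eq$ as a stochastic Lyapunov function and show it cannot stay bounded away from $0$. Set $\Phi(t) \equiv \dkl(\eq, X(t))$; since $X(t)$ is a Gibbs image of $Y(t)$ it always lies in $\relint\feas$, so $\Phi$ is a finite, nonnegative Itô process even though $\eq$ is a vertex of $\feas$. Applying Itô's lemma to $\Phi$ exactly as in Lemma \ref{lem:Fenchel} and \eqref{eq:dH2} (with $\dot\temp = 0$, since $\temp$ is constant here), one obtains a decomposition of the form
\begin{equation*}
d\Phi = -\temp\,\braket{\modcost(X)}{X - \eq}\dd t \;+\; \tfrac{1}{2}\temp^{2}\,R(X)\dd t \;-\; \temp\insum_{\path}(X_{\path} - \eq_{\path})\dd Z_{\path},
\end{equation*}
where $R(X) = \insum_{\path} X_{\path}\bigl(\noisevar_{\path\path} - \insum_{\pathalt\sim\path} X_{\pathalt}\noisevar_{\path\pathalt}\bigr)$ is the Itô correction coming from the curvature of the Gibbs map (cf.\ \eqref{eq:G-diff}). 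By Assumption \ref{asm:noise} and compactness of $\feas$ this correction is bounded, $\abs{R(X)} \leq C_{1}\noisebound^{2}$, while the martingale part has quadratic variation $\temp^{2}(X - \eq)^{\top}\covmat(X - \eq)\dd t \leq C_{2}\temp^{2}\noisebound^{2}\dd t$, with $C_{1},C_{2}$ depending only on the number of sources. The essential feature is that the deterministic drift is of order $\temp$, whereas the Itô correction is only of order $\temp^{2}$.

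Next I would extract a uniform, strictly negative drift outside a fixed neighbourhood of $\eq$. Since $\modcost(x)$ is a subgradient of the convex objective $\obj$ at $x$ (immediate from the edgewise subgradient inequality for each $\cost_{\edge}$ together with $\load = \pmat x$), monotonicity gives $\braket{\modcost(x)}{x - \eq} \geq \braket{\modcost(\eq)}{x - \eq}$, and because $\eq$ is a \emph{strict}, non-mixing optimum the defining inequalities $\modcost_{\path^{\ast}}(\eq) < \modcost_{\path}(\eq)$ make the right-hand side strictly positive for $x \neq \eq$ and, quantitatively, proportional to the mass that $x$ places on non-optimal paths. A short compactness argument (using that $\dkl(\eq, x) \to 0$ as $x \to \eq$) then yields, for every $\kappa > 0$, a constant $\beta(\kappa) > 0$ such that $\braket{\modcost(x)}{x - \eq} \geq \beta(\kappa)$ whenever $\dkl(\eq, x) \geq \kappa$. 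Consequently, on the set $\{\dkl(\eq, X(t)) \geq \kappa\}$ the total drift of $\Phi$ is at most $-\temp\bigl(\beta(\kappa) - \tfrac{1}{2}C_{1}\temp\noisebound^{2}\bigr) \leq -\tfrac{1}{2}\temp\,\beta(\kappa) < 0$, provided $\temp < \beta(\kappa)/(C_{1}\noisebound^{2})$. This is precisely where ``$\temp$ sufficiently small'' enters: the order-$\temp^{2}$ Itô correction must be dominated by the order-$\temp$ drift, and since $\beta(\kappa) \to 0$ as $\kappa \to 0$, the admissible range of $\temp$ contracts the closer to $\eq$ we insist on getting.

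Finally I would close by contradiction. Given $\delta$, choose $\kappa = \kappa(\delta) > 0$ so small that $\dkl(\eq, x) < \kappa$ forces $\norm{x - \eq} < \delta$ (Pinsker's inequality, valid since $\eq$ and $X$ are scaled probability vectors), and fix $\temp$ as above. If, on an event of positive probability, one had $\dkl(\eq, X(t)) \geq \kappa$ for all $t \geq n$ for some integer $n$ --- and it suffices to exclude this for each fixed $n$ --- then integrating the decomposition gives $\Phi(t) \leq \Phi(n) - \tfrac{1}{2}\temp\,\beta(\kappa)(t - n) + \temp\bigl(M(t) - M(n)\bigr)$, where $M(t) = -\insum_{\path}\int_{0}^{t}(X_{\path} - \eq_{\path})\dd Z_{\path}$ is a martingale of bounded volatility. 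By Lemma \ref{lem:Wbound} (time change together with the law of the iterated logarithm) $M(t) = \bigoh(\sqrt{t\log\log t}) = o(t)$ \as, so the right-hand side tends to $-\infty$, contradicting $\Phi \geq 0$. Hence, almost surely $\dkl(\eq, X(t)) < \kappa$ along an unbounded set of times, which produces the required random sequence $t_{n} \nearrow \infty$ with $\norm{X(t_{n}) - \eq} < \delta$. I expect the main obstacle to be the careful bookkeeping in the first two steps: tracking the exact $\temp$-dependence through Itô's lemma so that the order-$\temp$ drift provably dominates the order-$\temp^{2}$ correction, and securing the uniform gap $\beta(\kappa)$ from strictness while respecting the fact that $\eq$ lies on the boundary of $\feas$; the martingale estimate itself is routine given Lemma \ref{lem:Wbound}.
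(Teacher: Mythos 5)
Your proposal is correct and follows essentially the same route as the paper: by the identity \eqref{eq:Fench-KL}, your Lyapunov function $\Phi(t)=\dkl(\eq,X(t))$ is exactly $\temp H(t)$ for the deflated Fenchel coupling $H$ used in Appendix \ref{app:strict}, and your contradiction argument (uniform negative drift from strictness and monotonicity, order-$\temp^{2}$ Itô correction dominated by the order-$\temp$ drift, martingale controlled via Lemma \ref{lem:Wbound}, and $\Phi\geq0$) mirrors the paper's proof step for step. The only differences are cosmetic: you measure closeness in KL/Pinsker terms rather than directly in norm, and you spell out the compactness argument for the drift gap that the paper leaves implicit.
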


\begin{IEEEproof}
Suppose there exists some $t_{0}$ such that $\norm{X(t) - \eq} \geq \delta$ for all $t\geq t_{0}$.
Then, \eqref{eq:dH2} yields
\begin{flalign}
H(t)
	&= H(t_{0})
	+ \int_{t_{0}}^{t} \braket{\payv(X(s))}{X(s) - \eq} \dd s
	+ \frac{\temp}{2} \int_{t_{0}}^{t} \frac{\pd^{2} h^{\ast}}{\pd y_{\pathalt}^{2}} \sigma_{\pathalt}^{2} \dd s
	+ \xi(t)
	\notag\\
	&\leq H(t_{0})
	- \mu \delta(t - t_{0})
	+ \frac{\temp\noisevar}{2K} (t - t_{0})
	+ \xi(t)
	\leq H(t_{0})
	- \parens*{\mu\delta - \frac{\temp\noisevar}{2K} - \frac{\xi(t)}{t - t_{0}}} (t - t_{0}),
\end{flalign}
where, $\payv = -\modcost$ as before, and we set $\xi(t) = \sum_{i=1}^{d} \int_{t_{0}}^{t} (X_{i}(s) - \eq_{i}) \dd Z_{i}(s)$, and $\mu = \min_{\pathalt\sim\supp(\eq)} \{\payv(\eq) - \payv_{\pathalt}(\eq)\} > 0$ (recall that $\eq$ is a strict equilibrium).
By Lemma \ref{lem:Wbound}, it follows that $\xi(t)/t \to 0$ \as, so the above estimate yields $\lim_{t\to\infty} H(t) = -\infty$ if $\temp\noisevar < 2\mu\delta K$, a contradiction (recall that $H(t)\geq0$ for all $t\geq0$).
This shows that $t_{0}=\infty$ \as and our claim follows.
\end{IEEEproof}

\smallskip

We now show that if $X(t)$ begins close enough to a strict equilibrium $\eq$, then it remains nearby and eventually converges with arbitrarily high probability:

\begin{proposition}
\label{prop:asymstab}
Let $\eq$ be a strict, non-mixing optimum state of \eqref{eq:CM}.
Then, for every $\eps>0$ and for every neighborhood $U_{0}$ of $\eq$ in $\strat$, there exists a neighborhood $U\subseteq U_{0}$ of $\eq$ such that, if $X(0)\in U$, then:
\begin{equation}
\label{eq:stable-asym}
\txs
\prob\left(
	\text{$X(t)\in U_{0}$ for all $t\geq0$ and $\lim_{t\to\infty} X(t) = \eq$}
	\right)
	\geq 1-\eps.
\end{equation}
\end{proposition}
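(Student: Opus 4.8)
The plan is to use the $\temp$-deflated Fenchel coupling $H \equiv \temp^{-1}\fench(\eq,\temp Y)$ of \eqref{eq:Hp} as a local stochastic Lyapunov function for \eqref{eq:SBR}. By \eqref{eq:Fench-KL}, $H = \temp^{-1}\dkl(\eq,X)$ is in fact a function of the state $X$ alone; it is nonnegative, vanishes exactly at $\eq$, and, since $\eq$ is a vertex of $\feas$, $H(t)\to0$ holds if and only if $X(t)\to\eq$. The whole argument then reduces to showing that $H$ behaves like a supermartingale in a small neighbourhood of $\eq$, and that the exit of $X$ from such a neighbourhood can be made improbable by starting sufficiently close.

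The first and main step is a drift estimate for $H$. Specialising the Itô formula \eqref{eq:dH2} to constant $\temp$ gives $dH = \braket{\payv(X)}{X-\eq}\,dt + dM + R\,dt$, where $\payv=-\modcost$, the martingale part is $dM=\sum_{\path}(X_{\path}-\eq_{\path})\,\noisedev_{\path}\,dW_{\path}$, and $R$ is the Itô correction assembled from the second derivatives of $h^{\ast}$. Because $\eq$ is strict and non-mixing, each source $\source$ uses a single path $\path_{0}(\source)$, so continuity of the (left) price derivatives $\modcost_{\path}$ yields a neighbourhood $U_{0}'\subseteq U_{0}$ of $\eq$ and a constant $\mu>0$ with $\modcost_{\path}(X)-\modcost_{\path_{0}(\source)}(X)\ge\mu$ for all $\path\in\paths_{\source}\setminus\{\path_{0}(\source)\}$ and $X\in U_{0}'$; summing over sources (rates normalised to $1$) gives $\braket{\payv(X)}{X-\eq}\le-\mu\sum_{\source}(1-X_{\path_{0}(\source)})$. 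Matching this against $\dkl(\eq,X)=\sum_{\source}(-\log X_{\path_{0}(\source)})=\temp H$ and using $-\log s\le2(1-s)$ on $[\tfrac12,1]$ (after shrinking $U_{0}'$ so that each $X_{\path_{0}(\source)}\ge\tfrac12$) yields $\braket{\payv(X)}{X-\eq}\le-\tfrac{\mu\temp}{2}H$ on $U_{0}'$. For $R$, the formulas \eqref{eq:G-diff} show that the pertinent second derivatives of $h^{\ast}$ vanish at the vertex $\eq$ and are $O(\dkl(\eq,X))=O(\temp H)$ nearby, so $R=O(\temp^{2}\noisebound^{2}H)$ on $U_{0}'$. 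Hence, once $\temp$ is taken small enough (the standing hypothesis of Theorem \ref{thm:strict}), the drift of $H$ on $U_{0}'$ is at most $-\lambda H$ for some $\lambda>0$.

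With the drift estimate at hand, the probabilistic argument is standard. Let $\tau$ be the first exit time of $X$ from $U_{0}'$; on $[0,\tau)$ we have $dH\le-\lambda H\,dt+dM\le dM$, and since $\noisedev$ is bounded (Assumption \ref{asm:noise}) and $X$ stays in a compact subset of $\feas$, $M(t\wedge\tau)$ is a true square-integrable martingale, so $H(t\wedge\tau)$ is a nonnegative supermartingale. Pick $c>0$ so small that $\{X:H(X)<c\}\subseteq\intr(U_{0}')$ (possible since $H$ is continuous and $H(\eq)=0$) and set $U=\{X:H(X)<\eps c\}\subseteq U_{0}'$. If $X(0)\in U$, then $H(X(0))<\eps c$, so Doob's maximal inequality gives $\probof{\sup_{t\ge0}H(t\wedge\tau)\ge c}\le H(X(0))/c<\eps$. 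On the complementary event $A$, the process $H(t\wedge\tau)$ never reaches level $c$, hence $X(t\wedge\tau)$ never reaches $\partial U_{0}'$ (where $H\ge c$); this forces $\tau=\infty$, so $X(t)\in U_{0}'\subseteq U_{0}$ for all $t\ge0$ on $A$. Finally, $H(t\wedge\tau)$ converges \as (nonnegative supermartingale) and the bound $\exof{H(t\wedge\tau)}+\lambda\int_{0}^{t}\exof{H(s\wedge\tau)}\,ds\le H(X(0))$ forces $\int_{0}^{\infty}H(s\wedge\tau)\,ds<\infty$ \as, so the limit is $0$; on $A$ this gives $H(t)\to0$, i.e.\ $X(t)\to\eq$. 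Since $\probof{A}\ge1-\eps$, this is exactly \eqref{eq:stable-asym}.

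The crux — and the one genuinely delicate point — is the drift estimate: the equilibrium gap contributes a negative drift of order $\temp$, whereas the Itô correction adds a positive term of order $\temp^{2}\noisebound^{2}$, so the mechanism only works when the former dominates; this is precisely why "sufficiently small $\temp$" cannot be removed. The remaining steps are routine but require some care, namely: translating between $\dkl$-sublevel sets and the Euclidean neighbourhood $U_{0}$, verifying that $M(t\wedge\tau)$ is a true martingale, and upgrading "bounded nonnegative supermartingale" to "$H\to0$" via the drift-integral bound rather than bare supermartingale convergence.
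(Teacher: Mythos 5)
Your proof is correct, but it follows a genuinely different route from the paper's. The paper works in the log-ratio coordinates $Z_{\path} = \temp\,(Y_{\path} - Y_{\path_{\source}^{\ast}})$, which are linear in $Y$ and therefore carry no Itô correction; it then time-changes the martingale part into a Wiener process and invokes the explicit hitting-time law for Brownian motion with drift to bound the exit probability by $\sum_{\source} e^{-\lambda_{\source} M}$, after which the drift sends each $Z_{\path}\to-\infty$ on the non-exit event. You instead promote $H = \temp^{-1}\dkl(\eq,X)$ to a local stochastic Lyapunov function, prove the drift bound $-\lambda H$ near the vertex (your estimates $\braket{\payv}{X-\eq} \le -\tfrac{\mu\temp}{2}H$ via $-\log s \le 2(1-s)$ and $R = \bigoh(\temp^{2}\noisebound^{2}H)$ via \eqref{eq:G-diff} both check out), and close with Doob's maximal inequality for nonnegative supermartingales plus the drift-integral bound to upgrade supermartingale convergence to $H\to0$; your argument is more elementary and self-contained, avoiding Girsanov/hitting-time machinery entirely. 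The one substantive difference worth flagging: because the Itô correction of $H$ competes with the equilibrium gap, your proof establishes the proposition only for $\temp < \mu/(2C\noisebound^{2})$, whereas the paper's coordinate-wise argument proves the statement as written for \emph{any} constant $\temp$ (only the recurrence step, Proposition \ref{prop:recurrence}, needs small $\temp$ there). Since Theorem \ref{thm:strict} assumes small $\temp$ anyway, this costs nothing for the final result, but it does make your version of the proposition strictly weaker than the stated one. Two cosmetic points to tighten: write the drift-integral bound as $\exof{H(t\wedge\tau)} + \lambda\,\exof{\int_{0}^{t\wedge\tau} H(s)\dd s} \le H(0)$ rather than with the expectation inside the time integral, and note explicitly that $M(t\wedge\tau)$ is a true martingale because its integrand is bounded by $\noisebound$ times a quantity bounded by $1$.
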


\begin{IEEEproof}
Write $\eq = (\path_{1}^{\ast},\dotsc,\path_{\nSources}^{\ast})$ and let $\act_{\source}^{\ast} \equiv \act_{\source}\setminus\{\path_{\source}^{\ast}\}$ denote the set of non-equilibrium paths of the $\source$-th source.
Moreover, for all $\path\in\paths_{\source}^{\ast}$, let
\begin{equation}
\label{eq:Zdef}
Z_{\path}
	= \temp \left( Y_{\path} - Y_{\path_{\source}^{\ast}} \right),
\end{equation}
so that, by definition, $X(t)\to\eq$ if and only if $Z_{\path}(t) \to -\infty$ for all $\path\in\act_{\source}^{\ast}$.
With this in mind, let $M>0$ be sufficiently large so that $X(t) \in U$ if $Z_{\path}(t) \leq -M$ for all $\path\in\act_{\source}^{\ast}$, and assume that
\begin{equation}
M
	> \mu \temp \noisebound^{2} \log(N/\eps),
\end{equation}
with $\mu$ defined as in the proof of Proposition \ref{prop:recurrence}.
We will show that if $Z_{\path}(0) \leq - 2M$, then $X(t) \in U$ for all $t\geq0$ and $Z_{\path}(t)\to-\infty$ with probability at least $1-\eps$.

To that end, let $\tau_{U} = \inf\{t>0: X(t) \notin U\}$ denote the first exit time from $U$;
then, for $t\leq\tau_{U}$, \eqref{eq:SBR} gives
\begin{flalign}
\label{eq:Z1}
Z_{\path}(t)
	= Z_{\path}(0)
	+ \temp \int_{0}^{t} \left[ \payv_{\path}(X(s)) - \payv_{\path_{\source}^{\ast}}(X(s)) \right] \dd s
	+ \temp \xi_{\source}(t)
	\leq -2M - \temp \left[ \mu t - \xi_{\source}(t) \right],
\end{flalign}
where we have set
\begin{equation}
\label{eq:diffnoise-strict}
\xi_{\source}(t)
	= \int_{0}^{t} \sigma_{\path}(X) \dd W_{\path} - \int_{0}^{t} \sigma_{\path_{\source}^{\ast}}(X) \dd W_{\path_{\source}^{\ast}}.
\end{equation}
We will first show that $\prob(\tau_{U} < \infty) \leq \eps$.
Indeed, as in the proof of Lemma \ref{lem:Wbound},
there exists a rescaled Wiener process $\wilde W_{\source}(t)$ such that $\xi_{\source}(t) = \wilde W_{\source}(\rho_{\source}(t))$ where $\rho_{\source} = [\xi_{\source},\xi_{\source}]$ is the quadratic variation of $\xi_{\source}$.
We thus conclude that $Z_{\path}(t) \leq -M$ whenever $\mu t - \wilde W_{\source}(\rho_{\source}(t)) \geq - M/\temp$.
Accordingly, with $\rho_{\source}(t) \leq 2\noisebound^2t$,
it suffices to show that the hitting time
\begin{equation}
\label{eq:hitline}
\tau_{0}
	= \inf\left\{
	t>0: \wilde W_{\source}(t) = \frac{\mu t}{2\noisebound^2} + \frac{M}{\temp}
	\;\text{for some $k\in\play$}
	\right\}
\end{equation}
is finite with probability not exceeding $\eps$.

Now, if a trajectory of $\wilde W_{\source}$ has $\wilde W_{\source}(t) \leq \mu t/(2\noisebound^2) + M/\temp$ for all $t\geq0$, we also get
\begin{equation}
\wilde W_{\source}(\rho_{\source}(t))
	\leq \frac{\mu  \rho_{\source}(t)}{2 \noisebound^2} + \frac{M}{\temp}
	\leq \mu t + \frac{M}{\temp},
\end{equation}
so $\tau_{U}$ is infinite whenever $\tau_{0}$ is infinite.
Thus, if we write $E_{\source}$ for the event that $\wilde W_{\source}(t) \geq \mu t/(2\noisebound^2) + M/\temp$ for some finite $t\geq0$, the hitting-time estimate \cite[p.~197]{KS98} for a Brownian motion with drift yields $\prob(E_{\source}) = e^{-\lambda_{\source} M}$ with $\lambda_{\source} = \mu  / (\temp \noisebound^2)$.
Therefore, by construction of $M$, we obtain:
\begin{equation}
\prob(\tau_{0} < +\infty)
	= \prob\left( \union\nolimits_{\source} E_{\source} \right)
	\leq \insum_{\source}  \prob(E_{\source})
	= \insum_{\source}  e^{-\lambda_{\source}M}
	\leq \eps.
\end{equation}
By Lemma \ref{lem:Wbound}, Eq.~\eqref{eq:Z1} then yields
\(
Z_{\path}(t)
	\leq -2M - \temp \big[ \mu t - \xi_{\source}(t) \big]
	\sim - \temp \mu t
	\to-\infty
	\quad
	\text{(a.s.)}.
\)
We conclude that $Z_{\path}(t)\to-\infty$ for all $\path\in\act_{\source}^{\ast}$, so $\lim_{t\to\infty} X(t) = \eq$, as claimed.
\end{IEEEproof}

With all this at hand, we are finally in a position to prove Theorem \ref{thm:strict}:

\begin{IEEEproof}[Proof of Theorem \ref{thm:strict}]
By Proposition \ref{prop:recurrence}, if $\temp$ is chosen sufficiently small, $X(t)$ will visit any neighborhood $U$ of $\eq$ infinitely often.
Thus, if $\eps>0$ is fixed and $U$ is chosen as in the statement of Proposition \ref{prop:asymstab}, $X(t)$ will stay in $U$ and converge to $\eq$ with probality exceeding $1-\eps$.
Since $X(t)$ is recurrent, the probability that $X(t)$ exits and then re-enters $U$ for a strictly positive amount of time at least $n$ times does not exceed $\eps^{n}$.
Our claim then follows by taking $n\to\infty$.
\end{IEEEproof}

\section{Concentration near Interior Optimum States}
\label{app:interior}

\begin{IEEEproof}[Proof of Theorem \ref{thm:interior}]
Assume that $\eq$ is an isolated interior solution of \eqref{eq:CM} and define $H(t) = \temp^{-1} \fench(\eq,\temp Y(t))$ as in the proof of Theorem \ref{thm:strict}.
Then, taking a sufficiently small $B>0$ such that $\obj(x) - \obj(\eq) \geq \frac{1}{2} B \norm{x - \eq}^{2}$ (recall that $\eq$ is an isolated, interior minimum point), \eqref{eq:reg} yields
\begin{equation}
H(t)
	\leq H(0)
	- \frac{1}{2} B \int_{0}^{t} \norm{X(t) - \eq}^{2} \dd s
	+ \frac{\temp}{2} \noisebound^{2} t
	+ \xi(t),
\end{equation}
where the martingale term $\xi(t)$ is defined as in the proof of Proposition \ref{prop:recurrence}.
Thus, dividing by $t$ and taking expectations yields the bound
\begin{equation}
\label{eq:average}
\exof*{\frac{1}{t} \int_{0}^{t} \norm{X(s) - \eq} \dd s}
	\leq \frac{\temp \noisebound^{2}}{B} + \frac{H(0)}{t}.
\end{equation}
Our goal will be to show that $X(t)$ admits an invariant measure which can be estimated from the above bound to yield \eqref{eq:invariant}.

To that end, consider the process $\Psi(X(t))$ where $\Psi_{\source\pathalt}(x) = \log (x_{\source\pathalt}/x_{\source\path}) = Y_{\source\pathalt} - Y_{\source\path}$ for some fixed $\path\in\paths_{\source}$, $\pathalt\sim\path$, $\source\in\sources$.
Itô's formula then yields:
\begin{equation}
\label{eq:dY}
d\Psi_{\pathalt}
	=dY_{\pathalt} - dY_{\path}
	=\gen\Psi_{\pathalt}(X)\dd t
	+ \insum_{\edge} \pmat_{\edge\pathalt}' \noisedev_{\edge} \dd W_{\edge},
\end{equation}
where $\gen\Psi$ denotes the deterministic part of \eqref{eq:Z1}
and
$\pmat_{\edge\pathalt}' = \pmat_{\edge\pathalt} - \pmat_{\edge\path}$ (the source index $\source$ has been dropped for convenience).
Following \cite{Imh05}, we claim that the infinitesimal generator of $\Psi$ is (uniformly) elliptic \cite{Oks07,Kuo06}.
For this,
let $A_{\pathalt \edge} = \pmat_{\edge\pathalt}' \noisedev_{\edge}$, $\pathalt\in\coprod_{\source}\act_{\source}^{*}$ denote the coefficient matrix of the martingale term of \eqref{eq:dY};
then, it suffices to show that the matrix $AA^{T}$ is positive-definite on $\strat$.
Indeed, for all $z\in \tcone_{\strat}(\eq)$, we have:
\begin{equation}
\label{eq:eigenvalue}
\langle Az, Az\rangle
	=\insum_{\pathalt,\gamma}\insum_{\edge} \pmat_{\edge\pathalt}' \pmat_{\edge\gamma}' \noisedev_{\edge}^{2} z_{\pathalt} z_{\gamma}
	=\insum_{\edge} \noisedev_{\edge}^{2} (\pmat' z)_{\edge}^{2}
	> 0,
\end{equation}
because $\pmat'$ is necessarily invertible on $\feas$ (otherwise $\eq$ would not be isolated).
This proves our ellipticity assertion,
so \cite[Lemma 3.4]{Bha78} shows that $\Psi$ is recurrent.

Since $\Psi$ is recurrent, it follows that $X(t)$ is also recurrent;
consequently, the transition probabilities of $X(t)$ converge in total variation to an invariant probability measure $\nu$ on $\strat$ \cite{Kuo06,Kha12}.
Thus, if $B_{\delta}$ is a $\delta$-ball centered at $\eq$, ergodicity gives $\nu(B_{\delta}) = \lim_{t\to\infty}\ex_{x}\left[\frac{1}{t}\int_{0}^{t}\chi_{B_{\delta}}(X(s))\dd s\right]$, where $\chi_{B_{\delta}}$ is the indicator function of $B_{\delta}$.

Now, with $\norm{x - \eq} \geq \delta$ outside $B_{\delta}$, we readily obtain
\begin{equation}
\exof*{ \frac{1}{t} \int_{0}^{t} \chi_{B_{\delta}}(X(s))\dd s}
	\geq \exof*{\frac{1}{t}\int_{0}^{t} \left(1 -\norm{X(s) - \eq}^{2} / \delta^{2}\right) \dd s}.
\end{equation}
Thus, letting $t\to\infty$ in \eqref{eq:average}, we get $\nu(B_{\delta}) \geq 1 - \temp\noisebound^{2} / (B\delta^{2})$.
Taking $\temp$ sufficiently small, this bound can be assumed greater than or equal to $1-\eps$;
our assertion then follows from the convergence in total variation of the transition probabilities of $X(t)$ to the invariant measure $\nu$.
\end{IEEEproof}

\bibliographystyle{IEEEtran}
\bibliography{IEEEabrv,Bibliography_alm,Bibliography}

\begin{thebibliography}{10}
\providecommand{\url}[1]{#1}
\csname url@samestyle\endcsname
\providecommand{\newblock}{\relax}
\providecommand{\bibinfo}[2]{#2}
\providecommand{\BIBentrySTDinterwordspacing}{\spaceskip=0pt\relax}
\providecommand{\BIBentryALTinterwordstretchfactor}{4}
\providecommand{\BIBentryALTinterwordspacing}{\spaceskip=\fontdimen2\font plus
\BIBentryALTinterwordstretchfactor\fontdimen3\font minus
  \fontdimen4\font\relax}
\providecommand{\BIBforeignlanguage}[2]{{%
\expandafter\ifx\csname l@#1\endcsname\relax
\typeout{** WARNING: IEEEtran.bst: No hyphenation pattern has been}%
\typeout{** loaded for the language `#1'. Using the pattern for}%
\typeout{** the default language instead.}%
\else
\language=\csname l@#1\endcsname
\fi
#2}}
\providecommand{\BIBdecl}{\relax}
\BIBdecl

\bibitem{MM11}
P.~Mertikopoulos and A.~L. Moustakas, ``Selfish routing revisited: Degeneracy,
  evolution and stochastic fluctuations,'' in \emph{ValueTools '11: Proceedings
  of the 5th International Conference on Performance Evaluation Methodologies
  and Tools}, 2011.

\bibitem{MM12}
------, ``A learning approach to efficient routing and the effect of stochastic
  fluctuations,'' http://arxiv.org/abs/0912.4012.

\bibitem{Tkach2010_ScalingCrunch}
R.~W. Tkach, ``Scaling optical communications for the next decade and beyond,''
  \emph{Bell Labs Technical Journal}, vol.~14, no.~4, pp. 3--9, 2010.

\bibitem{PVD+08}
M.~Pickavet, W.~Vereecken, S.~Demeyer, P.~Audenaert, B.~Vermuelen, C.~Develder,
  D.~Colle, B.~Dhoedt, and P.~Demeester, ``Worldwide energy needs for {ICT}:
  The rise of power-aware networking,'' in \emph{ANTS '08: Proceedings of the
  2008 IEEE International Conference on Advanced Networks and Telecommunication
  Systems}, 2008.

\bibitem{Plepsys2002_TheGreySideOfICT}
A.~Plepys, ``The grey side of {ICT},'' \emph{Journal of Environmental Impact
  Assessment Review}, vol.~22, no.~5, pp. 509--523, 2002.

\bibitem{Vukovic2005_Network}
A.~Vukovic, ``Network power density challenges,'' \emph{ASHRAE journal},
  vol.~47, no.~4, p.~55, 2005.

\bibitem{Fettweis2008_ICT}
G.~Fettweis and E.~Zimmermann, ``{ICT} energy consumption-trends and
  challenges,'' in \emph{Proceedings of the 11th International Symposium on
  Wireless Personal Multimedia Communications}, vol.~2, no.~4, 2008, p.~6.

\bibitem{San07}
W.~H. Sandholm, ``Pigouvian pricing and stochastic evolutionary
  implementation,'' \emph{Journal of Economic Theory}, vol. 132, pp. 367--382,
  2007.

\bibitem{BK03}
V.~S. Borkar and P.~R. Kumar, ``Dynamic {Cesaro-Wardrop} equilibration in
  networks,'' \emph{{IEEE} Trans. Autom. Control}, vol.~48, no.~3, pp.
  382--396, March 2003.

\bibitem{BEL06}
A.~Blum, E.~Even-Dar, and K.~Ligett, ``Routing without regret: on convergence
  to {Nash} equilibria of regret-minimizing in routing games,'' in \emph{PODC
  '06: Proceedings of the 25th annual ACM SIGACT-SIGOPS symposium on Principles
  of Distributed Computing}, 2006.

\bibitem{KDB14}
W.~Krichene, B.~Drigh{\`e}s, and A.~Bayen, ``On the convergence of no-regret
  learning in selfish routing,'' in \emph{ICML '14: Proceedings of the 31st
  International Conference on Machine Learning}, 2014.

\bibitem{FV04}
S.~Fischer and B.~V{\"o}cking, ``On the evolution of selfish routing,'' in
  \emph{Proceedings of the 12th European Symposium on Algorithms}, 2004.

\bibitem{MM10}
P.~Mertikopoulos and A.~L. Moustakas, ``The emergence of rational behavior in
  the presence of stochastic perturbations,'' \emph{The Annals of Applied
  Probability}, vol.~20, no.~4, pp. 1359--1388, 2010.

\bibitem{MBML12}
P.~Mertikopoulos, E.~V. Belmega, A.~L. Moustakas, and S.~Lasaulce,
  ``Distributed learning policies for power allocation in multiple access
  channels,'' \emph{{IEEE} J. Select. Areas Commun.}, vol.~30, no.~1, pp.
  96--106, January 2012.

\bibitem{DMMP15}
S.~D'Oro, P.~Mertikopoulos, A.~L. Moustakas, and S.~Palazzo,
  ``Interference-based pricing for opportunistic multi-carrier cognitive radio
  systems,'' \emph{{IEEE} Trans. Wireless Commun.}, vol.~14, no.~12, pp.
  6536--6549, December 2015.

\bibitem{Zhou2002_Study}
J.~Zhou and X.~Yuan, ``A study of dynamic routing and wavelength assignment
  with imprecise network state information,'' in \emph{Parallel Processing
  Workshops, 2002. Proceedings. International Conference on}.\hskip 1em plus
  0.5em minus 0.4em\relax IEEE, 2002, pp. 207--213.

\bibitem{Masip2003_Routing}
X.~Masip-Bruin, S.~Sanchez-Lopez, and D.~Colle, ``Routing and wavelength
  assignment under inaccurate routing information in networks with sparse and
  limited wavelength conversion,'' in \emph{Global Telecommunications
  Conference, 2003. GLOBECOM'03. IEEE}, vol.~5.\hskip 1em plus 0.5em minus
  0.4em\relax IEEE, 2003, pp. 2575--2579.

\bibitem{Yannuzzi2009_Performance}
M.~Yannuzzi, M.~Quagliotti, G.~Maier, E.~Mar{\'\i}n-Tordera, X.~Masip-Bruin,
  S.~S{\'a}nchez-L{\'o}pez, J.~Sole-Pareta, W.~Erangoli, and G.~Tamiri,
  ``Performance of translucent optical networks under dynamic traffic and
  uncertain physical-layer information,'' in \emph{Optical Network Design and
  Modeling, 2009. ONDM 2009. International Conference on}.\hskip 1em plus 0.5em
  minus 0.4em\relax IEEE, 2009, pp. 1--6.

\bibitem{Oks07}
B.~{\O ksendal}, \emph{Stochastic Differential Equations}, 6th~ed.\hskip 1em
  plus 0.5em minus 0.4em\relax Berlin: Springer-Verlag, 2007.

\bibitem{Kuo06}
H.-H. Kuo, \emph{Introduction to Stochastic Integration}.\hskip 1em plus 0.5em
  minus 0.4em\relax Berlin: Springer, 2006.

\bibitem{BG92}
D.~P. Bertsekas and R.~Gallager, \emph{Data Networks}, 2nd~ed.\hskip 1em plus
  0.5em minus 0.4em\relax Englewood Cliffs, NJ: Prentice Hall, 1992.

\bibitem{Ros73}
R.~W. Rosenthal, ``A class of games possessing pure-strategy {Nash}
  equilibria,'' \emph{International Journal of Game Theory}, vol.~2, pp.
  65--67, 1973.

\bibitem{Rou05}
T.~Roughgarden, \emph{Selfish Routing and the Price of Anarchy}.\hskip 1em plus
  0.5em minus 0.4em\relax Cambridge, MA, USA: MIT Press, 2005.

\bibitem{San10}
W.~H. Sandholm, \emph{Population Games and Evolutionary Dynamics}.\hskip 1em
  plus 0.5em minus 0.4em\relax Cambridge, MA: MIT Press, 2010.

\bibitem{NRTV07}
N.~Nisan, T.~Roughgarden, E.~Tardos, and V.~V. Vazirani, Eds.,
  \emph{Algorithmic Game Theory}.\hskip 1em plus 0.5em minus 0.4em\relax
  Cambridge University Press, 2007.

\bibitem{BMW56}
M.~Beckmann, C.~B. McGuire, and C.~Winsten, \emph{Studies in the Economics of
  Transportation}.\hskip 1em plus 0.5em minus 0.4em\relax Yale University
  Press, 1956.

\bibitem{FWB07}
X.~Fan, W.-D. Weber, and L.~A. Barroso, ``Power provisioning for a
  warehouse-sized computer,'' \emph{SIGARCH Computer Architecture News},
  vol.~35, no.~2, pp. 13--23, 2007.

\bibitem{BGT+13}
J.~Buysse, K.~Georgakilas, A.~Tzanakaki, M.~de~Leenheer, B.~Dhoedt, and
  C.~Develder, ``Energy-efficient resource provisioning algorithms for optical
  clouds,'' \emph{{IEEE} J. Opt. Commun. Netw.}, vol.~5, no.~3, pp. 226--239,
  2013.

\bibitem{Anastasopoulos2011_Evolutionary}
M.~P. Anastasopoulos, A.~Tzanakaki, and K.~Georgakilas, ``Evolutionary
  optimization for energy efficient service provisioning in it and optical
  network infrastructures,'' \emph{Optics express}, vol.~19, no.~26, pp.
  B496--B502, 2011.

\bibitem{AB99}
C.~D. Aliprantis and K.~C. Border, \emph{Infinite Dimensional Analysis: A
  Hitchhiker's Guide}, 2nd~ed.\hskip 1em plus 0.5em minus 0.4em\relax Berlin:
  Springer, 1999.

\bibitem{Pig20}
A.~C. Pigou, \emph{The Economics of Welfare}.\hskip 1em plus 0.5em minus
  0.4em\relax London, UK: Macmillan, 1920.

\bibitem{War52}
J.~G. Wardrop, ``Some theoretical aspects of road traffic research,'' in
  \emph{Proceedings of the Institute of Civil Engineers, Part II}, vol.~1,
  1952, pp. 325--78.

\bibitem{DS69}
S.~C. Dafermos and F.~T. Sparrow, ``The traffic assignment problem for a
  general network,'' \emph{Journal of Research of the National Bureau of
  Standards}, vol. 73B, no.~2, pp. 91--118, 1969.

\bibitem{Roc70}
R.~T. Rockafellar, \emph{Convex Analysis}.\hskip 1em plus 0.5em minus
  0.4em\relax Princeton, NJ: Princeton University Press, 1970.

\bibitem{TJ78}
P.~D. Taylor and L.~B. Jonker, ``Evolutionary stable strategies and game
  dynamics,'' \emph{Mathematical Biosciences}, vol.~40, no. 1-2, pp. 145--156,
  1978.

\bibitem{Blu93}
L.~E. Blume, ``The statistical mechanics of strategic interaction,''
  \emph{Games and Economic Behavior}, vol.~5, pp. 387--424, 1993.

\bibitem{Dur96}
R.~Durrett, \emph{Stochastic Calculus}.\hskip 1em plus 0.5em minus 0.4em\relax
  Boca Raton, FL: CRC Press, 1996.

\bibitem{FH92}
D.~Fudenberg and C.~Harris, ``Evolutionary dynamics with aggregate shocks,''
  \emph{Journal of Economic Theory}, vol.~57, no.~2, pp. 420--441, August 1992.

\bibitem{Cab00}
A.~Cabrales, ``Stochastic replicator dynamics,'' \emph{International Economic
  Review}, vol.~41, no.~2, pp. 451--81, May 2000.

\bibitem{Imh05}
L.~A. Imhof, ``The long-run behavior of the stochastic replicator dynamics,''
  \emph{The Annals of Applied Probability}, vol.~15, no.~1B, pp. 1019--1045,
  2005.

\bibitem{HI09}
J.~Hofbauer and L.~A. Imhof, ``Time averages, recurrence and transience in the
  stochastic replicator dynamics,'' \emph{The Annals of Applied Probability},
  vol.~19, no.~4, pp. 1347--1368, 2009.

\bibitem{MV16}
P.~Mertikopoulos and Y.~Viossat, ``Imitation dynamics with payoff shocks,''
  \emph{International Journal of Game Theory}, vol.~45, no. 1-2, pp. 291--320,
  March 2016.

\bibitem{BM14}
M.~Bravo and P.~Mertikopoulos, ``On the robustness of learning in games with
  stochastically perturbed payoff observations,''
  \url{http://arxiv.org/abs/1412.6565}, 2016.

\bibitem{San01}
W.~H. Sandholm, ``Potential games with continuous player sets,'' \emph{Journal
  of Economic Theory}, vol.~97, pp. 81--108, 2001.

\bibitem{HS09}
J.~Hofbauer and W.~H. Sandholm, ``Stable games and their dynamics,''
  \emph{Journal of Economic Theory}, vol. 144, pp. 1710--1725, 2009.

\bibitem{KL51}
S.~Kullback and R.~A. Leibler, ``On information and sufficiency,'' \emph{Annals
  of Mathematical Statistics}, vol.~22, no.~1, pp. 79--86, 1951.

\bibitem{CT91}
T.~M. Cover and J.~A. Thomas, \emph{Elements of Information Theory}.\hskip 1em
  plus 0.5em minus 0.4em\relax New York: Wiley, 1991.

\bibitem{MS16}
P.~Mertikopoulos and W.~H. Sandholm, ``Learning in games via reinforcement and
  regularization,'' \emph{Mathematics of Operations Research}, 2016, to appear.

\bibitem{DRC10}
R.~D. Doverspike, K.~K. Ramakrishnan, and C.~Chase, ``Structural overview of
  {ISP} networks,'' in \emph{Guide to Reliable Internet Services and
  Applications}, R.~C. Kalmanek, S.~Misra, and Y.~Yang, Eds.\hskip 1em plus
  0.5em minus 0.4em\relax Springer London, 2010.

\bibitem{Pul11}
J.~E. Pultz, ``Data center redesign yields an 80\%-plus reduction in energy
  usage,'' Gartner, Inc., Tech. Rep. G00213049, August 2011.

\bibitem{KS98}
I.~Karatzas and S.~E. Shreve, \emph{Brownian Motion and Stochastic
  Calculus}.\hskip 1em plus 0.5em minus 0.4em\relax Berlin: Springer-Verlag,
  1998.

\bibitem{Bha78}
R.~N. Bhattacharya, ``Criteria for recurrence and existence of invariant
  measures for multidimensional diffusions,'' \emph{The Annals of Probability},
  vol.~6, pp. 541--553, 1978.

\bibitem{Kha12}
R.~Z. Khasminskii, \emph{Stochastic Stability of Differential Equations},
  2nd~ed., ser. Stochastic Modelling and Applied Probability.\hskip 1em plus
  0.5em minus 0.4em\relax Berlin: Springer-Verlag, 2012, no.~66.

\end{thebibliography}

\end{document}